\newcommand{\NP}{NP}
\DeclareMathOperator{\ar}{ar}
\DeclareMathOperator{\dist}{dist}
\DeclareMathOperator{\diam}{diam}
\DeclareMathOperator{\minmax}{minmax}
\DeclareMathOperator{\CSP}{CSP}
\DeclareMathOperator{\PCSP}{PCSP}
\DeclareMathOperator{\id}{id}
\DeclareMathOperator{\Pol}{Pol}
\DeclareMathOperator{\range}{range}
\DeclareMathOperator{\supp}{supp}
\newcommand{\fr}{\ensuremath{f^\star}}
\newcommand{\M}{\mathcal{M}}
\newcommand{\N}{\mathcal{N}}
\newcommand{\inn}[2]{\mathbf{#1}\textbf{-}\mathbf{in}\textbf{-}\mathbf{#2}}
\newcommand{\NAE}{\mathbf{NAE}}
\newcommand{\BLP}{\ensuremath{\operatorname{BLP}}}
\newcommand{\AIP}{\ensuremath{\operatorname{AIP}}}
\newcommand{\CLAP}{\ensuremath{\operatorname{CLAP}}}
\newcommand{\BLPAIP}{\ensuremath{\operatorname{BLP+AIP}}}
\newcommand{\A}{\mathbf{A}}
\newcommand{\B}{\mathbf{B}}
\newcommand{\E}{\mathbf{E}}
\newcommand{\CC}{\mathbf{C}}
\newcommand{\G}{\mathbf{G}}
\newcommand{\K}{\mathbf{K}}
\newcommand{\X}{\mathbf{X}}
\newcommand{\Eqn}[2]{\mathbf{Eqn}_{#1,#2}}
\newcommand{\trip}{\Gamma}
\theoremstyle{plain}
\newtheorem{theorem}{Theorem}
\newtheorem{lemma}[theorem]{Lemma}
\newtheorem*{lemma*}{Lemma}
\newtheorem{proposition}[theorem]{Proposition}
\newtheorem*{proposition*}{Proposition}
\newtheorem*{corollary*}{Corollary}
\newtheorem{conjecture}[theorem]{Conjecture}
\theoremstyle{definition}
\newtheorem{definition}[theorem]{Definition}
\newtheorem{remark}[theorem]{Remark}
\newtheorem{example}[theorem]{Example}
\newtheorem{problem}[theorem]{Problem}
\begin{document}

\title{On the complexity of symmetric vs.~functional PCSPs\thanks{An extended
abstract of part of this work (with weaker results) appeared in the Proceedings of LICS
2023~\cite{nz23:lics}. This work was supported by UKRI EP/X024431/1 and a Clarendon Fund Scholarship. For the purpose of Open Access, the authors have applied a CC BY public copyright licence to any Author Accepted Manuscript version arising from this submission. All data is provided in full in the results section of this paper.}}

\author{Tamio-Vesa Nakajima\\
University of Oxford\\
\texttt{tamio-vesa.nakajima@cs.ox.ac.uk}
\and
Stanislav \v{Z}ivn\'y\\
University of Oxford\\
\texttt{standa.zivny@cs.ox.ac.uk}
}

\maketitle

\begin{abstract}

The complexity of the \emph{promise constraint satisfaction problem} $\PCSP(\A,\B)$ is largely unknown, even for symmetric $\A$ and $\B$, except for the case when $\A$ and $\B$ are Boolean.

First, we establish a dichotomy for $\PCSP(\A, \B)$ where $\A, \B$ are symmetric, $\B$ is \emph{functional} (i.e.~any $r-1$ elements of an $r$-ary tuple uniquely determines the last one), and $(\A, \B)$ satisfies technical conditions we introduce called \emph{dependency} and \emph{additivity}. This result implies a dichotomy for $\PCSP(\A,\B)$ with $\A,\B$ symmetric and $\B$ functional if (i) $\A$ is Boolean, or (ii) $\A$ is a hypergraph of a small uniformity, or (iii) $\A$ has a relation $R^\A$ of arity at least 3 such that the hypergraph diameter of $(A, R^\A)$ is at most 1.

Second, we show that for $\PCSP(\A, \B)$, where $\A$ and $\B$ contain a single relation, $\A$ satisfies a technical condition called \emph{balancedness}, and $\B$ is arbitrary, the combined \emph{basic linear programming} relaxation ($\BLP$) and the \emph{affine integer programming} relaxation ($\AIP$) is no more powerful than the (in general strictly weaker) $\AIP$ relaxation. Balanced $\A$ include symmetric $\A$ or, more generally, $\A$ preserved by a transitive permutation group.

\end{abstract}

\section{Introduction}\label{sec:intro}
\emph{Promise constraint satisfaction problems} (PCSPs) are a generalisation of constraint satisfaction problems (CSPs) that allow for capturing many more computational problems~\cite{AGH17,BG21:sicomp,BBKO21}.

A canonical example of a CSP is the 3-colouring problem: Given a graph $\G$, is
it 3-colourable? This can be cast as a CSP.\@ Let $\K_k$ denote a clique
on $k$ vertices. Then $\CSP(\K_3)$, the constraint satisfaction problem with the
template $\K_3$, is the following computational problem (in the decision 
version): Given a graph $\G$, say \textsc{Yes} if there is a homomorphism from
$\G$ to $\K_3$ (indicated by $\G\to\K_3$) and say \textsc{No} otherwise
(indicated by $\G\not\to\K_3)$. Here a graph homomorphism is an edge preserving
map~\cite{Hell90:h-coloring}. As graph homomorphisms from $\G$ to $\K_3$ are
$3$-colourings of $\G$, $\CSP(\K_3)$ is the $3$-colouring problem.

Another example of a CSP is 1-in-3-SAT:\@ Given a positive 3-CNF
formula, is there an assignment that satisfies exactly one literal in each
clause? This is $\CSP(\inn{1}{3})$, where
\[
\inn{1}{3}=(\{0,1\};\{(1,0,0),(0,1,0),(0,0,1)\}).
\]
Yet another example is NAE-3-SAT:\@ 
Given a positive 3-CNF formula, is there an assignment that satisfies one or two
literals in each clause? This is $\CSP(\NAE)$, where
\[\NAE=(\{(0,1);{\{0,1\}}^3\setminus\{(0,0,0),(1,1,1)\}).\]

A canonical example of a PCSP is the approximate graph colouring
problem~\cite{GJ76}: Fix $k\leq\ell$. Given a graph $\G$, determine whether $\G$ is $k$-colourable or
not even $\ell$-colourable. (If neither condition holds then the algorithm can output anything; also, note that if $k = \ell$ this is just $k$-colouring.)
This is the same as the PCSP over cliques; i.e., 
$\PCSP(\K_k,\K_\ell)$ is the following computational
problem (in the decision version): Given a graph $\G$, say \textsc{Yes} if
$\G\to\K_k$ and say \textsc{No} if $\G\not\to\K_\ell$. In the search version,
one is given a $k$-colourable graph $\G$ and the task is to find an
$\ell$-colouring of $\G$ (which necessarily exists by the promise and the fact
that $k\leq\ell$).

Another example of a PCSP is $\PCSP(\inn{1}{3},\NAE)$, identified in~\cite{BG21:sicomp}:
Given a satisfiable instance $\X$ of $\CSP(\inn{1}{3})$,
can one find a solution if $\X$ is seen as an instance of $\CSP(\NAE)$? I.e., can
one find a solution that satisfies one or two literals in each clause given the
promise that a solution that satisfies exactly one literal in each clause
exists?
Although both $\CSP(\inn{1}{3})$  and $\CSP(\NAE)$ are \NP-complete, Brakensiek
and Guruswami showed in~\cite{BG21:sicomp} that $\PCSP(\inn{1}{3},\NAE)$ is solvable in polynomial time and in particular it is solved by the so-called affine integer programming relaxation ($\AIP$), whose power was characterised in~\cite{BBKO21}.

More generally, one fixes two relational structures $\A$ and $\B$ with
$\A\to\B$. The $\PCSP(\A,\B)$ is then, in the decision version, the computational problem of distinguishing (input) relational structures $\X$ with $\X\to\A$ from those with $\X\not\to\B$.
In the search version, $\PCSP(\A,\B)$ is the problem of finding a homomorphism from an input structure $\X$ to $\B$ given that one is promised that $\X\to\A$. 
One can think of $\PCSP(\A,\B)$ as an approximation version of $\CSP(\A)$ on satisfiable instances. Another way is to think of $\PCSP(\A,\B)$ as $\CSP(\B)$ with restricted inputs. 
We refer the reader to~\cite{KO22:survey} for a very recent survey on PCSPs.

\medskip

For CSPs, a dichotomy conjecture of Feder and Vardi~\cite{Feder98:monotone} was
resolved independently by Bulatov~\cite{Bulatov17:focs} and
Zhuk~\cite{Zhuk20:jacm} via the so-called algebraic
approach~\cite{Jeavons98:algebraic,Bulatov05:classifying}: For every fixed
finite $\A$, $\CSP(\A)$ is either
solvable in polynomial time or $\CSP(\A)$ is \NP-complete. 

For PCSPs, even the case of graphs and structures on Boolean domains is widely open; these two were established for CSPs a long time ago~\cite{Hell90:h-coloring, Schaefer78:stoc} and constituted important evidence for conjecturing a dichotomy. 
Following the important work of Barto et al.~\cite{BBKO21} on extending the
algebraic framework from the realms of CSPs to the world of PCSPs, there have
been several recent works on complexity classifications of fragments of
PCSPs~\cite{Ficak19:icalp,GS20:icalp,BG21:sicomp,BWZ21,Barto21:stacs,AB21,BG21:talg,bz22:ic,NZ22:toct,KOWZ23},
hardness conditions~\cite{BBKO21,BWZ21,Barto22:soda,Wrochna22}, and power of
algorithms~\cite{BBKO21,BGWZ,Atserias22:soda,cz23sicomp:clap}. Nevertheless, 
a classification of more concrete fragments of PCSPs is needed for making
progress with the general theory, such as finding hardness and tractability
criteria, as well as with resolving longstanding open questions, such as approximate graph colouring.

Brakensiek and Guruswami classified $\PCSP(\A,\B)$ for all Boolean symmetric structures $\A$ and $\B$ with disequalities~\cite{BG21:sicomp}. Ficak, Kozik, Ol\v{s}{\'a}k, and Stankiewicz generalised this result by classifying $\PCSP(\A,\B)$ for all Boolean symmetric structures $\A$ and $\B$~\cite{Ficak19:icalp}.

Barto, Battistelli, and
Berg~\cite{Barto21:stacs} studied symmetric PCSPs on non-Boolean domains and in particular PCSPs of the form $\PCSP(\inn{1}{3},\B)$, 
where $\B$ contains a single ternary relation over the domain $\{0,1,\ldots,d-1\}$.
For $d=2$, a complete classification $\PCSP(\inn{1}{3},\B)$ is
known~\cite{BG21:sicomp,Ficak19:icalp}. For $d=3$, Barto et
al.~\cite{Barto21:stacs} managed to classify all but one structure $\B$. The
remaining open case of ``linearly ordered colouring'' inspired further investigation in~\cite{NZ22:toct}.
For $d=4$, Barto et al.~\cite{Barto21:stacs} obtained partial results. In
particular,  for certain structures
$\B$ they managed to rule out the applicability of  the $\BLPAIP$
algorithm from~\cite{BGWZ}. The significance of $\BLPAIP$ here is that it is the
strongest known algorithm for PCSPs for which 
a characterisation of its power is
known both in terms of a minion\footnote{We will only use polymorphism minions~\cite{BBKO21} in this paper. More general minions capture not only the power of $\BLPAIP$~\cite{BGWZ} but also the power of the $\CLAP$ algorithm~\cite{cz23sicomp:clap}.} and also in terms of polymorphism identities, cf.~\cite{BGWZ} for details. Furthermore, $\BLPAIP$ solves all currently known tractable Boolean PCSPs~\cite{Ficak19:icalp,BGWZ}.
This suggests that those cases are \NP-hard (or
new algorithmic techniques are needed).

\paragraph{Contributions.}
We continue the work from~\cite{BG21:sicomp,Ficak19:icalp} and~\cite{Barto21:stacs} and
focus on promise constraint satisfaction
problems of the form $\PCSP(\A,\B)$,
where $\A$ is symmetric and $\B$ is over an arbitrary finite domain.\footnote{All structures in this article can be assumed to be finite unless they are explicitly stated to be infinite.}
Since the template $\A$ is symmetric, we can assume without loss of generality
that $\B$ is symmetric, as observed in~\cite{Barto21:stacs} and in~\cite{bz22:ic}.\footnote{In detail, for any symmetric $\A$ and (not
necessarily symmetric) $\B$ with $\A\to\B$, there is a symmetric $\B'$ with
$\A\to\B'$ such that $\PCSP(\A,\B)$ and $\PCSP(\A,\B')$ are polynomial-time
equivalent~\cite{Barto21:stacs,bz22:ic}. This $\B'$ is the largest symmetric substructure of $\B$. Observe that a functional structure has functional substructures, so if $\B$ is functional then $\B'$ remains functional.}

As our main result, we establish the following result. A structure $\B$ is
called \emph{functional} if, for any relation $R^\B$ in $\B$ of, say, arity $r$,
and any tuple $x\in R^\B$, any $r-1$ elements of $x$ determine the last element. In detail,
$(x_1,\ldots,x_{r-1},y),(x_1,\ldots,x_{r-1},z)\in R^\B$
implies $y=z$, and similarly for the other $r-1$ positions.\footnote{Note that
for symmetric $\B$ the requirement ``for the other $r-1$ positions'' is satisfied automatically.}
The notions of dependency and additivity will be defined in
\Cref{sec:additivityDependency}: for the moment, consider them simply as
technical conditions on $(\A, \B)$.
Finite tractability is defined in
\Cref{sec:prelims} --- informally, a template is finitely tractable if it can be solved by treating the instance as though it were an instance of a tractable finite-domain CSP.

\begin{restatable}{theorem}{thmdichotomy}%
\label{thm:dichotomy}
Let $\A$ be a symmetric structure and $\B$ be a functional structure such that $\A \to \B$. 
Assume that $(\A, \B)$ is dependent and additive. 
Then, either $\PCSP(\A, \B)$ is solvable in polynomial time by $\AIP$ and is finitely tractable, or $\PCSP(\A,\B)$ is \NP-hard.
\end{restatable}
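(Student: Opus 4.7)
The plan is to work within the algebraic theory of PCSPs: the complexity of $\PCSP(\A, \B)$ is controlled by the polymorphism minion $\Pol(\A, \B)$, and $\AIP$-solvability admits a characterisation in terms of minion identities from~\cite{BBKO21}. Accordingly, the goal is to establish a structural dichotomy on $\Pol(\A,\B)$: either it contains enough symmetric polymorphisms of every arity to witness $\AIP$-solvability (and, moreover, to produce a finite intermediate structure $\A'$ with $\A \to \A' \to \B$ for which $\CSP(\A')$ is tractable), or $\Pol(\A,\B)$ admits a minion reduction to a known \NP-hard minion.

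First, I would combine symmetry of $\A$ with functionality of $\B$ to obtain rigidity of polymorphisms. Symmetry of $\A$ allows one to symmetrise any polymorphism acting on $\A$-tuples (a standard trick used in~\cite{Ficak19:icalp, Barto21:stacs}), while functionality means every $r$-ary relation of $\B$ is the graph of a partial $(r-1)$-ary operation on $B$. Consequently, when a polymorphism is applied coordinatewise to a constraint, its output is completely determined by any $r-1$ of its coordinates. This should reduce the analysis of $\Pol(\A,\B)$ to a small combinatorial datum --- essentially, how the symmetric polymorphisms distribute their output values among the elements of $B$ --- that can be manipulated directly.

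Second, I would invoke the dependency and additivity conditions to upgrade this combinatorial description into an algebraic one. Informally, additivity should give control over polymorphisms applied to ``sums'' or concatenations of constraints, while dependency should rule out degenerate polymorphisms that are effectively constant or depend on too few coordinates. Together, these conditions should force $\Pol(\A, \B)$ into one of two regimes. In the first regime, one extracts polymorphisms of the form $f_n(x_1,\dots,x_n) = g\bigl(\sum_i \varphi(x_i) \bmod m\bigr)$ compatible with an Abelian group structure on a subset of $A$; these satisfy the $\AIP$-witnessing identities of~\cite{BBKO21}, and the image of one such $f_n$ of sufficiently large arity serves as the underlying set for a finite sandwich structure $\A'$ giving finite tractability. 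In the second regime, the absence of such polymorphisms, combined with symmetry, should yield a minion homomorphism to a projection-like minion, hence \NP-hardness via the framework of~\cite{BBKO21}.

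The main obstacle is the tractable side of the dichotomy. Verifying that $\AIP$ suffices is largely an identity-checking exercise once the right polymorphisms are in hand, but producing a \emph{finite} intermediate structure $\A'$ requires bounding the size of the Abelian structure hidden inside $\Pol(\A,\B)$ and ruling out ``borderline'' configurations where $\AIP$ works but no finite $\A'$ exists. The dependency and additivity hypotheses are presumably tailored precisely to this task, and I expect most of the technical effort to lie in exploiting them to obtain uniform finiteness of the resulting sandwich.
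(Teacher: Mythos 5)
Your high-level plan — finite sandwich via an Abelian structure on the tractable side, minion-based hardness criterion on the hard side — is the right shape, and your guess that the tractable polymorphisms look like $x \mapsto g(\sum_i \varphi(x_i) \bmod m)$ is essentially what the paper ends up producing. But the proposal has a genuine gap: it never identifies the condition that actually separates the two regimes, and the mechanisms you list for getting there are not the ones the paper relies on and would not obviously work.

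Concretely, the paper does not ``symmetrise polymorphisms''; polymorphisms $f \in \Pol^{(n)}(\A,\B)$ are arbitrary maps $A^n \to B$ and stay that way. The device that replaces symmetrisation is the function $f^p : 2^{[n]} \to B^{A^2}$ recording $f$'s values on all two-valued inputs, together with additivity, which says $f^p(S \cup T) = f^p(S) + f^p(T)$ for disjoint $S,T$. Dependency then says $f$ is a function of $f^p$ alone (via $\fr$ and the map $h$ of Lemma~\ref{lem:hExists}) — not that it ``rules out degenerate polymorphisms'', but that it collapses $f$ to data living on two-element restrictions. The actual dichotomy is then: either some $f$ is not $k$-degenerate for all $k \le N_d$ and has no hard set of size $\le N_h$ (Theorem~\ref{thm:genTractability}), or every $f$ fails one of these (Theorem~\ref{thm:genHardness}). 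In the first case $(\range(f^p),+,0)$ is shown to be a group, $m$ is the order of $f^p([n])$, and the sandwich $\E$ has domain $H = \mathbb{Z}_m^a$ with relations that are cosets of subgroups of $H^r$ — not, as you suggest, the image of a fixed polymorphism of large arity, which would not carry the needed affine structure and would make the $\B$-side homomorphism unavailable. In the second case the paper does not construct a minion homomorphism to a projection-like minion; it decomposes $\Pol(\A,\B)$ into finitely many parts $\M_h$ and $\M_{x_1,\dots,x_k}$, exhibits a bounded selection function $I$ on each part (using hard sets and degeneracy witnesses made small via Lemma~\ref{lem:genSmallSet}), and invokes the ``chain'' hardness criterion of Theorem~\ref{thm:chain}. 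Without the degeneracy/hard-set dichotomy and without $f^p$, neither side of your plan has a way to get started, so the missing idea is precisely the combinatorial invariant that makes the case split well-defined.
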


Our main motivation for studying PCSPs with functional structures is the fact (mentioned above)
that more complexity classifications of PCSP fragments are needed to make progress with the general theory of PCSPs.
Furthermore, functional PCSPs generalise linear equations, an important and
fundamental class of CSPs. Finally,
the topological methods that proved useful for showing the hardness of certain PCSPs (e.g.~$\PCSP(G, K_3)$ for any non-bipartite 3-colourable graph $G$, where $K_3$ is the clique~\cite{KOWZ23}, or approximate 3-vs.-4 linearly ordered colouring~\cite{fnotw:stacs}) seem inapplicable to 
functional PCSPs and thus other methods are required.

\Cref{thm:dichotomy} has the following three corollaries. The first
corollary applies to structures $\A$ that are Boolean.

\begin{restatable}{corollary}{corBoolDich}%
\label{cor:BooleanDichotomy}
Let $\A$ be a Boolean symmetric structure and $\B$ be a functional structure such that $\A \to \B$. 
Then, either $\PCSP(\A, \B)$ is solvable in polynomial time by $\AIP$ and is finitely tractable, or $\PCSP(\A,\B)$ is \NP-hard.
\end{restatable}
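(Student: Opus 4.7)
The plan is to derive Corollary~\ref{cor:BooleanDichotomy} directly from Theorem~\ref{thm:dichotomy} by checking that, whenever $\A$ is Boolean symmetric and $\B$ is functional with $\A\to\B$, the pair $(\A,\B)$ automatically satisfies the technical conditions of dependency and additivity. The corollary is then immediate.

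First, I would reduce to the case where $\B$ itself is symmetric. By the footnote after the statement about symmetric $\A$, one may replace $\B$ by its largest symmetric substructure $\B'$ without changing the complexity of $\PCSP(\A,\B)$, and crucially the same footnote remarks that functionality is preserved when passing to substructures. Hence it suffices to prove the corollary under the additional hypothesis that $\B$ is symmetric, which is exactly the setting in which dependency and additivity (to be defined in Section~\ref{sec:additivityDependency}) are most natural.

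Next, I would verify dependency. Since $\A$ has domain $\{0,1\}$ and every relation of $\A$ is symmetric, each symmetric relation $R^\A$ is entirely determined by the multiset of Hamming weights of its tuples, and any non-trivial tuple in $R^\A$ witnesses a very restricted structure. I would exploit this to show, relation by relation, that the condition defining dependency (whatever combinatorial constraint it imposes on how tuples of $\A$ map into $\B$) holds; the Boolean nature of $\A$ collapses the required case analysis to a small number of weights, and functionality of $\B$ pins down the images uniquely once an appropriate prefix is fixed.

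Then I would verify additivity. Again the Boolean domain should make this essentially automatic: the combinatorial object that additivity concerns can, for a symmetric Boolean $\A$, be built directly from the constant tuples $(0,\ldots,0)$ and $(1,\ldots,1)$ (when present in $R^\A$) together with symmetric permutations of mixed-weight tuples; these combine in the way required because $\{0,1\}$ carries an additive (mod~$2$) group structure that is compatible with every symmetric relation. Once both conditions are checked, Theorem~\ref{thm:dichotomy} applies and yields the desired dichotomy.

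The main obstacle I anticipate is the verification of additivity: dependency tends to be a local property that is easy to check from symmetry and functionality, whereas additivity usually asks for a global, structural combination of tuples, and checking it uniformly across all Boolean symmetric $\A$ and all functional $\B$ may require a careful case split based on which weight classes actually appear in each $R^\A$ and how functionality of the corresponding $R^\B$ forces the images of these weights to sit inside $\B$.
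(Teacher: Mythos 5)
Your plan is to prove dependency and additivity for \emph{all} Boolean symmetric $\A$ with functional $\B$ and then apply Theorem~\ref{thm:dichotomy} uniformly, but this cannot work: additivity genuinely fails when $\A$ contains only binary (or constant) relations. Take $\A = \B = (\{0,1\};\neq)$, which is functional, and consider $f = \text{maj}_3 \in \Pol^{(3)}(\A,\B)$. Writing $f^p$ as in Definition~\ref{def:additivity}, one has $f^p(\emptyset) = f^p(\{1\}) = f^p(\{2\})$ (majority of weight-$\le 1$ inputs is $0$), yet $f^p(\{1,2\}) \neq f^p(\{1\})$. If an operator $+$ existed, then $f^p(\{1\}) + f^p(\{2\})$ would have to equal both $f^p(\{1,2\})$ and (taking $S=\emptyset, T=\{1\}$) $f^p(\{1\})$, a contradiction. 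Thus the additivity condition you hope to verify simply does not hold in this regime, and no amount of ``exploiting the mod-$2$ structure'' will repair it. (Dependency, by contrast, is indeed trivial when $|A|=2$, since the identity map $\alpha = \mathrm{id}_{\{0,1\}}$ is one of the allowed $\alpha : A \to S$, so $f(x)$ itself is among the values fed into $h$ --- but this does not help with additivity.)

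The paper avoids this by splitting into two cases. When every relation of $\A$ is binary or consists only of constant tuples, it proves directly via a separate lemma (Lemma~\ref{lem:specialCase}) that $\PCSP(\A,\B)$ is solvable by $\AIP$ and finitely tractable, using functionality of $\B$ to pin down $\A$ as essentially a disequality-plus-constants template. Only in the complementary case --- when $\A$ has some relation $R^\A$ of arity $\ge 3$ with a non-constant tuple, which over $\{0,1\}$ forces $\diam(A, R^\A) = 1$ --- does the paper invoke super-connectedness (Lemma~\ref{lem:diamSuperConn}) to derive additivity and dependency (Lemmas~\ref{lem:superConnThenAdd} and~\ref{lem:superConnThenDepend}) and then apply Theorem~\ref{thm:dichotomy}. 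Your proposal is missing this case split entirely, and the step ``verify additivity for all Boolean symmetric $\A$'' would fail. Also, the preliminary reduction to symmetric $\B$ that you invoke is unnecessary here, since Theorem~\ref{thm:dichotomy} is stated for arbitrary functional $\B$.
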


\noindent
The second corollary applies to structures $\A$ consisting of a single relation of a small arity.

\begin{restatable}{corollary}{corTernDich}%
\label{cor:TernDich}
Let $\A$ be a symmetric structure and $\B$ be a functional structure such that $\A \to \B$. 
Assume that both $\A$ and $\B$ have exactly one relation of arity at most 4. 
Then, either $\PCSP(\A, \B)$ is solvable in polynomial time, or $\PCSP(\A,\B)$ is \NP-hard.
\end{restatable}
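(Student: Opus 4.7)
The plan is to deduce Corollary~\ref{cor:TernDich} from Theorem~\ref{thm:dichotomy}. Since any homomorphism $\A\to\B$ forces the two single relations to have the same arity $r$, the proof splits into four cases according to $r\in\{1,2,3,4\}$, and the bulk of the work is to verify that in each case the pair $(\A,\B)$ automatically satisfies the dependency and additivity hypotheses required by Theorem~\ref{thm:dichotomy} (or lies in a degenerate subclass that can be dispatched by direct means).

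I would first dispose of the small-arity cases. For $r=1$, the functionality of $\B$ forces $R^\B$ to contain at most one element, so $\A\to\B$ collapses the problem to a trivially polynomial one. For $r=2$, functionality means $R^\B$ is a partial matching on the domain of $\B$, while symmetry of $R^\A$ forces it to be an undirected (loopy) graph; here one can reduce $\PCSP(\A,\B)$ to an explicit reachability/bipartite-matching computation, or invoke the known graph-homomorphism dichotomy after folding, giving a direct P/\NP-hard dichotomy without passing through Theorem~\ref{thm:dichotomy}.

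For $r\in\{3,4\}$ I would invoke Theorem~\ref{thm:dichotomy}. The key step is to show that for symmetric $\A$ and functional $\B$ of arity at most $4$, either dependency and additivity both hold (so Theorem~\ref{thm:dichotomy} applies and yields the stated dichotomy), or $(\A,\B)$ falls into a short list of degenerate configurations that must be handled by hand. I expect dependency to follow from the freedom provided by symmetry of $R^\A$ (any element appearing in a tuple can be permuted into any coordinate, which supplies the variables needed to witness the dependency condition), while additivity should follow from the very tight structure imposed by functionality: since any $r-1$ coordinates of a tuple in $R^\B$ determine the last, combining partial solutions reduces to a bounded verification that becomes routine when $r\leq 4$.

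The main obstacle is the arity-$4$ case: symmetric $4$-ary relations admit substantially more shapes than symmetric $3$-ary ones, so the verification of additivity requires a genuine case analysis, stratified by the support pattern of $R^\A$ (e.g.\ number of distinct elements appearing in a tuple, presence of constant tuples, etc.). For each stratum one must either exhibit explicit witnesses that establish additivity for Theorem~\ref{thm:dichotomy}, or show that the stratum is degenerate enough that $\PCSP(\A,\B)$ is either trivially in P or inherits \NP-hardness from a reduction from a known hard PCSP. Once this finite case analysis is carried out, Theorem~\ref{thm:dichotomy} finishes the non-degenerate subcases and the degenerate subcases are settled directly, yielding the claimed dichotomy.
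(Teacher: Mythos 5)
Your overall strategy (split on the common arity, dispatch arities $1$ and $2$ directly, and feed arities $3$ and $4$ into Theorem~\ref{thm:dichotomy}) matches the paper's, and your handling of arities $1$ and $2$ is fine. But the key step for arities $3$ and $4$ has a genuine gap: you claim that dependency and additivity should follow from symmetry of $R^\A$ together with functionality of $R^\B$ for $r\leq 4$, possibly modulo ``a short list of degenerate configurations.'' This is not how it works, and the vague stratification by ``support pattern'' would not rescue it. The actual enabling hypothesis is \emph{connectivity} of the hypergraph $(A,R^\A)$: the paper proves (Lemmas~\ref{lem:3superConn} and~\ref{lem:4superConn}) that a connected symmetric relation of arity $3$ or $4$ is super-connected, and super-connectedness is what yields additivity and dependency (Lemmas~\ref{lem:superConnThenAdd} and~\ref{lem:superConnThenDepend}). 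For \emph{disconnected} $\A$, additivity and dependency can simply fail, so there is no way to verify the hypotheses of Theorem~\ref{thm:dichotomy}; the set of disconnected $\A$ is also not a ``short list'' but an infinite class needing a uniform treatment.

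What you are missing is the decomposition argument for the disconnected case: write $\A=\A_1+\cdots+\A_k$ as a disjoint union of connected components. If some $\PCSP(\A_i,\B)$ is \NP-hard then so is $\PCSP(\A,\B)$ (since $\A_i\to\A$ gives a trivial reduction); otherwise all of them are tractable, and one solves $\PCSP(\A,\B)$ on an input $\X=\X_1+\cdots+\X_n$ by noting that a connected $\X_j$ mapping into $\A$ must map into a single component $\A_i$, so one can test each $\X_j$ against each $\PCSP(\A_i,\B)$ separately. This decomposition is also the reason the corollary claims only polynomial-time solvability rather than solvability by $\AIP$ (cf.\ Remark~\ref{rem:notbyAIP}): the glueing of tractable components does not preserve $\AIP$-solvability. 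Without introducing connectivity and this component-wise reduction, your case analysis on arity $4$ shapes would not close the proof.
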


An example of a Boolean symmetric structure $\A$ is $\inn{1}{3}$, and more generally $\inn{q}{r}$.\footnote{$\inn{q}{r}$ is the structure on $\{0,1\}$ with a single (symmetric) relation of arity $r$ containing all $r$-tuples with precisely $q$ $1$s (and $r-q$ $0$s).}
The structure $\inn{1}{3}$ is not only Boolean (and thus is captured by
\Cref{cor:BooleanDichotomy}) but also consists of a single relation of
arity 3 (and thus is also captured by \Cref{cor:TernDich}). In fact, it
satisfies another property, leading to the following generalisation of
$\inn{1}{3}$, which applies to structures $\A$ with a certain connectivity
property.
We will need some notation.
Let $\dist_{R}(x, y)$  be the distance between $x$ and $y$ when viewed as vertices in a hypergraph whose edge relation is $R$; in particular, $\dist_R(x, x) = 0$ and $\dist_R(x, y) = \infty$ if $x$ and $y$ are in different connected components.
Define $\diam(A, R) = \max_{u, v \in A} \dist_R(u, v)$.

\begin{restatable}{corollary}{corDiamDich}%
\label{cor:DiamDich}
Let $\A$ be a symmetric structure and $\B$ be a functional structure such that $\A \to \B$. 
Assume that $\A$ has a relation $R^\A$ of arity at least 3 for which $\diam(A, R^\A) \leq 1$. 
Then, either $\PCSP(\A, \B)$ is solvable in polynomial time by $\AIP$ and is finitely tractable, or $\PCSP(\A,\B)$ is \NP-hard.
\end{restatable}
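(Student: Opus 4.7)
The plan is to reduce Corollary~\ref{cor:DiamDich} directly to Theorem~\ref{thm:dichotomy}. The hypotheses that $\A$ is symmetric, $\B$ is functional, and $\A \to \B$ are shared between the two statements, so the only remaining task is to show that the diameter condition on $\A$ forces $(\A, \B)$ to be both dependent and additive. Once these two technical properties are verified, Theorem~\ref{thm:dichotomy} immediately yields the desired dichotomy (with the tractable side solved by $\AIP$ and finitely tractable).

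First I would unpack the hypothesis. The assumption $\diam(A, R^\A) \leq 1$ means that for every pair of elements $a, b \in A$ there is a tuple of $R^\A$ in which both $a$ and $b$ appear; in particular every element of $A$ appears in some tuple. Because $R^\A$ has arity at least $3$ and $\A$ is symmetric, such a tuple also has at least one further coordinate, giving significant flexibility to insert auxiliary elements. This is the strongest possible connectivity of a single relation of $\A$, and $\inn{1}{3}$ is the prototypical example: every pair $\{0,1\}$ lies in the triple $(1,0,0)$ (up to permutation), which is the pattern we want to exploit in general.

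Next I would verify dependency of $(\A, \B)$. As the formal definition is deferred to Section~\ref{sec:additivityDependency}, I would expect dependency to express that the orbits or equivalence classes on $\A$ induced by tuples of its relations, transported via a homomorphism to $\B$, form a sufficiently rigid structure. A relation $R^\A$ of arity $\geq 3$ whose underlying hypergraph has diameter $1$ witnesses, in a single hyperedge, a joint occurrence of any two prescribed elements together with a free slot, which is the standard combinatorial device used to derive such closure properties. I therefore expect dependency to drop out of the hypothesis by a short argument that chains hyperedges.

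Finally I would verify additivity of $(\A, \B)$. Additivity is a joint compatibility condition: one must combine tuples of $R^\A$ and their images in $R^\B$ without creating inconsistencies, and the functionality of $\B$ is what makes such combinations well-behaved. The main obstacle I foresee is precisely this verification, since additivity ties $\A$ and $\B$ together and one cannot appeal to $\A$ alone. However, the diameter-$1$ assumption is exactly what one needs: given any two elements of $A$, one can realise them in a common tuple of $R^\A$, apply the homomorphism into $\B$, and then exploit functionality of $R^\B$ to control the third coordinate. This should reduce additivity to a finite case check on $R^\A$-tuples, completing the reduction to Theorem~\ref{thm:dichotomy}.
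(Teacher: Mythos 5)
Your high-level strategy matches the paper's: reduce to Theorem~\ref{thm:dichotomy} by showing that the diameter condition forces $(\A, \B)$ to be additive and dependent. The paper does exactly this—in fact it states this as Theorem~\ref{thm:diamAdditivityDependency}, and the corollary's proof is a one-line appeal to Lemma~\ref{lem:diamSuperConn}, Lemma~\ref{lem:superConnThenAdd}, and Lemma~\ref{lem:superConnThenDepend}.

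However, there is a genuine gap in your proposal: you sketch only intuitions for why additivity and dependency ought to hold, and those intuitions do not match the actual definitions. Dependency is not about ``orbits or equivalence classes on $\A$'' but asserts the existence of a single map $h$ (uniform over all polymorphisms $f$) that recovers $f(x_1,\dots,x_n)$ from the values $f(\alpha(x_1),\dots,\alpha(x_n))$ where $\alpha$ ranges over maps into two-element subsets of $A$. Additivity is not a ``joint compatibility'' condition between $\A$-tuples and their $\B$-images but the existence of an operator $+$ on $B^{A^2}$ with $f^p(S)+f^p(T)=f^p(S\cup T)$ for disjoint $S,T$ and all polymorphisms $f$. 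Establishing both requires the intermediate notion of \emph{super-connectedness}—that $\trip_\A \vdash_{\A,R^\A}(x,y,z)$ for all $x,y,z$ via the formal proof system of Section~\ref{sec:additivityDependency}—and Lemma~\ref{lem:diamSuperConn} verifies super-connectedness under the diameter-$1$ hypothesis via explicit matrix constructions. Your proposal contains none of this machinery; ``this should reduce additivity to a finite case check'' is not a proof, and it is precisely the content of the nontrivial lemmas the paper supplies. Without super-connectedness (or an equivalent device), the verification step you defer is the entire content of the result.
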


A hypergraph is called \emph{linear} if no two distinct edges intersect in more
than one vertex. We remark that any \emph{linear} $r$-uniform hypergraph can be
seen as a functional symmetric relational structure with one $r$-ary relation. 

\medskip
Several researchers have informally conjectured that $\PCSP(\inn{1}{3},\B)$
admits a dichotomy. The authors, as well as other researchers, believe that in
fact not only is there a dichotomy but also all tractable cases are solvable by
$\AIP$ (cf.~also \Cref{rem:notbyAIP}).

\begin{conjecture}\label{conj:aip}
For every structure $\B$, either $\PCSP(\inn{1}{3},\B)$ is solvable in polynomial time by $\AIP$, or $\PCSP(\inn{1}{3},\B)$ is \NP-hard.
\end{conjecture}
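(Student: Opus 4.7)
The plan is to extend Corollary~\ref{cor:BooleanDichotomy}, which handles functional $\B$, to general $\B$ when the source is specifically $\inn{1}{3}$. First, since $\inn{1}{3}$ is symmetric, we may assume without loss of generality that $\B$ is symmetric as well, by replacing $\B$ with its largest symmetric substructure as discussed in the footnote on WLOG symmetrisation. Thus we may focus on symmetric $\B$ whose signature matches that of $\inn{1}{3}$, i.e., a single ternary symmetric relation $R^\B$ on a finite set $B$.

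The main idea is to analyse the symmetric polymorphisms of $(\inn{1}{3}, \B)$. A polymorphism $f \colon \{0,1\}^n \to B$ sends, for every ordered partition of $\{1,\ldots,n\}$ into three blocks $(S_1, S_2, S_3)$, the corresponding triple of indicator vectors to a tuple in $R^\B$. By the standard symmetrisation argument for Boolean symmetric sources, we may take $f$ symmetric, so $f(x) = g(|x|)$ for some $g \colon \{0,\ldots,n\} \to B$, and the polymorphism condition reduces to: for all $a, b, c \geq 0$ with $a+b+c = n$, the triple $(g(a), g(b), g(c))$ lies in $R^\B$. A classification of such admissible families $\{g_n\}_n$ should then yield the dichotomy.

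Using the characterisation of $\AIP$ by polymorphism identities from~\cite{BGWZ}, I would aim to show that either (i) the family of admissible $g_n$ contains, for all odd $n$, alternating-threshold-type functions, which certifies that $\AIP$ solves $\PCSP(\inn{1}{3}, \B)$, or (ii) the polymorphism minion of $(\inn{1}{3}, \B)$ admits a minion homomorphism to a known \NP-hard minion, yielding \NP-hardness via the framework of~\cite{BBKO21}. In this analysis the technical conditions of dependency and additivity, automatic in the Boolean symmetric case, should continue to hold and feed into the construction of suitable polymorphisms.

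The principal obstacle is the passage from non-functional $\B$ to the functional setting of Theorem~\ref{thm:dichotomy}. One promising strategy is to construct a functional ``skeleton'' $\B'$ of $\B$ that captures the complexity---for example, by restricting $R^\B$ to a functional subrelation on which $\inn{1}{3}$ still maps---and to prove that $\PCSP(\inn{1}{3}, \B)$ and $\PCSP(\inn{1}{3}, \B')$ are polynomial-time equivalent while preserving $\AIP$ solvability. The hardest part, I expect, is handling structures $\B$ in which no natural functional skeleton captures the full polymorphism structure, yet the non-functional parts of $R^\B$ do not supply enough additional polymorphisms to change the complexity. Identifying and either reducing away or directly analysing such pathological $\B$ seems to be the genuine gap between Theorem~\ref{thm:dichotomy} and Conjecture~\ref{conj:aip}.
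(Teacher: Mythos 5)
This statement is a \emph{conjecture} that the paper explicitly leaves open; the paper only establishes it for functional $\B$ (via Theorem~\ref{thm:dichotomy}) and discusses in the Conclusion (Problems~\ref{prob:2} and~\ref{prob:3}) what a resolution for general $\B$ might look like. So there is no proof in the paper to compare against, and your proposal -- which you yourself describe as identifying rather than closing the ``genuine gap'' -- does not establish the conjecture either.

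Beyond that, there is a concrete error in the sketch. You write that ``by the standard symmetrisation argument for Boolean symmetric sources, we may take $f$ symmetric, so $f(x) = g(|x|)$.'' This is not justified and is in general false. Symmetric polymorphisms characterise the power of $\BLP$, not $\AIP$, and a polymorphism minion $\Pol(\inn{1}{3},\B)$ need not contain symmetric operations of all arities even when the PCSP is tractable. For instance, $\PCSP(\inn{1}{3}, \NAE)$ is solved by $\AIP$ via alternating (block-symmetric, parity-respecting) polymorphisms, but $\BLP$ does not solve it, so one cannot reduce the analysis to functions of Hamming weight alone. The correct abstraction for the $\AIP$ side is alternating operations (Theorem~\ref{thm:aip}); replacing ``symmetric'' by ``2-block-symmetric/alternating'' would be closer to what the paper's tractability arguments use (cf.\ Lemmas~\ref{lem:altSym} and~\ref{lem:altAlt}), but that change does not by itself circumvent the passage from functional to non-functional $\B$.

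Finally, the proposed ``functional skeleton'' reduction is exactly the missing ingredient: you would need a proof that $\PCSP(\inn{1}{3},\B)$ and $\PCSP(\inn{1}{3},\B')$ are polynomial-time equivalent \emph{and} that $\AIP$-solvability is preserved in both directions, for a functional substructure $\B'$ you have not specified. The paper's own discussion suggests the right target for a reduction or sandwich is of the form $\Eqn{3^p}{1}\times\NAE$ (possibly via the cyclic templates $\CC_k^+$), not an arbitrary functional subrelation of $R^\B$. As it stands the proposal is a research plan with a plausible but unsubstantiated central step, not a proof.
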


\Cref{thm:dichotomy} establishes the special case of \Cref{conj:aip} for functional $\B$.
We make further progress towards \Cref{conj:aip} by proving that for any
structure $\A$ with a single \emph{(not necessarily Boolean)} symmetric relation, and any \emph{(not necessarily functional)} structure $\B$ for which $\A \to \B$, $\BLPAIP$ from~\cite{BGWZ}
is no more powerful for $\PCSP(\A,\B)$ than $\AIP$ 
from~\cite{BG21:sicomp}, although in general $\BLPAIP$ is strictly stronger than
$\AIP$~\cite{BGWZ}, already for (non-promise) $\CSP$s with \emph{two} Boolean
symmetric relations, cf.~\Cref{rem:two}. In fact, we establish a more
general result.
We say that a relation $R$ is \emph{balanced} if there exists a matrix $M$ whose
columns are tuples of $R$, where each tuple of $R$ appears as a column (possibly
a multiple times), and
where the rows of $M$ are permutations of each other.
The matrix $M$ below shows that the Boolean $\inn{1}{3}$ relation is balanced:
\[
M =
\begin{pmatrix}
1 & 0 & 0 \\
0 & 1 & 0 \\
0 & 0 & 1 \\
\end{pmatrix}.
\]

\begin{restatable}{theorem}{thmblpaip}%
\label{thm:collapse}
Let $\A$ be any structure with a single relation. If the relation in $\A$ is balanced then, for any $\B$ such that $\A \to \B$, $\BLPAIP$ solves $\PCSP(\A, \B)$ if and only if $\AIP$ solves it.
\end{restatable}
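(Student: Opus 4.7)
The easy direction --- that $\AIP$ solving $\PCSP(\A,\B)$ implies $\BLP+\AIP$ does --- is automatic, since $\BLP+\AIP$ is at least as strong as $\AIP$ on any PCSP. The nontrivial content is the converse: assuming $\BLP+\AIP$ correctly solves $\PCSP(\A,\B)$, one needs to show that $\AIP$ does too. I would reduce this, via contrapositive at the instance level, to proving that every $\AIP$-feasible instance of $\PCSP(\A,\B)$ is also $\BLP+\AIP$-feasible.

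Fix a homomorphism $h:\A\to\B$ and the balancing matrix $M$ guaranteed by the hypothesis, with $N$ columns. Pushing $M$ coordinate-wise through $h$ yields a matrix $h(M)$ whose columns lie in $h(R^\A)\subseteq R^\B$ and whose rows remain permutations of each other. Dividing column-multiplicities by $N$ gives a rational distribution $\mu_0$ on $R^\B$, identical for every constraint, whose position-marginals are position-independent; this $\mu_0$ is $\BLP$-feasible. Given any $\AIP$-solution $\nu$ to an instance, I would then combine $\nu$ with a large integer multiple $K$ of the integer vector $N\mu_0$ to form $w=KN\mu_0+\nu$: its per-constraint sum is $KN+1$, its position-marginals remain position-independent (since both $\mu_0$'s and $\nu$'s are), and it is non-negative on $h(R^\A)$ for $K$ large enough; this is the intended certificate for $\BLP+\AIP$-feasibility.

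The main obstacle is that $\nu$ may assign negative values to tuples in $R^\B\setminus h(R^\A)$, where $\mu_0$ vanishes, so additive shifts by $\mu_0$ cannot restore non-negativity there. My plan is to handle this by a kernel-adjustment: modify $\nu$ by an element of the $\AIP$-kernel --- integer tuple-weightings of zero sum with zero variable-marginals --- to eliminate all negative weight outside $h(R^\A)$, so that the support of the modified $\nu$ is contained in $\supp(\mu_0) = h(R^\A)$. Producing such kernel elements is where the single-relation and balancedness hypotheses on $\A$ are genuinely used: balancedness equips $h(R^\A)$ with enough integer affine structure to absorb contributions from outside it, and the single-relation assumption ensures the adjustment can be made constraint-by-constraint without interference between distinct relations. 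I expect this kernel-existence step to be the technically hardest part of the argument, as it must be carried out uniformly over the instance while preserving integrality of the solution.
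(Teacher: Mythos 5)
Your proposal takes a genuinely different route from the paper's and, as it stands, has a critical unresolved gap.

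The paper does not reason at the instance level at all. It uses the algebraic characterisations (Theorems~\ref{thm:aip} and~\ref{thm:blpaip}): $\AIP$ solves iff $\Pol(\A,\B)$ has alternating operations of every odd arity, $\BLPAIP$ solves iff it has 2-block-symmetric operations of every odd arity. The whole proof then reduces to a finite, instance-independent combinatorial claim: from a 2-block-symmetric polymorphism of arity $2kN+1$, build an alternating polymorphism of arity $2k+1$. Concretely, the 2-block-symmetric polymorphism induces a function $f : S_{kN}\times S_{kN+1}\to B$ compatible with $kN\overline{R}$ and $(kN+1)\overline{R}$ (Lemma~\ref{lem:altSym}), balancedness makes $\sum_i\overline{t_i}$ a constant vector $(c,\ldots,c)$, and the set-algebra inclusion $(k+1)\overline{R}-k\overline{R}+k\sum_i\overline{t_i}\subseteq(kN+1)\overline{R}$ (Lemma~\ref{lem:setAlgebra}) shows that $g(x)=f(kc,\,x+kc)$ is well-defined on $S_{k+1}-S_k$ and satisfies the hypothesis of Lemma~\ref{lem:altAlt}, yielding the alternating polymorphism. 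Your ``shift by $KN\mu_0$'' is the instance-level analogue of this ``shift by $kc$'', so the intuition about how balancedness is used is right and genuinely parallel.

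The gap is the kernel-adjustment step, and it is not a technicality that can be deferred. Balancedness is a hypothesis on $R^\A$, not on $R^\B$, and an $\AIP$ solution $\nu$ for an instance lives on $R^\B$. Asking to modify $\nu$ by an $\AIP$-kernel element so that its support lands inside $h(R^\A)$ is, in effect, asking for $\AIP$-feasibility of the instance against the \emph{image structure} $h(\A)$; nothing in the hypotheses implies this follows from $\AIP$-feasibility against $\B$. You also need the adjustment to preserve the global marginal-consistency constraints simultaneously across all constraints of the instance, and you offer no mechanism for that (and acknowledge it). So as a proof this does not close. The paper sidesteps exactly this difficulty by working with polymorphisms, where there is only a single relation $R^\A$ to feed into the operation and the ``support'' issue never arises: the inputs $kc$ and $x+kc$ are guaranteed by Lemma~\ref{lem:setAlgebra} to lie in $kN\overline{R}$ and $(kN+1)\overline{R}$, and the outputs then land in $R^\B$ automatically because $f$ comes from a polymorphism. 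I would recommend redoing the argument via Theorems~\ref{thm:aip} and~\ref{thm:blpaip}; your balanced-distribution observation will reappear naturally as the constancy of $\sum_i\overline{t_i}$.
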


If the (only) relation in $\A$ is \emph{binary} (i.e., a digraph), the condition of balancedness has a natural combinatorial interpretation: A binary relation is balanced if and only if it is the disjoint union of strongly connected components (cf.~\Cref{app:lemmas}). 

\Cref{thm:collapse} implies the following corollary. We say that a relation of arity $r$ is \emph{preserved} by a group of permutations of degree $r$ if and only if permuting any tuple of the relation according to any permutation of the group gives another tuple of the relation.

\begin{restatable}{corollary}{cortransitivegroup}%
\label{cor:transitivegroup}
Suppose that $G$ is a transitive group of permutations, of order $r$. Further, suppose that $\A$ is a relational structure with one relation, of arity $r$, that is preserved by $G$. Then, for any $\A \to \B$, $\BLPAIP$ solves $\PCSP(\A, \B)$ if and only if $\AIP$ does.
\end{restatable}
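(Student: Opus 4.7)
The plan is to derive Corollary~\ref{cor:transitivegroup} directly from Theorem~\ref{thm:collapse}: I will show that any relation preserved by a transitive permutation group is balanced in the sense defined just before that theorem, after which Theorem~\ref{thm:collapse} immediately yields the desired collapse of $\BLPAIP$ to $\AIP$ on $\PCSP(\A,\B)$.

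So the whole task reduces to constructing, from $R = R^\A$ (of arity $r$) and a transitive group $G$ of permutations of $\{1,\dots,r\}$ preserving $R$, a matrix $M$ witnessing balancedness. The natural candidate is to let the columns of $M$ be indexed by pairs $(t,\sigma)\in R\times G$, with the column indexed by $(t,\sigma)$ equal to the permuted tuple $(t_{\sigma(1)},\dots,t_{\sigma(r)})$. Since $G$ preserves $R$, every column lies in $R$; and taking $\sigma=\id$ shows that every tuple of $R$ appears as a column at least once.

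It then remains to verify that the rows of $M$ are permutations of each other. Given row indices $i$ and $j$, I would use transitivity of $G$ to pick $\pi\in G$ with $\pi(j)=i$. The map $(t,\sigma)\mapsto(t,\sigma\pi)$ is then a bijection on the column index set $R\times G$, and it matches the entry $t_{\sigma(i)}$ in row $i$ at column $(t,\sigma)$ with the entry $t_{(\sigma\pi)(j)}=t_{\sigma(i)}$ in row $j$ at column $(t,\sigma\pi)$. Thus rows $i$ and $j$ carry the same multiset of entries, which is exactly the balancedness condition, and the corollary follows from Theorem~\ref{thm:collapse}.

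I do not expect a real obstacle here: once the matrix is written down, transitivity of $G$ does all the work, and the only minor subtlety is fixing the group action convention consistently so that the bijection above genuinely witnesses equality of the row multisets rather than some off-by-an-inverse variant.
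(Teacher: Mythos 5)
Your proof is correct and follows the same overall strategy as the paper: reduce to showing that a relation preserved by a transitive permutation group is balanced, and then invoke Theorem~\ref{thm:collapse}. The only divergence is in the witnessing matrix. The paper takes the minimal matrix whose columns are exactly the tuples of $R$ (each once), lets a group element $\pi\in G$ act on the rows, and then argues that $\pi M$ is a column-permutation of $M$ (which requires the small observation that $\pi M$ has distinct columns, so that the induced map on columns is a bijection). You instead index columns by $R\times G$, setting column $(t,\sigma)$ to $(t_{\sigma(1)},\dots,t_{\sigma(r)})$; this is redundant, but the column index set then carries a free right $G$-action, and the map $(t,\sigma)\mapsto(t,\sigma\pi)$ is \emph{manifestly} a bijection, which makes the row-multiset computation a one-line check of $(\sigma\pi)(j)=\sigma(i)$. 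So the paper's matrix is smaller while your argument is slightly more direct, avoiding the injectivity-of-columns step. Both are fine, and your index convention is used consistently, so the argument goes through.
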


While \Cref{cor:transitivegroup} is more elegant than \Cref{thm:collapse}, it applies to fewer structures. Indeed, we will show in \Cref{rem:eulerian} that there exist balanced relations that are not preserved by any transitive group.
Examples of relations that are preserved by some transitive group of permutations $G$ include symmetric relations (where $G$ is the symmetric group) or cyclic relations (where $G$ contains all cyclic shifts of appropriate degree).

\section{Preliminaries}%
\label{sec:prelims}

We let $[r]= \{1, \ldots, r\}$. We denote by $2^S$ the
powerset of $S$. 

\paragraph{Structures and PCSPs}
Promise CSPs have been introduced in~\cite{AGH17} and~\cite{BG21:sicomp}. We follow the notation and terminology of~\cite{BBKO21}.

A (relational) \emph{structure} is a tuple $\A=(A; R_1^\A, \ldots, R_t^\A)$, where $R_i^\A \subseteq
A^{\ar(R_i)}$ is a relation of arity $\ar(R_i)$ on a set $A$, called the \emph{domain}. 
A structure $\A$ is called \emph{Boolean} if $A=\{0,1\}$ and is called \emph{symmetric} if $R_i^\A$ is a symmetric relation for each $i \in [t]$; i.e, if
$(x_1,\ldots,x_{\ar(R_i)})\in R_i^\A$ then for every permutation $\pi$ on $[\ar(R_i)]$ we have $(x_{\pi(1)},\ldots,x_{\pi(\ar(R_i))})\in R_i^\A$.
A structure $\A$ is called \emph{functional} if 
$(x_1,\ldots,x_{\ar(R_i)-1},y)\in R_i^\A$
and 
$(x_1,\ldots,x_{\ar(R_i)-1},z)\in R_i^\A$ 
implies $y=z$ for any $i \in [t]$, and that the same hold for all other $r-1$ positions in the tuple. For any $r$-ary functional relation $R \subseteq A^r$, we define a partial map also called $R$ from $A^{r-1}$ to $A$ in the following way: for any $x_1, \ldots, x_{r-1} \in A$, $R(x_1, \ldots, x_{r-1})$ is the unique value $y$ such that $(x_1, \ldots, x_{r-1}, y) \in R$, if it exists; $R(x_1, \ldots, x_{r-1})$ is undefined if no such value exists.

Consider two structures $\A=(A;R_1^\A,\ldots,R_t^\A)$ and $\B=(B;R_1^\B,\ldots,R_t^\B)$ with $t$ relations, where, for each $i \in [t]$, $R_i^\A$ and $R_i^\B$ have the same arity. 
A \emph{homomorphism from $\A$ to $\B$} is a function $h : A \to B$ such that, for any $i \in [t]$, for
each $x=(x_1,\ldots,x_{\ar(R_i)})\in R_i^\A$, we have $h(x)=(h(x_1),\ldots,h(x_{\ar(R_i)}))\in R_i^\B$. We denote the
existence of a homomorphism from $\A$ to $\B$ by $\A \to \B$. 

Let $\A$ and $\B$ be two structures with $\A\to\B$; we call $(\A,\B)$
a (PCSP) \emph{template}. In the \emph{search} version of the \emph{promise constraint
satisfaction problem} (PCSP) with the template $(\A,\B)$, denoted by
$\PCSP(\A,\B)$, the task is: Given a structure $\X$ with the promise that $\X\to
\A$, find a homomorphism from $\X$ to $\B$ (which necessarily exists as
homomorphisms compose). In the \emph{decision} version of $\PCSP(\A,\B)$, the
task is: Given a structure $\X$, output \textsc{Yes} if $\X \to \A$, and output
\textsc{No} if $\X\not\to\B$.%
\footnote{If neither condition holds then the algorithm can output anything. An equivalent way is to define the problem as follows: Given a relational structure $\X$ such that either $X\to\A$ or $\X\not\to\B$, which is the promise, output \textsc{Yes} if $\X\to\A$ and output \textsc{No} if $\X\not\to\B$. The equivalence, up to polynomial-time solvability, of the two definitions relies on the fact that polynomials are time-constructible and so a clock can be run.}

The decision version trivially reduces to the search version. We will use
the decision version in this paper.

We will be interested in the complexity of $\PCSP(\A,\B$), in
particular for symmetric $\A$ and functional $\B$. 
(As discussed in \Cref{sec:intro}, the symmetricity of $\A$ means that we can without loss of generality assume symmetricity of $\B$.)

\paragraph{Operations and polymorphisms}
A function $h:A^n\to B$ is called an \emph{operation} of arity $n$. 
A $(2n+1)$-ary operation $f:A^{2n+1}\to B$ is called \emph{$2$-block-symmetric}
if $f(a_1,\ldots,a_{2n+1})=f(a_{\pi(1)},\ldots,\allowbreak a_{\pi(2n+1)})$ for
every $a_1,\ldots,a_{2n+1}\in A$ and every permutation $\pi$ on $[2n+1]$
that preserves parity; i.e, $\pi$ maps odd values to odd values and even values to even values.

A $(2n+1)$-ary operation $f:A^{2n+1}\to B$ is called \emph{alternating} if
it is 2-block-symmetric, and furthermore
$f(a_1,\ldots,a_{2n-1},a,a)=f(a_1,\ldots,a_{2n-1},a',a')$
for every
$a_1,\ldots,a_{2n-1},a,a'\in A$.

Consider structures $\A, \B$ with $t$ relations with the same arities. We call $h:A^n\to B$ a \emph{polymorphism} of $(\A,\B)$ if 
the following holds for any relation $R = R_i$, $i \in [t]$, of arity $r=\ar(R)$. For any $x^1,\ldots,x^r\in A^n$, where $x^i = (x_1^i, \ldots, x_n^i)$, with
$(x^1_i,\ldots,x^r_i) \in R^\A$ for every $1 \leq i \leq n$, we
have $(h(x^1),\ldots,h(x^r))\in R^\B$. One can visualise this as an $(r\times n)$
matrix whose rows are the tuples $x^1,\ldots,x^r$. The requirement is that
if every column of the matrix is in $R^\A$ then the application of
$h$ on the rows of the matrix results in a tuple from $R^\B$. 
We denote by $\Pol^{(n)}(\A,\B)$ the set
of $n$-ary polymorphisms of $(\A,\B)$ and by $\Pol(\A,\B)$
the set of all polymorphisms of $(\A,\B)$.

\paragraph{Relaxations}
There are two standard polynomial-time solvable relaxations for PCSPs, the \emph{basic linear
programming} relaxation ($\BLP$) and the \emph{affine integer programming}
relaxation ($\AIP$)~\cite{BG21:sicomp}. The $\AIP$ solves most tractable
PCSPs studied in this paper, with the exception of cases covered in
\Cref{cor:TernDich} (cf.~also \Cref{rem:notbyAIP}). 
There is also a combination of the two, called
$\BLPAIP$~\cite{BGWZ}, that is provably stronger than both $\BLP$ and 
$\AIP$.
We will show that for certain PCSPs, this is
not the case (cf.~\Cref{thm:collapse}). 
The precise definitions of the relaxations are not important for us as we will only need the notion of solvability of PCSPs by these relaxations and characterisations of the power of the relaxations;  we refer
the reader to~\cite{BG21:sicomp,BBKO21,BGWZ} for details.
Let $\X$ be an instance of $\PCSP(\A,\B)$. It follows from the
definitions of the relaxations that if $\X\to\A$ then both $\AIP$ and
$\BLPAIP$ accept~\cite{BG21:sicomp,BBKO21}. We say that $\AIP$ ($\BLPAIP$, respectively)
\emph{solves} $\PCSP(\A,\B)$ if, for every $\X$ with $\X\not\to\B$,
$\AIP$ ($\BLPAIP$, respectively) rejects.

The power of $\AIP$ and $\BLPAIP$ for PCSPs is characterised by the following
results.

\begin{theorem}[\cite{BBKO21}]\label{thm:aip}
  $\PCSP(\A,\B)$ is solved by $\AIP$ if and only if $\Pol(\A,\B)$ contains alternating operations of all odd arities.
\end{theorem}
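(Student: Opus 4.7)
The plan is to invoke the general algebraic framework of~\cite{BBKO21}, which characterises the power of any minion-based relaxation by the existence of a minion homomorphism from the abstract minion of the relaxation into $\Pol(\A,\B)$. First I would identify the affine integer minion $\M_{\AIP}$ underlying $\AIP$: its $n$-ary elements are integer vectors $(a_1,\ldots,a_n) \in \mathbb{Z}^n$ with $\sum_i a_i = 1$, and the minor of $\mathbf{a}$ under $\pi\colon [n] \to [m]$ is the vector with $j$-th entry $\sum_{i \in \pi^{-1}(j)} a_i$. A short argument shows that $\AIP$ accepts an instance $\X$ precisely when there is a $\M_{\AIP}$-valued assignment to its constraints, so by the general characterisation $\AIP$ solves $\PCSP(\A,\B)$ iff there is a minion homomorphism $\xi \colon \M_{\AIP} \to \Pol(\A,\B)$. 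It then suffices to show that such a $\xi$ exists iff $\Pol(\A,\B)$ contains alternating operations of all odd arities.

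For the forward direction, assume $\xi$ exists. For each odd $n = 2k+1$, apply $\xi$ to the vector $\mathbf{v}_n = (1,-1,1,-1,\ldots,1)$ with $k+1$ ones in odd positions and $k$ minus-ones in even positions, and set $f_n = \xi(\mathbf{v}_n)$. Every parity-preserving permutation $\sigma$ of $[n]$ fixes $\mathbf{v}_n$ as a minor, so the minion-homomorphism law yields $f_n^\sigma = f_n$, giving 2-block-symmetry. Collapsing the last two coordinates of $\mathbf{v}_n$ via $\pi\colon [n]\to[n-1]$ produces a vector whose last entry is $1+(-1)=0$; a general lemma, proved by first padding with an extra zero coordinate and then applying a suitable minor, shows that $\xi$ of any vector with a zero entry is independent of the corresponding argument. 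Hence $f_n(x_1,\ldots,x_{2k-1},y,y)$ does not depend on $y$, which is exactly the alternating identity.

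For the converse, assume $\Pol(\A,\B)$ contains alternating operations $g_m$ of every odd arity $m$. Given $\mathbf{a} \in \M_{\AIP}^{(n)}$, set $P = \sum_{a_i>0} a_i$ and $N = -\sum_{a_i<0} a_i$; since $\sum_i a_i = 1$, one has $P - N = 1$ and $P + N = 2N+1$ is odd. Define $\xi(\mathbf{a})(x_1,\ldots,x_n) = g_{2N+1}(\mathbf{z})$, where $\mathbf{z}$ repeats each $x_i$ with multiplicity $|a_i|$, placed at odd positions if $a_i>0$ and at even positions if $a_i<0$; the 2-block-symmetry of $g_{2N+1}$ makes the precise placement immaterial, so $\xi(\mathbf{a})$ is a well-defined composition of a polymorphism with projections, hence a polymorphism. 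Verifying that $\xi$ commutes with minors reduces, after regrouping by 2-block-symmetry, to the following observation: whenever a minor causes two occurrences of the same variable with opposite signs to coincide at one target coordinate, the resulting pair (one at an odd position, one at an even position) may be replaced by an arbitrary pair $(x',x')$ via the alternating identity and then absorbed into the remaining blocks.

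The main obstacle is this last compatibility check: one must track the combinatorics of cancelling $+1/-1$ pairs as they arise under repeated minors, and show, by induction on the number of such pairs, that the alternating identity lets every cancellation be carried out without changing the value of the polymorphism. This bookkeeping is the technical heart of the BBKO21 argument; once it is in place, the equivalence of the two conditions follows from the general minion-homomorphism characterisation identified in the first step.
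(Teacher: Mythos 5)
The paper does not prove Theorem~\ref{thm:aip}; it is stated with a citation to~\cite{BBKO21}, so there is no in-paper proof to compare against. Evaluating the sketch on its own terms, the route you take --- characterise $\AIP$-solvability as the existence of a minion homomorphism $\xi$ from the affine integer minion into $\Pol(\A,\B)$, and then translate the existence of $\xi$ into polymorphism identities --- is indeed the framework of~\cite{BBKO21}. Your forward direction is correct: $f_n := \xi(1,-1,\ldots,1)$ inherits $2$-block-symmetry from the invariance of that vector under parity-preserving permutations, and a vector with a zero coordinate is a minor (along the order-preserving inclusion that skips that coordinate) of a vector of lower arity, so $\xi$ of it does not depend on the corresponding argument.

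The converse, however, has a genuine gap. Your $\xi(\mathbf{a})$ is defined as a particular minor of $g_{2N+1}$ with $N = -\sum_{a_i<0}a_i$, and $\xi$ must satisfy $\xi(\mathbf{a}^\pi)=\xi(\mathbf{a})^\pi$ for every $\pi$. When $\pi$ merges coordinates of opposite sign, $N$ drops to some $N'<N$, so $\xi(\mathbf{a}^\pi)$ is a minor of $g_{2N'+1}$ while $\xi(\mathbf{a})^\pi$ is a minor of $g_{2N+1}$ --- two \emph{independently chosen} operations. The alternating identity only says that $g_{2N+1}(\ldots,a,a)$ is independent of $a$; it says nothing about how $g_{2N+1}$ relates to $g_{2N'+1}$. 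Concretely, for $\mathbf{a}=(2,-1)$ and $\pi\colon[2]\to[1]$ collapsing both coordinates, your $\xi$ needs $g_3(x,x,x)=g_1(x)$, which the hypotheses do not supply. Your step ``replaced by an arbitrary pair $(x',x')$ \ldots and then absorbed into the remaining blocks'' never changes the arity of $g$, so it cannot bridge the mismatch between $g_{2N+1}$ and $g_{2N'+1}$. To repair this you must either arrange the family $\{g_m\}$ to be \emph{coherent} --- each lower-arity $g$ equal to the diagonal minor of each higher-arity one, which requires first producing such a family (e.g.\ deriving the lower-arity operations from a single high-arity one at each finite stage and then gluing the stages by a compactness argument) --- or avoid the pointwise construction of $\xi$ altogether; the mere existence of alternating operations of every odd arity does not by itself determine a consistent $\xi$.
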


\begin{theorem}[\cite{BGWZ}]\label{thm:blpaip}
  $\PCSP(\A,\B)$ is solved by $\BLPAIP$ if and only if $\Pol(\A,\B)$ contains 2-block-symmetric operations of all odd arities.
\end{theorem}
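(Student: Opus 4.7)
By the polymorphism characterisations of Theorems~\ref{thm:aip} and~\ref{thm:blpaip}, the theorem reduces to the following statement about $\Pol(\A,\B)$: it contains 2-block-symmetric operations of all odd arities if and only if it contains alternating operations of all odd arities. The ``if'' direction is immediate because every alternating operation is by definition 2-block-symmetric; this corresponds to the always-true implication that $\AIP$-solvability implies $\BLPAIP$-solvability. The content of the theorem is therefore the ``only if'' direction: assuming 2-block-symmetric polymorphisms of all odd arities, I must construct alternating polymorphisms of all odd arities, leveraging the balancedness of $R^\A$.

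Fix a balancing matrix $M$ for $R^\A$ of dimensions $r \times N$, so every column of $M$ lies in $R^\A$ and all $r$ rows of $M$ represent the same multiset $S$ of size $N$. Given a target odd arity $2n+1$, the plan is to take a 2-block-symmetric polymorphism $f$ of a suitably large odd arity and define $g \colon A^{2n+1} \to B$ by feeding the inputs $a_1,\ldots,a_{2n+1}$ to $f$ along with ``padding'' drawn from multiple copies of rows of $M$, allocating equal amounts of padding to both the odd and even blocks of $f$. That $g$ is a polymorphism of $(\A,\B)$ follows by the polymorphism property of $f$: for any $r$ inputs of length $2n+1$ whose columns lie in $R^\A$, the padded tuples fed to $f$ still have all columns in $R^\A$ (the original columns by hypothesis, the padding columns because they are columns of $M$), so the $r$ outputs of $f$ form a tuple in $R^\B$.

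The alternating identity $g(a_1,\ldots,a_{2n-1},a,a) = g(a_1,\ldots,a_{2n-1},a',a')$ is the technical core and is where the row-permutation property of $M$ must enter. The idea is to build a polymorphism witness --- a matrix whose $r$ rows are designed inputs to $f$, with all columns in $R^\A$ --- in which the first row yields the left-hand side of the desired identity and some later row yields the right-hand side, using the fact that the rows of $M$ represent the same multiset to ``transport'' the swap $a \leftrightarrow a'$ through the 2-block-symmetric invariances of $f$. Chaining the resulting equalities from $f$'s polymorphism property with parity-preserving rearrangements of inputs (allowed by 2-block-symmetry) then produces the alternating identity for $g$.

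The main obstacle will be orchestrating this witness so that parity is preserved throughout: the padding within $f$'s input must be arranged so that the 2-block-symmetric permutations used to link successive $f$-values act within the odd and even blocks separately, and so that the two coordinates carrying $(a,a)$ versus $(a',a')$ are simultaneously tracked --- one in each block --- as the multiset-equal rows of $M$ shuffle entries between positions. Getting the parity counts right when distributing $K$ copies of each row of $M$ across the two blocks is the technical heart of the argument; once the gadget is designed, both the polymorphism property of $g$ and the alternating identity follow by the bookkeeping outlined above, completing the proof via Theorem~\ref{thm:aip}.
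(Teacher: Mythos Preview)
You have misidentified the statement. Theorem~\ref{thm:blpaip} is a cited result from~\cite{BGWZ}: it is the general characterisation of the power of $\BLPAIP$ in terms of 2-block-symmetric polymorphisms, stated for \emph{arbitrary} templates $(\A,\B)$ with no balancedness hypothesis. The paper does not prove it; it is quoted as background.

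Your proposal is instead a sketch of Theorem~\ref{thm:collapse} (the collapse $\BLPAIP = \AIP$ for templates with a single balanced relation). Two symptoms make this clear. First, your very first sentence invokes Theorem~\ref{thm:blpaip} as an ingredient, which would be circular if you were actually proving Theorem~\ref{thm:blpaip}. Second, you repeatedly appeal to ``the balancedness of $R^\A$'' and to a balancing matrix $M$, neither of which appears in the hypotheses of Theorem~\ref{thm:blpaip}; without balancedness the claimed equivalence of 2-block-symmetric and alternating polymorphisms is simply false (see Remarks~\ref{rem:two} and~\ref{rem:one}).

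If your intent was Theorem~\ref{thm:collapse}, then your high-level plan --- pad a large 2-block-symmetric polymorphism with copies of the rows of the balancing matrix to manufacture an alternating polymorphism of the desired arity --- is in the same spirit as the paper's proof. However, the paper does not attempt to verify the alternating identity directly by a witness-matrix argument as you propose; instead it passes through frequency vectors. It first abstracts a 2-block-symmetric polymorphism of arity $2kN+1$ into a function $f:S_{kN}\times S_{kN+1}\to B$ on pairs of frequency vectors (Lemma~\ref{lem:altSym}), then defines $g(x)=f(kc,\,x+kc)$ where $c$ is the common row-frequency vector of the balancing matrix, and uses the inclusion $(k+1)\overline{R}-k\overline{R}+k\sum_i\overline{t_i}\subseteq(kN+1)\overline{R}$ (Lemma~\ref{lem:setAlgebra}) to check that $g$ satisfies the hypothesis of Lemma~\ref{lem:altAlt}. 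In this formulation the alternating identity is automatic because $g$ factors through the signed frequency vector $\sum_{\text{odd}}\overline{x_i}-\sum_{\text{even}}\overline{x_i}$; there is no parity bookkeeping of the kind you anticipate as ``the technical heart''. Your direct approach may be workable, but as written it remains a plan rather than a proof: the crucial step --- designing the witness matrix so that 2-block-symmetry transports $(a,a)$ to $(a',a')$ --- is asserted, not carried out.
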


We now define the notion of \emph{finite tractability}~\cite{BBKO21,AB21}. We say that $\PCSP(\A, \B)$ is finitely tractable if $\A \to \mathbf{E} \to \B$ for some finite structure $\mathbf{E}$ and $\CSP(\mathbf{E})$ is tractable.
For a group $G$, we use the standard notation $H\lhd G$ to indicate that $H$ is a normal subgroup of $G$.

\begin{lemma}\label{lem:sandwich}
Suppose $\A \to \mathbf{E} \to \B$, where $E = G$ for some finite Abelian group $(G,
  +)$, and each relation of $\mathbf{E}$ is either of the form (i) $c + H$ for
  some $r \in \mathbb{N}$, $c \in G^r$ and $H \lhd G^r$, or (ii) empty. 
  Then, $\PCSP(\A, \B)$ is solvable in polynomial time by $\AIP$ and is finitely tractable.
\end{lemma}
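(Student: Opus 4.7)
The plan is twofold: I will exhibit alternating polymorphisms of all odd arities (so that $\AIP$ solves $\PCSP(\A, \B)$ by Theorem~\ref{thm:aip}), and I will show that $\CSP(\mathbf{E})$ is tractable (so that the already-given sandwich $\A \to \mathbf{E} \to \B$ witnesses finite tractability).

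For the polymorphism construction, fix homomorphisms $\phi : \A \to \mathbf{E}$ and $\psi : \mathbf{E} \to \B$. For each odd arity $2n+1$, I will define
\[
f(a_1, \ldots, a_{2n+1}) \;=\; \psi\!\left(\sum_{i \text{ odd}} \phi(a_i) \;-\; \sum_{i \text{ even}} \phi(a_i)\right),
\]
with all sums taken in the abelian group $G$. This operation is $2$-block-symmetric because the sum does not depend on the order within each parity class, and it is alternating because the last two positions have opposite parity, so setting $a_{2n} = a_{2n+1} = a$ contributes $-\phi(a) + \phi(a) = 0$, making the value independent of $a$. To verify the polymorphism property for a relation $R$ of arity $r$, take vectors $x^1, \ldots, x^r \in A^{2n+1}$ whose $j$th coordinates form a tuple in $R^\A$. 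Applying $\phi$ columnwise, each $v_j := (\phi(x^1_j), \ldots, \phi(x^r_j))$ lies in $R^\mathbf{E} = c + H$; write $v_j = c + h_j$ with $h_j \in H$. The alternating column sum combines $n+1$ copies of $c$ with sign $+$ and $n$ copies with sign $-$, totalling $c$, plus an alternating sum of the $h_j$'s which lies in $H$; the result therefore lies in $c + H = R^\mathbf{E}$, and applying $\psi$ coordinatewise yields a tuple in $R^\B$.

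For tractability of $\CSP(\mathbf{E})$, I will observe that $\mathbf{E}$ admits the Maltsev polymorphism $m(x, y, z) = x - y + z$: for any coset $c + H$ and elements $c + h_1, c + h_2, c + h_3$ in it, $m$ returns $c + (h_1 - h_2 + h_3) \in c + H$, and empty relations are preserved trivially. Since $\mathbf{E}$ is finite (per the standing convention of the paper) and admits a Maltsev polymorphism, $\CSP(\mathbf{E})$ is tractable by the classical Bulatov--Dalmau theorem; equivalently, the relations of $\mathbf{E}$ encode systems of linear equations over the finite abelian group $G$, which are solvable in polynomial time via Smith normal form.

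The only mildly delicate point is the parity bookkeeping in the polymorphism construction: the alternating column sum lands back in the coset $c + H$ precisely because an odd arity provides exactly one extra plus sign, so the $c$-contributions collapse to a single $c$ rather than to $0$ or $2c$. This is why alternating polymorphisms can be built for all odd arities but not even ones, and is the reason Theorem~\ref{thm:aip} is phrased as it is. I do not anticipate any further substantial obstacle.
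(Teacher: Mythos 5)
Your proposal is correct and uses essentially the same core idea as the paper: the alternating affine combination $\sum_{\text{odd}} - \sum_{\text{even}}$ over the abelian group $G$. The paper constructs this directly as an alternating polymorphism of $\mathbf{E}$ (from which solvability of $\CSP(\mathbf{E})$ by $\AIP$, and hence both claims about $\PCSP(\A,\B)$, follow at once via Theorem~\ref{thm:aip}), whereas you pre- and post-compose with $\phi$ and $\psi$ to land directly in $\Pol(\A,\B)$ and then argue finite tractability separately via the Maltsev polymorphism $x-y+z$ and Bulatov--Dalmau. The latter step is correct but heavier than needed --- the arity-3 alternating polymorphism of $\mathbf{E}$ you already have gives tractability of $\CSP(\mathbf{E})$ through $\AIP$ alone, without invoking Maltsev machinery --- but this is a matter of economy, not correctness.
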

\begin{proof}
The following alternating operation is a polymorphism of $\mathbf{E}$
\[
f(x_1, y_1, \ldots, y_{k}, x_{k+1}) = \sum_{i = 1}^{k + 1} x_i - \sum_{i = 1}^k y_i.
\]
Consider a relation $R^\mathbf{E}$ of $\mathbf{E}$, of the form $c + H$. Consider a matrix of inputs whose columns are $x_1, y_1, \ldots, y_k, x_{k+1} \in R^\mathbf{E}$. In other words, $x_i \in c + H$ and $y_i \in c + H$ for each $x_i, y_i$. Note that the column that results from applying $f$ to the rows of this matrix is just
\[
x_1 - y_1 + \cdots - y_k + x_{k+1} \in (c + H) - (c + H) + \cdots - (c+H) + (c+H) \subseteq c + H
\]
Thus $f$ is an alternating polymorphism of $\E$. It follows that $\CSP(\E)$ is solved by $\AIP$, from whence it follows that $\PCSP(\A, \B)$ is finitely tractable and solved by $\AIP$.
\end{proof}

\paragraph{Minions} 
We will use the theory of minions from~\cite{BBKO21}.
Let $\M$ be a set, where each element $f \in \M$ is assigned an arity $\ar(f)$.
We write $\M^{(n)} = \{ f \in \M \mid \ar(f) = n \}$. Further, let $\M$ be
endowed with, for each $\pi : [n] \to [m]$, a (so-called minor) map $f \mapsto f^\pi : \M^{(n)}
\to \M^{(m)}$ such that, for $\pi : [n] \to [m]$ and $ \sigma : [m] \to [k]$,
and any $f \in \M^{(n)}$ we have ${(f^\pi)}^\sigma = f^{\sigma \circ \pi} \in \M^{(k)}$, and
$f^{\textrm{id}} = f$. Then, $\M$ is called a \emph{minion}.\footnote{A minion is a functor from the skeleton of the category of finite sets to the category of sets.} We often write $f \xrightarrow{\pi} g$ instead of $g = f^\pi$.

Consider two minions $\M, \N$; a \emph{minion homomorphism} is a map $\xi : \M
\to \N$ such that, for any $f \in \M^{(n)}$ and $\pi : [n] \to [m]$, we have
that ${\xi(f)}^\pi = \xi(f^\pi)$.\footnote{Minion homomorphisms are natural transformations.} If such a minion homomorphism exists, we write $\M \to \N$.

Given an $n$-ary operation $f:A^n\to B$
and a map $\pi : [n] \to [m]$, an $m$-ary operation $g:A^m\to B$ is called  a \emph{minor} of $f$ given by the map $\pi$ if \[
    g(x_1, \ldots, x_m) = f(x_{\pi(1)}, \ldots, x_{\pi(n)}).
\]
The polymorphisms $\Pol(\A, \B)$ thus form a minion, where $f^\pi$ is given by the minor of $f$ at $\pi$.

The main hardness theorem that we will use is the
following.\footnote{In~\cite{BBKO21}, this is Theorem~5.21 together with
Lemma~5.11, as detailed after the proof of Theorem~5.21 therein.}

\begin{theorem}[{\cite{BBKO21}}]\label{thm:chain}
Fix constants $m$ and $C$. Take any template $(\A, \B)$. Suppose $\Pol(\A, \B) = \bigcup_{i = 1}^m \M_i$ such that for every $i \in [m]$ there exists a map $I_i$ that takes $f \in \M_i$ to a subset of $[\ar(f)]$ of size at most $C$ such that the following holds: For any $f, g
  \in \M_i, \pi : [\ar(f)] \to [\ar(g)]$ such that $g = f^\pi$ we have that $I_i(g) \cap \pi(I_i(f)) \neq
  \emptyset$. Then, $\PCSP(\A, \B)$ is \NP-hard.
\end{theorem}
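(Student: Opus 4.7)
I plan to reduce from Gap Label Cover with projection constraints over alphabet size $L$, which is NP-hard at any constant soundness $\epsilon>0$ by the PCP theorem. Given a Label Cover instance on vertices $V$ with projection $\pi_e:[L]\to[L]$ for each edge $e=(u,v)$, I construct an instance $\X$ of $\PCSP(\A,\B)$ via the \emph{free structure} gadget of~\cite{BBKO21}: for each $v\in V$, take a disjoint copy of the $L$-th tensor power of $\A$ (universe $A^L$, relations coordinatewise); for each edge $e=(u,v)$, identify the element $(a_1,\ldots,a_L)$ of the $u$-block with the element $(a_{\pi_e(1)},\ldots,a_{\pi_e(L)})$ of the $v$-block. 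The standard minion correspondence then ensures that any homomorphism $h:\X\to\B$ induces $f_v\in\Pol^{(L)}(\A,\B)$ for each $v$ with $f_u=f_v^{\pi_e}$ on every edge.

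For completeness, a satisfying Label Cover labelling $\ell:V\to[L]$ induces $\X\to\A$ via the block dictators $(a_1,\ldots,a_L)\mapsto a_{\ell(v)}$, using that dictators are polymorphisms of $(\A,\A)$ and that the edge identifications are respected by the Label Cover constraints.

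For soundness, suppose $h:\X\to\B$, and let $i(v)\in[m]$ be the unique index with $f_v\in\M_{i(v)}$. I take Label Cover over a graph (for instance the complete graph or a sufficiently dense regular graph) for which every $m$-colouring of $V$ yields a colour class inducing an $\Omega(|E|/m)$ fraction of edges; restricting to such a monochromatic class $i_0$ preserves a constant fraction of edges and still defies satisfiability when the starting soundness $\epsilon$ is small. On this sub-instance, define $S_v:=I_{i_0}(f_v)\subseteq[L]$, a set of size at most $C$. The selector hypothesis, applied to $f_u=f_v^{\pi_e}$ with both $f_u,f_v\in\M_{i_0}$, gives $S_u\cap\pi_e(S_v)\neq\emptyset$ for every retained edge.

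Finally, a random labelling $\ell(v)$ drawn uniformly from $S_v$ satisfies each retained edge with probability at least $1/C^2$, so by averaging some deterministic labelling satisfies an $\Omega(1/(mC^2))$ fraction of the original edges — a fixed constant, contradicting $\epsilon<1/(mC^2)$. The main technical obstacle is twofold: first, the free-structure correspondence (standard in~\cite{BBKO21} but requiring careful verification that block homomorphisms really are $L$-ary polymorphisms and edge identifications really realise minors); and second, controlling the monochromatic-edge fraction under the $m$-part partition, which forces a careful choice of Label Cover variant and is where the $O(1)$ bound on $m$ is used essentially.
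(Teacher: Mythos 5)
The paper does not prove Theorem~\ref{thm:chain}; it cites it from~\cite{BBKO21} (specifically, Theorem~5.21 together with Lemma~5.11 there), so there is no internal proof to compare against. Assessing your proposal on its own terms: the high-level scaffolding — the free-structure gadget over a Label Cover instance, the induced polymorphisms $f_v$ satisfying $f_u = f_v^{\pi_e}$, the selectors $I_i$ giving decoding sets of size at most $C$, and probabilistic decoding to a labelling — is the right shape and is indeed how this kind of hardness result is proved.

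The genuine gap is in how you control the monochromatic-edge fraction. You need a Gap Label Cover variant with the property that \emph{every} $m$-colouring of the vertex set leaves a constant fraction of edges monochromatic, and you propose taking the constraint graph to be complete or a sufficiently dense regular graph. But NP-hardness of Gap Label Cover with projection constraints at arbitrarily small constant soundness is a theorem about \emph{bipartite} instances (PCP theorem plus parallel repetition), and the bipartite case is precisely the one your argument cannot handle: a colouring that puts all of $U$ in one class and all of $V$ in another has zero monochromatic edges. Replacing the constraint graph by a complete or dense non-bipartite one is not bookkeeping — hardness of projection Label Cover on such graphs is not a standard theorem, and one cannot pad an $O(n)$-edge bipartite instance up to $\Theta(n^2)$ edges without either diluting the gap or conjuring $\Theta(n^2)$ nontrivial constraints from nowhere. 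The tool that actually closes this gap (and that~\cite{BBKO21} uses) is \emph{multi-layered} Label Cover with the weak density property: take $\ell \gg m$ layers; by pigeonhole some single colour is heavy (fraction at least $1/m$) in at least $\ell/m$ of the layers, and weak density then guarantees two of those layers have a constant fraction of edges between their heavy-colour subsets, yielding the monochromatic edges you need. Without this (or an equivalent layered/structured PCP), the soundness argument does not go through, and you explicitly flag this step as an unresolved obstacle rather than resolving it.

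A minor point in addition: you write that $i(v)$ is the \emph{unique} index with $f_v \in \M_{i(v)}$, but the $\M_i$ are only assumed to cover $\Pol(\A,\B)$, not to partition it, so $i(v)$ is a choice rather than uniquely determined; this does not affect the argument but should be stated correctly.
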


\section{Additivity and dependency}%
\label{sec:additivityDependency}

In this section we will define two new concepts for a template $(\A, \B)$,
additivity and dependency. Let $f\in\Pol(\A,\B)$ be a polymorphism of $(\A,\B)$.
Intuitively,  additivity constrains the value of $f$ evaluated at two elements
and dependency ensures that $f$ is determined by such evaluations.  
The end goal of this section is to prove the following two theorems.

\begin{theorem}\label{thm:diamAdditivityDependency}
Suppose $\A$ has a symmetric relation $R^\A$ of arity at least 3 with $\diam(A, R^\A) \leq 1$. Then, for any functional $\B$ such that $\A \to \B$, $(\A, \B)$ is additive and dependent.
\end{theorem}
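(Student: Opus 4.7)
The plan is to exploit the diameter-$1$ hypothesis as a source of ``completions'': whatever the precise form of the additivity and dependency axioms in Section~\ref{sec:additivityDependency}, they are motivated (per the intuition given) by relating the value of a polymorphism $f$ at a pair of input tuples, and this is exactly what the diameter-$1$ condition on an $r$-ary relation with $r\geq 3$ delivers.

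The combinatorial backbone I would set up first is the following lemma: for any $n$ and any two inputs $x^1, x^2 \in A^n$, there exist tuples $x^3, \ldots, x^r \in A^n$ such that for every coordinate $i\in[n]$ the column $(x^1_i, x^2_i, x^3_i, \ldots, x^r_i)$ lies in $R^\A$. The construction is pointwise: for each $i$, apply $\diam(A, R^\A)\leq 1$ to the pair $(x^1_i, x^2_i)$ to obtain a tuple in $R^\A$ containing both values, and (using the symmetry of $\A$ that is present in every application of this theorem via Theorem~\ref{thm:dichotomy} and its corollaries) permute it so that $x^1_i, x^2_i$ occupy the first two coordinates; the remaining $r-2$ entries of that tuple define $x^3_i,\ldots,x^r_i$. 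Since $r\geq 3$, there is at least one ``auxiliary'' coordinate to spare.

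Given this completion, for any $f\in\Pol(\A,\B)$ of arity $n$, applying $f$ row-wise yields $(f(x^1), f(x^2), f(x^3), \ldots, f(x^r)) \in R^\B$. Now I would invoke functionality of $\B$: any $r-1$ of these entries uniquely determine the last. In particular $f(x^1)$ is determined by $f(x^2), f(x^3), \ldots, f(x^r)$, which is the ``$f$ is determined by pair-type evaluations'' content of dependency; and the algebraic relation $R^\B$ pinning $f(x^1)$ to $f(x^2)$ (after fixing the auxiliary values) is the pairwise constraint required by additivity. The two target properties therefore reduce to verifying that this completion-plus-functionality mechanism matches the syntactic shapes demanded by the definitions in Section~\ref{sec:additivityDependency}.

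The main obstacle I expect is purely bookkeeping against those bespoke definitions: one must thread the combinatorial completion lemma through the precise indexing of the additivity/dependency conditions, and in particular handle the auxiliary columns $x^3,\ldots,x^r$ so that they play the role required (for instance, by re-applying the same completion trick to them, or by choosing them to be constant tuples when the definitions call for that). A secondary subtlety is the treatment of the case $r>3$, where functionality of $\B$ involves more ``side information'' than dependency/additivity directly mention; this should be absorbed either by collapsing some of the auxiliary columns to constants (using that $\diam\leq 1$ allows repeated use of a fixed anchor element) or by an inductive argument on the auxiliary positions.
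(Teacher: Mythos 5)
There is a genuine gap in the proposed completion lemma, and it is not a bookkeeping issue. You claim that for any $x^1,x^2\in A^n$ there exist $x^3,\ldots,x^r$ so that every column $(x^1_i,x^2_i,\ldots,x^r_i)$ lies in $R^\A$, building each column by taking an edge of $R^\A$ containing both $x^1_i$ and $x^2_i$ and permuting it so that these occupy the first two slots. That construction breaks precisely when $x^1_i=x^2_i$: you then need a tuple of $R^\A$ in which that single value occurs \emph{twice}, and $\diam\leq 1$ gives no such thing. The prototypical example $\inn{1}{3}$ already defeats it: it has diameter $1$ and arity $3\geq 3$, yet no tuple of $\{(1,0,0),(0,1,0),(0,0,1)\}$ contains $1$ in two positions. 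Since additivity is exactly about comparing a characteristic vector of $S\cup T$ against those of $S$ and $T$, coordinates with equal values in both rows (those inside $S$, or outside $S\cup T$) are unavoidable, so this is not a corner case you can push aside.

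The paper copes with this by never trying to determine $f$ at a pair of tuples in a single step: it sets up a formal derivation relation $\vdash$ (Section~\ref{sec:additivityDependency}) whose one-step rule is exactly your matrix-plus-functionality observation, but which also allows chaining and minoring, and then shows $\trip_\A\vdash(x,y,z)$ for all $x,y,z$ (``super-connectedness''). In Lemma~\ref{lem:diamSuperConn}, the problematic case $x=y\neq z$ is handled with a \emph{two-step} derivation through an intermediate tuple $(z,a_3,x)$, arranging the matrix columns so that no column ever pairs an element with itself; the all-distinct case is a single step. Your secondary worry about $r>3$ is genuine too, but it is subsumed by the same machinery: Lemma~\ref{lem:garbageValue} absorbs the extra auxiliary coordinates. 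Finally, even where your one-step completion does apply, it expresses $f(x^1)$ in terms of values at tuples $x^3,\ldots,x^r$ that are not of the pair type required by the definitions of additivity/dependency, so you would still need to iterate — which is what $\vdash$ formalises. In short, the high-level intuition (functionality of $\B$ plus row-completion via $R^\A$) is the right one and is indeed what the paper's rule~4 encodes, but the single-shot completion lemma you base everything on is false, and the actual argument needs the multi-step derivations supplied by the formal system.
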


\begin{theorem}\label{thm:4AdditivityDependency}
Suppose $\A$ has a symmetric relation $R^\A$ of arity 3 or 4 that is connected when viewed as the edge relation of a hypergraph on vertices $A$. Then, for any functional $\B$ such that $\A \to \B$, $(\A, \B)$ is additive and dependent.
\end{theorem}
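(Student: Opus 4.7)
The plan is to reduce the statement to a local analysis on a single hyperedge of $R^\A$, analogous to the diameter-1 setup of Theorem~\ref{thm:diamAdditivityDependency}, and then globalise using the connectivity hypothesis. Let $r \in \{3, 4\}$ denote the arity of $R^\A$.

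First, I would fix an arbitrary polymorphism $f \in \Pol(\A, \B)$ and a hyperedge $e = (a_1, \ldots, a_r) \in R^\A$. Since $\A$ is symmetric and $\B$ is functional, applying $f$ to any $r \times n$ matrix whose columns are tuples from $R^\A$ (in particular, permutations of elements drawn from $e$) yields a tuple of $R^\B$, and functionality forces the last coordinate to be determined by the first $r-1$. Enumerating matrices whose entries come from the subset $\{a_1, \ldots, a_r\}$ should pin down the behaviour of $f$ on that subset, yielding both additivity and dependency restricted to it. This mirrors the local analysis in the proof of Theorem~\ref{thm:diamAdditivityDependency}; the arity restriction $r \in \{3,4\}$ keeps the combinatorics tractable.

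Next, I would globalise the local conditions by induction on hypergraph distance. Given any pair $u, v \in A$, connectedness of $(A, R^\A)$ provides a chain of hyperedges $e_1, \ldots, e_k$ with $u \in e_1$, $v \in e_k$ and $e_i \cap e_{i+1} \neq \emptyset$. Pick overlap vertices $w_i \in e_i \cap e_{i+1}$ and combine the local conditions on $e_i$ and $e_{i+1}$ via $w_i$ to successively derive the required pair conditions for $(u, w_1), (u, w_2), \ldots, (u, v)$. Because the arity is at least $3$, each hyperedge contains at least two non-overlap vertices beyond $w_i$, leaving enough room to carry both the additive identity and the rigidity statement of dependency through the chain. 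A clean bookkeeping device would be to define an equivalence-class decomposition on $A$ induced by $f$ and show inductively that it respects the two properties along the chain.

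The main obstacle I anticipate is the arity $4$ case. With $r = 4$, the functional constraint of $\B$ leaves one degree of freedom in each row of the above matrices, so the local enumeration on a single hyperedge has substantially more cases than for $r = 3$, and the chaining step must simultaneously preserve additivity (a positional identity involving pairs) and dependency (the assertion that pair-values determine $f$ uniquely). Handling this will likely require a careful case analysis organised by which overlap vertex $w_i$ is used and how its orbit under the symmetric action on the hyperedge interacts with the functional relation of $\B$; the degenerate situation where several consecutive hyperedges share more than one vertex needs to be treated separately to avoid double-counting.
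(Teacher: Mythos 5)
Your proposal captures the right overall shape at a bird's-eye level (local combinatorics on a single hyperedge, then propagate along a chain of overlapping hyperedges), but it stops at that level and misses the technical machinery that makes the paper's argument go through. Several concrete gaps:

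\begin{itemize}
\item \emph{Uniformity over polymorphisms.} You begin by fixing a polymorphism $f$ and proposing to ``pin down the behaviour of $f$.'' But additivity asks for a \emph{single} operator $+ : B^{A^2} \times B^{A^2} \to B^{A^2}$, and dependency for a \emph{single} map $h$, that work simultaneously for every polymorphism of every arity. Analysing one fixed $f$ gives no uniformity. The paper resolves this by introducing a semantic entailment $S \vDash_{\A,\B} t$ whose witness $h$ is quantified outside the polymorphism, and a purely syntactic calculus $S \vdash_{\A, R^\A} t$ (depending only on $\A$) that is sound for $\vDash$ (Lemma~\ref{lem:syntaxSemantics}). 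Your plan has no analogue of this device, and without it the ``combine local conditions via $w_i$'' step is not a rigorous chaining mechanism.

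\item \emph{The target of the chaining is wrong.} You propose to derive ``pair conditions'' $(u, w_1), (u, w_2), \ldots$ along a chain. But what is actually needed is $\trip_\A \vdash (x,y,z)$ for arbitrary triples (the \emph{super-connectedness} of Section~\ref{sec:additivityDependency}); additivity then follows via Lemma~\ref{lem:additivitySufficiency} and dependency via the further derivation $\Delta_\A^a \vdash (1,\ldots,a)$ (Proposition~\ref{prop:An}), which involves all of $A$ at once, not just pairs. A pairwise propagation along a path does not by itself yield the triple-level entailments the definitions require.

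\item \emph{The arity-4 combinatorics are not resolved.} You correctly flag $r=4$ as the hard case but leave it open. In the paper, Lemma~\ref{lem:4superConn} requires a two-dimensional induction on $\minmax(\dist(x,z),\dist(y,z))$ under the lexicographic order, four separate cases, and bespoke $4 \times 3$ matrix constructions that simultaneously advance along edges from both $x$ and $z$ and two steps along the path from $y$; the case $\dist(y,z) > \dist(x,z)=1$ in particular needs a second edge out of $y'$ and fails if treated as in the arity-3 lemma. ``Careful case analysis organised by which overlap vertex $w_i$ is used'' gestures at this but does not supply it.

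\item \emph{The ``equivalence-class decomposition on $A$ induced by $f$'' has no counterpart in the actual proof}, and it is unclear what invariant it would maintain through the induction; in any case, such a device would again be $f$-dependent and thus face the uniformity problem above.
\end{itemize}

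In short: the high-level reduction to a connectivity-driven induction is in the right spirit, but the essential ideas --- a sound syntactic proof system independent of $\B$ and the polymorphisms, the reduction of both additivity and dependency to the single condition of super-connectedness, and the concrete matrix witnesses (especially for arity 4) --- are all absent, so the proposal does not constitute a proof.
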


Throughout this section, we will assume implicitly that $\A, \B$ are symmetric.

\subsection{Additivity}

Consider a polymorphism $f \in \Pol^{(n)}(\A, \B)$. For any $( i, j ) \in A^2$ (including the case where $i = j$), we can define a function $f_{ij}$ derived from $f$ in the following way: $f_{ij} : 2^{[n]} \to B$ is a function where 
\[
f_{ij}(S) = f(x_1, \ldots, x_n)
\]
where $x_k = j$ if $k \in S$ and $x_k = i$ otherwise. In other words, $f_{ij}(S)$ is $f$ evaluated at the characteristic vector of $S$, where $j$ indicates membership in $S$ and $i$ indicates non-membership. (Equivalently, let $\pi_S : [n] \to [2]$ be given by $\pi_S(k) = 2$ if $k \in S$ and $\pi_S(k) = 1$ otherwise. Then, $f_{ij}(S) = f^{\pi_S}(i, j)$.)

We define $f^p : 2^{[n]} \to B^{A^2}$ to be the function\footnote{The superscript ``\textit{p}'' indicates the word ``pair''.}
\[
f^p(S)(i, j) = f_{ij}(S).
\]

We will be interested in templates $(\A, \B)$ that have the following property.

\begin{definition}\label{def:additivity}
    Consider a template $(\A, \B)$ with $\A$ symmetric and $\B$ functional. We say that $(\A, \B)$ is \emph{additive} if there exists an operator $+ : B^{A^2} \times B^{A^2} \to B^{A^2}$ such that, for any $f \in \Pol^{(n)}(\A, \B)$ and disjoint $S, T \subseteq [n]$ we have
    \[
        f^p(S) + f^p(T) = f^p(S \cup T).
    \]
\end{definition}

\begin{lemma}\label{lem:add-sub}
    If $(\A, \B)$ is additive, there exists an operator $- : B^{A^2} \times B^{A^2} \to B^{A^2}$ such that, for any $f \in \Pol^{(n)}(\A, \B)$ and $S \subseteq T \subseteq [n]$, we have
    \[
    f^p(T) - f^p(S) = f^p(T \setminus S).
    \]
\end{lemma}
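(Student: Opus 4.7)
The plan is to invert the additivity equation. Applied to the disjoint decomposition $T = S \sqcup (T \setminus S)$, the additivity hypothesis gives
\[
f^p(S) + f^p(T \setminus S) = f^p(T),
\]
so $f^p(T \setminus S)$ is already a solution $z \in B^{A^2}$ to $y + z = x$ for $y = f^p(S)$ and $x = f^p(T)$. The task is therefore to define $-$ so that, on such pairs, it picks out precisely this solution.

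Concretely, I would define $-$ by cases. If $(x, y)$ is of the form $(f^p(T), f^p(S))$ for some polymorphism $f \in \Pol(\A, \B)$ and some $S \subseteq T \subseteq [\ar(f)]$, set $x - y := f^p(T \setminus S)$, using any such witness; otherwise, set $x - y$ arbitrarily (say $x - y := x$). By construction, the desired equation $f^p(T) - f^p(S) = f^p(T \setminus S)$ then holds for every triple $(f, S, T)$ that is itself used as a witness.

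The main obstacle is well-definedness across witnesses: if the same pair $(x, y) \in B^{A^2} \times B^{A^2}$ admits two representations $(f^p(T), f^p(S))$ and $(g^p(T'), g^p(S'))$, then one needs $f^p(T \setminus S) = g^p(T' \setminus S')$. By additivity, both are solutions $z$ to $y + z = x$, so this reduces to a left-cancellation property for $+$ on the image set $\{h^p(U) : h \in \Pol(\A, \B),\, U \subseteq [\ar(h)]\}$. To establish such a cancellation I would merge $f$ and $g$ into a single polymorphism of larger arity by padding with dummy coordinates and taking appropriate minors, so that both disjoint-union decompositions become visible inside one common polymorphism; then additivity applied inside this merged polymorphism can be used to force the two candidates for $x - y$ to coincide. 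This step is the main technical hurdle, since outside the image of a single polymorphism the operator $+$ carries essentially no a priori algebraic structure, so cancellation must be bootstrapped from the disjoint-union law alone.
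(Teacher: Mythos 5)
Your approach is genuinely different from the paper's and, as written, has a gap precisely at the step you flag as the main hurdle. You correctly note that defining $x-y$ by picking a witness $(f,S,T)$ with $(x,y)=(f^p(T),f^p(S))$ and setting $x-y:=f^p(T\setminus S)$ requires a well-definedness argument: different witnesses $(f,S,T)$ and $(g,S',T')$ for the same pair $(x,y)$ must give the same answer. You reduce this to a left-cancellation property for $+$ and propose to get it by ``merging'' $f$ and $g$ into one polymorphism of larger arity so that both decompositions become visible. That merging step is not substantiated and does not obviously exist. The minion structure only lets you take \emph{minors} of a given polymorphism (identify or permute coordinates via $\pi:[n]\to[m]$), and it provides no operation that takes two unrelated polymorphisms $f\in\Pol^{(n)}$, $g\in\Pol^{(m)}$ and produces a single polymorphism having both as minors. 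Padding with dummy coordinates gives you two separate polymorphisms $\tilde f,\tilde g$ of arity $n+m$, but still no common ancestor, so the cancellation you need is not forced by additivity alone, and the proof does not close.

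The paper avoids the entire well-definedness question by giving an explicit formula for $-$ in terms of the given $+$. The key observation is that if you view $f^p(S)\in B^{A^2}$ as an $|A|\times|A|$ matrix, then transposition corresponds to complementation of the index set: $(f^p(S))^T(i,j)=f_{ji}(S)=f_{ij}(\overline S)=f^p(\overline S)(i,j)$, so $(f^p(S))^T=f^p(\overline S)$. Defining $x-y:=(x^T+y)^T$ then gives, for $S\subseteq T$,
\[
f^p(T)-f^p(S)=\bigl((f^p(T))^T+f^p(S)\bigr)^T
= \bigl(f^p(\overline T)+f^p(S)\bigr)^T
= f^p(\overline T\cup S)^T
= f^p(T\setminus S),
\]
where additivity applies because $\overline T$ and $S$ are disjoint. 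This buys a fully explicit operator, no case split, and no cancellation lemma. If you want to salvage your approach, you would need to actually prove the cancellation property, and I do not see a route to it that is simpler than the transpose trick.
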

\begin{proof}
While $f^p(S)$ has been written as a function from $A^2$ to $B$ above, we can also see it as an $|A| \times |A|$ matrix of elements of $B$. Thus we can take the transpose of this matrix, denoted by the superscript $T$ below.
Observe that ${(f^p(S))}^T(i,j) = f_{ji}(S) = f_{ij}(\overline{S}) = f^p(\overline{S})(i, j)$, where $\overline{S}$ denotes the complement of $S$. In other words, ${(f^p(S))}^T = f^p(\overline{S})$.

    Set $x - y = {(x^T + y)}^T$. Then for $S\subseteq T \subseteq [n]$,
    \[
    f^p(T) - f^p(S) = {({f^p(T)}^T + f^p(S))}^T = f^p(\overline{\overline{T} \cup S}) = f^p(T \setminus S).\qedhere
    \]
\end{proof}

\begin{lemma}\label{lem:genZero}
Suppose $(\A, \B)$ is additive. Consider a polymorphism $f \in \Pol^{(n)}(\A, \B)$. Consider any family of disjoint sets $\mathcal{A} \subseteq 2^{[n]}$, containing at least $|B|^{|A|^2}$ sets.
Then some nonempty subset $\mathcal{B} \subseteq \mathcal{A}$ exists such that $f^p(\bigcup \mathcal{B}) = f^p(\emptyset)$.
\end{lemma}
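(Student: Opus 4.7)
The plan is to apply pigeonhole to prefix unions of the disjoint sets in $\mathcal{A}$, and then use Lemma~\ref{lem:add-sub} to translate the collision into the desired identity $f^p(\bigcup \mathcal{B}) = f^p(\emptyset)$.

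First I would enumerate $\mathcal{A} = \{S_1, \ldots, S_m\}$ with $m \geq |B|^{|A|^2}$ and form the prefix unions $T_k = S_1 \cup \cdots \cup S_k$ for $k = 0, 1, \ldots, m$ (so $T_0 = \emptyset$). This gives $m+1$ elements $f^p(T_0), f^p(T_1), \ldots, f^p(T_m)$ of the set $B^{A^2}$, which has exactly $|B|^{|A|^2} \leq m < m+1$ elements. Pigeonhole then yields indices $0 \leq i < j \leq m$ with $f^p(T_i) = f^p(T_j)$.

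Next I would set $\mathcal{B} = \{S_{i+1}, \ldots, S_j\}$, which is nonempty since $i < j$ and which satisfies $\bigcup \mathcal{B} = T_j \setminus T_i$ by disjointness of the $S_k$'s. Since $T_i \subseteq T_j$, Lemma~\ref{lem:add-sub} gives
\[
f^p\bigl(\textstyle\bigcup \mathcal{B}\bigr) = f^p(T_j \setminus T_i) = f^p(T_j) - f^p(T_i).
\]
Using the collision $f^p(T_i) = f^p(T_j)$, the right-hand side becomes $f^p(T_i) - f^p(T_i)$, and it remains to verify that this is $f^p(\emptyset)$.

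The one step that requires a small computation (rather than being pure pigeonhole) is this last identity. Unfolding the definition $x - y = (x^T + y)^T$ introduced in the proof of Lemma~\ref{lem:add-sub}, together with the relation $(f^p(S))^T = f^p(\overline{S})$ observed there, we get
\[
f^p(T_i) - f^p(T_i) = \bigl(f^p(T_i)^T + f^p(T_i)\bigr)^T = \bigl(f^p(\overline{T_i}) + f^p(T_i)\bigr)^T.
\]
Because $\overline{T_i}$ and $T_i$ are disjoint with union $[n]$, additivity rewrites the inner sum as $f^p([n])$, whose transpose is $f^p(\emptyset)$. This completes the proof. The only potential obstacle is this final check, but it is immediate from the algebraic setup established in the preceding lemma.
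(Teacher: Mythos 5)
Your proof is correct and follows essentially the same approach as the paper's: prefix unions, pigeonhole on $B^{A^2}$, and Lemma~\ref{lem:add-sub} to peel off the collision. The one place where you do extra work is the final identity $f^p(T_i) - f^p(T_i) = f^p(\emptyset)$: you unfold the definition of $-$ and use additivity on the complementary pair $T_i, \overline{T_i}$, whereas the paper gets this in one step by applying Lemma~\ref{lem:add-sub} with $S = T = T_i$, giving $f^p(T_i) - f^p(T_i) = f^p(T_i \setminus T_i) = f^p(\emptyset)$ directly. Your verification is also valid, just slightly less economical.
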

The approach used to prove this is analogous to the following well known exercise (first set out by V\'{a}zsonyi and Sved, according to Erd\"{o}s~\cite{thebook}):  
Prove that any sequence of $n$ integers has a subsequence whose sum is divisible by $n$.
\begin{proof}
$\mathcal{A}$ contains at least $|B|^{|A|^2} \geq |\range(f^p)|$
  different sets. Let $A_1, \ldots, A_{|\range(f^p)|}$ be some of these sets.
  Define $B_i = \bigcup_{j \leq i} A_j$ for $0 \leq i \leq |\range(f^p)|$; note
  that $B_0 = \emptyset$. By the pigeonhole principle there exists $0\leq i < j
  \leq |\range(f^p)|$ such that $f^p(B_i) = f^p(B_j)$. Then, using
  \Cref{lem:add-sub}, $f^p(B_j \setminus B_i) = f^p(B_j) - f^p(B_i) = f^p(B_i) - f^p(B_i) = f^p(B_i \setminus B_i) = f^p(\emptyset)$. Thus $\mathcal{B} = \{ A_{i+1}, \ldots, A_j \}$ is the required family of sets.
\end{proof}

\begin{lemma}\label{lem:genSmallSet}
Suppose $(\A, \B)$ is additive. Consider a polymorphism $f \in \Pol^{(n)}(\A, \B)$. Consider any $S \subseteq [n]$. There exists $T \subseteq S$ of size at most $|B|^{|A|^2}$ such that $f^p(S) = f^p(T)$.
\end{lemma}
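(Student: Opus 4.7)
The plan is to reduce $S$ by repeatedly excising ``null'' subsets until what remains is small, using Lemma~\ref{lem:genZero} as the main engine. I would proceed by induction on $|S|$. The base case $|S| \leq |B|^{|A|^2}$ is immediate: take $T = S$.

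Before doing the inductive step, I would record one small but crucial observation: applying the additivity identity of Definition~\ref{def:additivity} to the disjoint pair $(S, \emptyset)$ gives $f^p(S) + f^p(\emptyset) = f^p(S \cup \emptyset) = f^p(S)$, so $f^p(\emptyset)$ is an identity element for $+$ on the image of $f^p$. This is what lets the reduction not leave a residual term behind.

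For the inductive step, assume $|S| > |B|^{|A|^2}$. I would consider the family $\mathcal{A} = \{\{s\} : s \in S\}$ of singletons inside $S$, which is a family of $|S| \geq |B|^{|A|^2}$ pairwise disjoint subsets of $[n]$. Lemma~\ref{lem:genZero} then provides a nonempty subfamily $\mathcal{B} \subseteq \mathcal{A}$ whose union $U = \bigcup \mathcal{B} \subseteq S$ satisfies $f^p(U) = f^p(\emptyset)$. Since $S \setminus U$ and $U$ are disjoint with union $S$, additivity yields
\[
f^p(S) \;=\; f^p(S \setminus U) + f^p(U) \;=\; f^p(S \setminus U) + f^p(\emptyset) \;=\; f^p(S \setminus U).
\]
As $\mathcal{B}$ is nonempty, $|S \setminus U| < |S|$, so the induction hypothesis applied to $S \setminus U$ delivers a subset $T \subseteq S \setminus U \subseteq S$ of size at most $|B|^{|A|^2}$ with $f^p(T) = f^p(S \setminus U) = f^p(S)$, as required.

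There is no real obstacle here beyond noticing that $f^p(\emptyset)$ is the $+$-identity; once that is in hand, the result is a direct application of Lemma~\ref{lem:genZero} together with a shrinking argument. The structure is essentially the standard ``pigeonhole + telescoping'' idea already used in the proof of Lemma~\ref{lem:genZero}, repackaged to strip $S$ down to a bounded-size witness.
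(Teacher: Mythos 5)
Your proof is correct and takes essentially the same route as the paper: both apply Lemma~\ref{lem:genZero} to the family of singletons in $S$ to extract a nonempty $U \subseteq S$ with $f^p(U) = f^p(\emptyset)$, and then strip $U$ away while preserving the value of $f^p$. Your framing as induction on $|S|$ is a trivial rephrasing of the paper's minimal-counterexample argument. The one small stylistic difference is in how the identity $f^p(S) = f^p(S\setminus U)$ is justified: the paper routes through the subtraction operator of Lemma~\ref{lem:add-sub} (computing $f^p(S) - f^p(U) = f^p(S) - f^p(\emptyset) = f^p(S\setminus\emptyset)$), whereas you avoid Lemma~\ref{lem:add-sub} entirely by noting that the pair $(S\setminus U, \emptyset)$ is disjoint, so additivity directly gives $f^p(S\setminus U) + f^p(\emptyset) = f^p(S\setminus U)$, and then $f^p(S) = f^p(S\setminus U) + f^p(U) = f^p(S\setminus U) + f^p(\emptyset) = f^p(S\setminus U)$. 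That is a modest tidying, not a different method.
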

\begin{proof}
  Suppose this is not the case, and suppose that $S$ is a minimal counterexample
  (with respect to inclusion) to this claim. Clearly $|S| > |B|^{|A|^2}$, or else taking $T = S$ shows that $S$ is no counterexample at all. Thus, apply \Cref{lem:genZero} to the family $\{ \{ x \} \mid x \in S \}$ to find that some nonempty subset $U \subseteq S$ exists such that $f^p(U) = f^p(\emptyset)$. But now, take $S' = S \setminus U \subseteq S$, and note that $f^p(S') = f^p(S \setminus U) = f^p(S) - f^p(U) = f^p(S) - f^p(\emptyset) = f^p(S \setminus \emptyset) =f^p(S)$. By the minimality of $S$, $S'$ has a subset $T$ of size at most $|B|^{|A|^2}$ such that $f^p(T) = f^p(S') = f^p(S)$, which contradicts the fact that $S$ is a counterexample.
\end{proof}

\subsection{Dependency}

In the sequel, we will use ``tuple builder'' notation. In other words, for an indexing set $I$, we write
\[
(f(i) \mid i \in I),
\]
to denote the tuple, indexed by $I$, with elements $f(i)$ for every $i \in I$. For instance,
\[
(x^2 \mid x \in [5]) = (1, 4, 9, 16, 25).
\]
For a more complicated example, we have the tuple
\[
(f(x) \mid x \in [5], f : [5] \to [6]),
\]
where the elements are $f(x)$, indexed by $x \in [5]$, and by functions $f : [5] \to [6]$.

This notation is justified since an $n$-ary tuple is just a function from $[n]$ to some base set; our notation  $(f(i) \mid i \in I)$ ``builds'' the tuple equivalent to the function $f$.

\begin{definition}
    We call $(\A, \B)$ \emph{dependent} if there exists a map $h$ such that, for every $f \in \Pol^{(n)}(\A, \B)$, we have that
    \[
    f(x_1, \ldots, x_n) = h( f( \alpha(x_1), \ldots, \alpha(x_n)) \mid \alpha : A \to S, S \in A^{\leq 2}),
    \]
    where $A^{\leq 2}$ is the set of nonempty subsets of $A$ of size at most 2.
\end{definition}

Informally, $(\A, \B)$ is dependent if, for every $f$, we can deduce the value of $f(x_1, \ldots, x_n)$ from the values $f(\alpha(x_1), \ldots, \alpha(x_n))$ for $\alpha$ a function that maps $A$ to a subset of $A$ of size at most 2. One can think of $\alpha$ as ``flattening'' the values of $x_1, \ldots, x_n$ to only two values. This deduction is encoded by $h$, and must be independent of the choice of $f$.

Any polymorphism $f\in\Pol^{(n)}(\A,\B)$ is just a function $f : A^n \to B$. Without
loss of generality, identify $A$ with the set $[a]$ from now on. Note that a
tuple from $A^n = {[a]}^n$ can be seen as a partition of $[n]$ into $a$ parts $S_1,
\ldots, S_a$: $S_i$ is the set of coordinates in the tuple set to $i$. We will
thus denote by $a^{[n]}$ both the set tuples and the set of such partitions.
Thus we can, for example, evaluate $f$ at a partition $S_1, \ldots, S_a$ of $[n]$ and get $f(S_1, \ldots, S_a)$.

\begin{definition}
    For any polymorphism $f \in \Pol^{(n)}(\A, \B)$, we define $\fr : a^{[n]} \to {(B^{A^2})}^a$ by
    \[
    \fr(S_1, \ldots, S_a) = (f^p(S_1), \ldots, f^p(S_a)).
    \]
\end{definition}

The usefulness of all the concepts introduced so far comes from the following fact. 

\begin{lemma}\label{lem:hExists}
Suppose $(\A, \B)$ is dependent and additive. Then there exists a function $h : {(B^{A^2})}^a \to B$ such that for any polymorphism $f \in \Pol^{(n)}(\A, \B)$, we have $f = h \circ \fr$.
\end{lemma}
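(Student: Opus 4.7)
The plan is to construct $h$ by composing the dependency map with values read off from $(g_1, \ldots, g_a) \in {(B^{A^2})}^a$ via the additivity operator. Let $h_{\mathrm{dep}}$ denote the map guaranteed by dependency: it takes a collection of values indexed by pairs $(\alpha, S)$ with $S \in A^{\leq 2}$ and $\alpha : A \to S$, returns $f(x_1, \ldots, x_n)$, and is by definition independent of $f$ and $n$. So the task reduces to showing that each value $f(\alpha(x_1), \ldots, \alpha(x_n))$ is already a function of $\fr(S_1, \ldots, S_a)$ alone.

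The key observation is this. For any $f \in \Pol^{(n)}(\A, \B)$, any $(x_1, \ldots, x_n) \in A^n$ inducing the partition $S_b = \{k \in [n] : x_k = b\}$, and any $\alpha : A \to \{i, j\}$ (with $i = j$ allowed), one has $f(\alpha(x_1), \ldots, \alpha(x_n)) = f^p(T_\alpha)(i, j)$, where $T_\alpha = \bigcup_{b \in \alpha^{-1}(j)} S_b$; when $i = j$ this specialises to $T_\alpha = [n]$ and $f^p([n])(i, i) = f(i, \ldots, i)$. By an easy induction from the relation $f^p(S) + f^p(T) = f^p(S \cup T)$ for disjoint $S, T$, we get $f^p(T_\alpha) = \sum_{b \in \alpha^{-1}(j)} f^p(S_b)$, the sum computed under any fixed ordering of $[a]$. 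Thus, fixing in advance an ordering of $[a]$ and, for each $S \in A^{\leq 2}$, a labelling $S = \{i, j\}$, and setting
\[
v_{\alpha, S}(g_1, \ldots, g_a) := \Bigl(\sum_{b \in \alpha^{-1}(j)} g_b\Bigr)(i, j),
\]
we obtain the identity $v_{\alpha, S}(\fr(S_1, \ldots, S_a)) = f(\alpha(x_1), \ldots, \alpha(x_n))$.

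I would then define $h(g) := h_{\mathrm{dep}}\bigl((v_{\alpha, S}(g))_{\alpha, S}\bigr)$. The equality $f = h \circ \fr$ is then immediate from dependency. Crucially, $h$ is independent of $f$ and $n$ because the indexing set of pairs $(\alpha, S)$ depends only on $A$, and neither $h_{\mathrm{dep}}$ nor the maps $v_{\alpha, S}$ reference $f$ or $n$. The main point requiring care is that iterated applications of $+$ to summands of the form $f^p(S_b)$ for pairwise disjoint $S_b$ are unambiguous: although $+$ is not assumed associative or commutative in general, the relation $f^p(S) + f^p(T) = f^p(S \cup T)$ forces by induction that every parenthesisation and ordering of such a sum yields $f^p(\bigcup S_b)$. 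Hence the ordering chosen in the definition of $v_{\alpha, S}$ is immaterial on inputs of the form $\fr(S_1, \ldots, S_a)$, which is all that is needed.
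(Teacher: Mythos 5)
Your proposal is correct and follows essentially the same approach as the paper: apply the dependency map to the collection of values $f(\alpha(x))$, and observe that each such value equals $f^p(T_\alpha)(i,j)$ for a set $T_\alpha$ that is a disjoint union of parts $S_b$, so that additivity lets one read it off from $\fr(S_1,\ldots,S_a)$ alone. You are in fact slightly more careful than the paper in noting that the operator $+$ need not be associative or commutative in general, but that the identity $f^p(S)+f^p(T)=f^p(S\cup T)$ makes the iterated sum well-defined on inputs in the range of $\fr$; the one minor edge case you leave implicit, as does the paper, is $\alpha^{-1}(j)=\emptyset$, which reduces to reading off $f(i,\ldots,i)=g_1(i,i)$ directly.
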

In the proof we will freely mix the two notations for functions of form ${[a]}^n \to B$. If the input to such functions is written with capital letters it uses the ``families of disjoint sets'' notation, and if it is written with lower case letters it uses the tuple notation.
\begin{proof}
    We claim that for any $\alpha : A \to S, S \in A^{\leq 2}$ we can deduce $f(\alpha(x_1), \ldots, \alpha(x_n))$ from the values of $\fr(x_1, \ldots, x_n)$ in a manner independent of $f$. If this is the case, then by running this deduction for each $\alpha$, there exists some map $h'$ such that for every $f$ and $x_1, \ldots, x_n$ we have
    \[
    h'(\fr(x_1, \ldots, x_n)) = ( f(\alpha(x_1), \ldots, \alpha(x_n)) \mid \alpha : A \to S, S \in A^{\leq 2}).
    \]
    Hence by the dependency of $(\A, \B)$ there exists a map $h''$ such that $f = h'' \circ h' \circ \fr$, which implies our conclusion.

    First suppose $S = \{s\}$.
    Then we want to find $f(s, \ldots, s)$. But this value is just $f_{ss}(X)$ for any $X$, and thus is included in every $f^p(S_i)$ that is within $\fr(x_1, \ldots, x_n)$.
    
    Now suppose $S = \{b, c\}$, and that $A = B \cup C$ such that $\alpha$ maps $B$ to $b$ and $C$ to $c$. In this case we observe that $(\alpha(x_1), \ldots, \alpha(x_n))$ contains a $c$ at all places where $(x_1, \ldots, x_n)$ contained an element in $C$, and contains a $b$ elsewhere. Suppose $S_1, \ldots, S_a$ are sets such that $S_i$ contains all the indices where $(x_1, \ldots, x_n)$ is equal to $i$. 
    Thus, $f(\alpha(x_1), \ldots, \alpha(x_n)) = f_{bc}\left(\bigcup_{i \in C} S_i\right)$. This value is an element of the tuple $f^p\left(\bigcup_{i \in C} S_i\right)$, so it is sufficient to show that we can deduce this latter value from $\fr(x_1, \ldots, x_n) = \fr(S_1, \ldots, S_a)$. But
    \[
    f^p\left(\bigcup_{i \in C}S_i\right) = \sum_{i \in C} f^p(S_i).
    \]
    All of the elements within this sum are contained within $\fr(S_1, \ldots, S_a) = (f^p(S_1), \ldots, f^p(S_a))$. Thus we can deduce the value of $f(\alpha(x_1), \ldots, \alpha(x_n))$ uniquely from $\fr(x_1, \ldots, x_n)$, without reference to $f$, as required.
\end{proof}

\subsection{Formal system}

We will now describe a system of formal proofs that will help us reason about
additivity and dependency. First we define the semantics of this system.

\begin{definition}
    Suppose $S \subseteq A^n$ is a set of $n$-ary tuples, and $t \in A^n$ is an
    $n$-tuple. We write $S \vDash_{\A, \B} t$ (omitting $\A, \B$ if there is no chance for confusion) if there exists a function $h : B^S \to B$ (where we interpret $B^S$ as being a tuple indexed by $S$), such that, for any polymorphism $f \in \Pol^{(n)}(\A, \B)$, we have that
    \[
    f(t) = h(f(u) \mid u \in S).
    \]
    In other words, we write $S \vDash t$ if the value of $f(t)$ is uniquely determined by $f(u)$ for $u \in S$, in a manner independent of $f$.
\end{definition}

\begin{definition}
For any $\A$, we write $\trip_\A = \{ (r,s, s) \mid s, r \in A\} \cup \{ (s, r,
  s) \mid s, r \in A \}$ for the set of triples from $A^3$ with the last two
  elements equal or with the first and last elements equal. Write
  $\Delta_\A^n = \{ (x_1, \ldots, x_n ) \mid x_1, \ldots, x_n \in A, | \{ x_1, \ldots,
  x_n \}| \leq 2\}$ i.e.~$\Delta_\A^n$ contains all the $n$-ary tuples that
  contain at most two distinct elements.
\end{definition}

\begin{lemma}\label{lem:additivitySufficiency}
    $(\A, \B)$ is additive if for every $p, q \in A$ we have
    \[
        \trip_\A
        \vDash_{\A, \B} 
        (p, p, q).
    \]
\end{lemma}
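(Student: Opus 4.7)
The plan is to extract additivity from the hypothesis by packaging any pair of disjoint sets $S, T \subseteq [n]$ into the three positions of a ternary polymorphism. Given $f \in \Pol^{(n)}(\A, \B)$ and disjoint $S, T \subseteq [n]$, I would define $g : A^3 \to B$ by
\[
g(a, b, c) = f(z_1, \ldots, z_n),
\]
where $z_k = a$ if $k \in S$, $z_k = b$ if $k \in T$, and $z_k = c$ otherwise. The first thing to check is that $g \in \Pol^{(3)}(\A, \B)$: if one feeds $g$ a matrix whose three columns are tuples $(a^1,\ldots,a^r),(b^1,\ldots,b^r),(c^1,\ldots,c^r) \in R^\A$, then every column of the resulting input to $f$ lies in $R^\A$ (each column equals one of these three tuples depending on whether its index is in $S$, $T$, or neither), so $f$ returns a tuple in $R^\B$, as required.

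Now I would match tuples from $\trip_\A$ with values of $f^p(S)$ and $f^p(T)$. Directly from the definitions, $g(r, s, s) = f_{sr}(S) = f^p(S)(s, r)$ and $g(s, r, s) = f_{sr}(T) = f^p(T)(s, r)$, so every value $g(u)$ with $u \in \trip_\A$ is a coordinate of the pair $(f^p(S), f^p(T)) \in B^{A^2} \times B^{A^2}$. On the other hand, $g(p, p, q) = f_{qp}(S \cup T) = f^p(S \cup T)(q, p)$. Applying the hypothesis $\trip_\A \vDash_{\A,\B} (p, p, q)$ to the ternary polymorphism $g$ yields a map $h_{p,q}$, independent of $f$, such that
\[
f^p(S \cup T)(q, p) = g(p, p, q) = h_{p,q}\bigl(g(u) : u \in \trip_\A\bigr),
\]
and the right-hand side is therefore a fixed function of the pair $(f^p(S), f^p(T))$ depending only on $(p, q)$.

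Assembling these coordinate-wise formulas as $(p, q)$ ranges over $A^2$ produces a candidate operator $+ : B^{A^2} \times B^{A^2} \to B^{A^2}$ satisfying $f^p(S) + f^p(T) = f^p(S \cup T)$ for every $f$ and every disjoint $S, T$. The only subtle point is \emph{well-definedness}: different choices of $(f, S, T)$ could, a priori, yield the same pair $(f^p(S), f^p(T))$ but different values of $f^p(S \cup T)$. This is ruled out precisely by the derivation above, since $f^p(S \cup T)(q, p)$ is expressed as $h_{p,q}$ applied to a tuple of coordinates of $(f^p(S), f^p(T))$, with $h_{p,q}$ independent of $f$; on pairs outside the image of any such $(f^p(S), f^p(T))$ we may define $+$ arbitrarily.

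The main obstacle I anticipate is nothing computational but rather the conceptual move of recognising that the hypothesis, which refers only to arity-$3$ polymorphisms on tuples from $\trip_\A$, transfers to arbitrary-arity polymorphisms through the ternary ``substitution'' polymorphism $g$. Once this construction is in hand, the coordinate identifications and the well-definedness check are essentially bookkeeping.
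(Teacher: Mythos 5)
Your proof is correct and is essentially the paper's own argument: the ternary "substitution" operation $g$ you build from $f, S, T$ is exactly the minor $f^\pi$ with $\pi$ sending $S \mapsto 1$, $T \mapsto 2$, and the rest to $3$, and the coordinate identifications ($g(r,s,s) = f^p(S)(s,r)$, $g(s,r,s) = f^p(T)(s,r)$, $g(p,p,q) = f^p(S\cup T)(q,p)$) match the paper's $t_{xyz}$ bookkeeping precisely. The well-definedness remark you add is a correct (and welcome) clarification of a point the paper handles implicitly by defining $+$ directly via the $h_{qp}$ formula.
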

\begin{proof}
    For $p, q \in A$, suppose $h_{qp}$ is the function that witnesses
        $\trip_\A
        \vDash_{\A, \B} 
        (p, p, q)$. Define, for $X, Y \in B^{A^2}$,
    \[
        (X + Y)(q, p) = h_{qp}(t_{x y z} \mid (x, y, z) \in \trip_\A),
    \]
    where $t_{sss} = X(s, s)$, $t_{r s s} = X(s,r)$ and $t_{srs} = Y(s,r)$, for $s, r \in A$. (This defines $t_{xyz}$ for all $(x, y, z) \in \trip_\A$.)
    
    We claim that this shows the additivity of $(\A, \B)$. Consider any
    polymorphism $f \in \Pol^{(n)}(\A, \B)$ and disjoint $S, T \subseteq [n]$.
    Write $\pi : [n] \to [3]$ for the function that takes $S$ to 1, $T$ to 2, and $[n] \setminus(S \cup T)$ to 3. By the definition of $h_{qp}$ and that of taking minors, we have
    \[
    (f^p(S \cup T))(q, p) = f_{qp}(S \cup T) = f^\pi(p, p, q)
    =h_{qp}(
    f^\pi(x,y,z) \mid (x, y, z) \in \trip_\A).
    \]
    Now, let $t_{xyz} = f^\pi(x, y, z)$. Observe that $t_{sss} = f^\pi(s,s,s) = f^p(S)(s, s)$. Also, $t_{rss} = f^\pi(r, s, s) = f^p(S)(s,r)$, and $t_{srs} = f^\pi(s, r, s) = f^p(T)(s, r)$. Thus we deduce that
    \[
     (f^p(S \cup T))(q, p) =
    h_{qp}(t_{xyz} \mid (x, y, z) \in \trip_\A) 
    = (f^p(S) + f^p(T))(q, p).
    \]
    Thus,
    \[
    f^p(S \cup T) = f^p(S) + f^p(T),
    \]
    as required by \Cref{def:additivity}.
\end{proof}

\begin{lemma}\label{lem:dependentcondition}
    Assuming $A = [a]$, $(\A, \B)$ is dependent if we have
    \[
    \Delta_\A^a \vDash_{\A, \B} (1, \ldots, a).
    \]
\end{lemma}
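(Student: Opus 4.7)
The plan is to reduce an arbitrary evaluation $f(x_1,\ldots,x_n)$ of an $n$-ary polymorphism to an evaluation of some $a$-ary polymorphism at the canonical tuple $(1,\ldots,a)$, at which point the hypothesis applies directly. The key observation is that, given $(x_1,\ldots,x_n) \in A^n = [a]^n$, we can package the tuple as a map $\pi : [n] \to [a]$ with $\pi(k) = x_k$. Taking the corresponding minor $g = f^{\pi}$, we obtain an $a$-ary polymorphism of $(\A,\B)$ satisfying
\[
g(1,2,\ldots,a) = f(\pi(1),\ldots,\pi(n)) = f(x_1,\ldots,x_n).
\]

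Next, I will unwind what happens when we evaluate $g$ at an arbitrary $u = (u_1,\ldots,u_a) \in \Delta_\A^a$. By the definition of minor,
\[
g(u_1,\ldots,u_a) = f(u_{\pi(1)},\ldots,u_{\pi(n)}) = f(u_{x_1},\ldots,u_{x_n}) = f(\alpha_u(x_1),\ldots,\alpha_u(x_n)),
\]
where $\alpha_u : [a] \to A$ is the map $\alpha_u(i) = u_i$. Since $u \in \Delta_\A^a$ has at most two distinct entries, $\alpha_u$ has image $S_u \subseteq A$ with $|S_u| \leq 2$ (and $S_u$ is nonempty because $a \geq 1$), so $(\alpha_u, S_u)$ is one of the pairs indexing the dependency family, and $g(u) = f(\alpha_u(x_1), \ldots, \alpha_u(x_n))$ is one of the values appearing in the definition of dependency.

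Applying the hypothesis $\Delta_\A^a \vDash_{\A,\B} (1,\ldots,a)$ to the $a$-ary polymorphism $g$ yields a single map $h_0 : B^{\Delta_\A^a} \to B$, independent of $g$, such that
\[
f(x_1,\ldots,x_n) = g(1,\ldots,a) = h_0\bigl( g(u) \mid u \in \Delta_\A^a \bigr) = h_0\bigl( f(\alpha_u(x_1),\ldots,\alpha_u(x_n)) \mid u \in \Delta_\A^a \bigr).
\]
To produce the map $h$ required by the definition of dependency, I simply precompose $h_0$ with the projection that selects, from a family indexed by pairs $(\alpha, S)$ with $\alpha : A \to S$ and $S \in A^{\leq 2}$, the subfamily indexed by $(\alpha_u, S_u)$ for $u \in \Delta_\A^a$; this projection depends only on $\A$ and $\B$, not on $f$.

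There isn't a genuine obstacle here: the proof is essentially a bookkeeping argument that a minor in the polymorphism minion realises the restriction from $n$-ary evaluations to $a$-ary evaluations. The only thing to be mildly careful about is the difference between the index set $\Delta_\A^a$ used in the formal system and the index set of pairs $(\alpha, S)$ used in the definition of dependency; as sketched above, the natural surjection $u \mapsto (\alpha_u, S_u)$ lets us translate between the two without difficulty.
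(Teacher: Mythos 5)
Your proof is correct and takes essentially the same route as the paper's: both take the minor $g = f^{\pi}$ along $\pi(k) = x_k$, observe $g(1,\ldots,a) = f(x_1,\ldots,x_n)$ and $g(u) = f(\alpha_u(x_1),\ldots,\alpha_u(x_n))$ for $u \in \Delta_\A^a$, and apply the hypothesis to the $a$-ary polymorphism $g$. The only cosmetic difference is that you spell out the reindexing between $\Delta_\A^a$ and the set of pairs $(\alpha,S)$ via an explicit projection, whereas the paper treats the two index sets as interchangeable by observation.
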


\begin{proof}
    Suppose $h$ witnesses that $\Delta_\A^a \vDash_{\A, \B}(1, \ldots, a)$. Note
    that every tuple in $\Delta_\A^a$ can be seen as a function from $[a]$ to some $S \subseteq A$ for $0 < |S| \leq 2$. In other words $h$ takes a tuple of elements from $B$, indexed by functions $\alpha : A \to S, S \in A^{\leq 2}$. Equivalently, we see that for any $a$-ary polymorphism $f$,
    \[
        f(1, \ldots, a) = h(f(\alpha(1), \ldots, \alpha(a)) \mid \alpha : A \to S, S \in A^{\leq 2}).
    \]
    The type of this function (not coincidentally) is exactly the same as the function that ought to witness the dependency of $(\A, \B)$; and indeed, we claim that it does in fact witness this.

    In other words, we must show that, for any $f \in \Pol^{(n)}(\A, \B)$ and $x_1, \ldots, x_n \in A$ we have that
    \[
    f(x_1, \ldots, x_n) = h(f(\alpha(x_1), \ldots, \alpha(x_n)) \mid A \to S, S \in A^{\leq 2}).
    \]
    Observe that for any $x_1, \ldots, x_n$, there exists a function $\pi : [n] \to [a]$ such that $f(x_1, \ldots, x_n) = f^\pi(1, \ldots, a)$; namely $\pi(i) = x_i$. Furthermore $f^\pi$ is an $a$-ary polymorphism; thus
    \begin{multline*}
        f(x_1, \ldots, x_n)
        =
        f^\pi(1, \ldots, a)
        =
        h( f^\pi( \alpha(1), \ldots, \alpha(a) ) \mid \alpha : A \to S, S \in A^{\leq 2})\\
        =h(f(\alpha(x_1), \ldots, \alpha(x_n)) \mid A \to S, S \in A^{\leq 2}),
    \end{multline*}
    as required.
\end{proof}

We now move on to a syntactic description of the formal proof system.

\begin{definition}
    Fix some $n \in \mathbb{N}$. Fix also some relation $R^\A$ of $\A$. For sets of $n$-ary tuples $S \subseteq A^n$ and $t \in A^n$, we define $S \vdash_{\A,R^\A} t$ as the minimal relation that satisfies the following.
    \begin{enumerate}
        \item If $S \subseteq T$ and $S \vdash t$ then $T \vdash t$.
        \item $t \vdash t$.
        \item If $S \vdash t$ and $t, T \vdash t'$ then $S, T \vdash t'$.
        \item If there exists a matrix with $n$ columns and $r$ rows, whose rows are $t_1, \ldots, t_r$, and whose columns are tuples of $R^\A$, then $t_2, \ldots, t_r \vdash t_1$.
    \end{enumerate}
    We omit $\A, R^\A$ if they are obvious from context.
\end{definition}
\begin{remark}
    Suppose $R^\A$ has arity $r$.
    Any judgement of the form $S \vdash_{\A, R^\A} t$ must have a finite proof using the rules above by minimality. 
    From this proof, we can create a \emph{proof-tree} where the vertices are $n$-ary tuples of $A$, the root is $t$, the leaves belong to $S$, and every non-leaf has $r-1$ children such that if $t_1$ is the non-leaf and $t_2, \ldots, t_r$ are its children, then the matrix whose rows are $t_1, \ldots, t_r$ has as its columns only tuples of $R^\A$. To create this tree, proceed inductively on the proof, from the conclusion backwards. The first and second rule do not modify the proof tree. The third rule corresponds to recursively constructing a subtree. The final rule is the only one that adds new vertices to the proof tree.
\end{remark}

\begin{lemma}\label{lem:syntaxSemantics}
    If $S \vdash_{\A,R^\A} t$ then $S \vDash_{\A, \B} t$ for any symmetric $\A$ and functional $\B$.
\end{lemma}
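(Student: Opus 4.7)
The plan is to induct on the structure of the derivation of $S \vdash_{\A, R^\A} t$, verifying that each of the four inference rules preserves the semantic entailment. Given the minimality in the definition of $\vdash$, it suffices to show (i) $\vDash$ is monotone in the hypothesis set, (ii) $\{t\} \vDash t$, (iii) $\vDash$ is closed under cut, and (iv) the ``polymorphism rule'' is sound, namely that a matrix with rows $t_1, \ldots, t_r$ and columns in $R^\A$ yields $\{t_2, \ldots, t_r\} \vDash t_1$.

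For the three structural rules (i)--(iii) I would just manipulate witness functions: if $h : B^S \to B$ witnesses $S \vDash t$ and $S \subseteq T$, define the witness $B^T \to B$ to be $h$ composed with the restriction map $y \mapsto y|_S$; for reflexivity, the coordinate projection $h(y) = y(t)$ works; for cut, given witnesses $h_1$ for $S \vDash t$ and $h_2$ for $\{t\} \cup T \vDash t'$, the composition $y \mapsto h_2(h_1(y|_S), y|_T)$ witnesses $S \cup T \vDash t'$. None of these steps uses anything about $\A$ or $\B$, so they are essentially bookkeeping.

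The real content is in rule (iv), and here the functionality of $\B$ does the work. For any $f \in \Pol^{(n)}(\A, \B)$, the polymorphism property applied to the given matrix (whose columns lie in $R^\A$) produces the tuple $(f(t_1), \ldots, f(t_r)) \in R^\B$. Because $R^\B$ is functional, the value $f(t_1)$ is uniquely determined by $f(t_2), \ldots, f(t_r)$ via the partial map from $B^{r-1}$ to $B$ induced by $R^\B$ (as defined in the preliminaries). Extending this partial map arbitrarily to a total function $h : B^{\{t_2, \ldots, t_r\}} \to B$ (collapsing duplicated coordinates if some of the $t_i$ coincide) yields a witness that depends only on $R^\B$, not on $f$, which is exactly what the definition of $\vDash$ demands.

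The main (and only) obstacle is to recognise that rule (iv) is precisely where the asymmetry of the system is justified: the ``output position'' can be singled out because $R^\B$ being functional means every position is recoverable from the rest, so the partial recovery function depends only on the relation and not on the particular polymorphism. Symmetry of $\A$ is not strictly needed in the argument for this lemma, but it is consistent with the convention that only the first row of the matrix needs to be distinguished as the conclusion. Beyond this observation, the induction is routine.
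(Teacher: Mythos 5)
Your proof takes essentially the same approach as the paper's: an induction over the four inference rules, where rules (i)--(iii) are handled by routine manipulation of witness functions, and rule (iv) is where functionality of $\B$ is invoked to define the witness $h(x_2,\ldots,x_r) = R^\B(x_2,\ldots,x_r)$ via the partial map induced by $R^\B$. Your closing observation that symmetry of $\A$ is not really used here is accurate (functionality alone guarantees any $r-1$ positions determine the remaining one); the paper tacitly relies on symmetry of $\B$ only to make the notational convention of the partial map --- which outputs the \emph{last} coordinate --- line up with recovering $f(t_1)$ from $f(t_2),\ldots,f(t_r)$, but this is purely cosmetic.
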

\begin{proof}
By minimality, it is sufficient to show that $\vDash_{\A, \B}$ satisfies all the rules satisfied by $\vdash_{\A, R^\A}$, which we do now rule-by-rule. For the following assume always that all tuples are of arity $n$.
\begin{enumerate}
    \item Suppose $S\subseteq T$ and $S \vDash t$. Thus there exists a function $h : A^S \to A$ such that, for any polymorphism $f \in \Pol^{(n)}(\A, \B)$, we have
    \[
        f(t) = h(f(u) \mid u \in S).
    \]
    Now, we define the function $h' : A^T \to A$ as follows:
    \[
    h'(x_u \mid u \in T) = h(x_u \mid u \in S).
    \]
    In other words, $h'$ ignores all inputs in $T \setminus S$ and otherwise acts like $h$. We claim that $h'$ witnesses that $T \vDash t$: for any polymorphism $f \in \Pol^{(n)}(\A, \B)$, we have
    \[
    f(t) = h(f(u) \mid u \in S) = h'(f(u) \mid u \in T).
    \]
    \item For any tuple $t$, we have $t \vDash t$; indeed, the identity function $\id_B : B \to B$ witnesses this fact.
    \item Suppose $S \vDash t$ and $t, T \vDash u$, as witnessed by $h : B^S \to B$ and $h' : B^{t + T} \to B$. If $t \in T$ then $T \vDash u$ and we have by the first rule that $S, T \vDash u$ as required; thus suppose $t \not \in T$. We will thus interpret $h'$ as having signature $h' : B \to B^T \to B$. With this interpretation, by definition, for any polymorphism $ f\in \Pol^{(n)}(\A, \B)$, we have
    \begin{align*}
        f(t) &=  h(f(v) \mid v \in S) \\
        f(u) &=  h'(f(t))(f(v) \mid v \in T).
    \end{align*}
    Now define $h'' : B^{S \cup T} \to B$ in the following way:
    \[
        h''(x_v \mid v \in S \cup T)
        =
        h'( h( x_v \mid v \in S) ) ( x_v \mid v \in T).
    \]
    We find that, for any $f \in \Pol^{(n)}(\A, \B)$, 
    \begin{multline*}
        f(u) = h'(f(t))(f(v) \mid v \in T)
        = h'( h( f(v) \mid v \in S)) ( f(v) \mid v \in T) \\
        = h''( f(v) \mid v \in S \cup T).
    \end{multline*}
    Thus we find that $S, T \vDash u$, as required.
    \item Suppose $R^\A$ has arity $r$, and suppose there exists a matrix with $n$ columns and $r$ rows, whose rows are $t_1, \ldots, t_r$, and whose columns are tuples of $R^\A$. Let $h : B^{\{t_2, \ldots, t_r\}} \to B$ be defined as follows, interpreting an element of $B^{\{t_2, \ldots, t_r\}}$ as an $(r-1)$-ary tuple in the natural way:
    \[
    h(x_2, \ldots, x_r) = R^\B(x_2, \ldots, x_r).
    \]
    Now, we claim that $h$ witnesses that $t_2, \ldots, t_r \vDash t_1$. Indeed, for any $n$-ary polymorphism $f \in \Pol^{(n)}(\A, \B)$, we have that $(f(t_1), \ldots, f(t_r)) \in R^\B$, and thus
    \[
    f(t_1) = R^\B(f(t_2),\ldots,f(t_r)) = h(f(t_2), \ldots, f(t_r)) = h( f(u) \mid u \in \{ t_2, \ldots, t_r\}).
    \]
\end{enumerate}
Thus we conclude that $\vDash_{\A, \B}$ satisfies the rules of $\vdash_{\A, R^\A}$, which implies our conclusion.
\end{proof}

For a relation $R$, we define its support $\supp(R)$ as  the elements that appear in at least one tuple of $R$. Formally,
\[
\supp(R) = \bigcup_{i = 1}^{\ar(R)} \{ a_i \mid (a_1, \ldots, a_{\ar(R)}) \in R \}.
\]

\begin{lemma}\label{lem:garbageValue}
    Suppose that $S \vdash_{\A, R^\A} t$ and $x \in \supp(R^\A)$. Then $S \times \supp(R^\A) \vdash_{\A, R^\A} (t, x)$.\footnote{Here, if $t = (t_1, \ldots, t_r)$ then $(t, x) = (t_1, \ldots, t_r, x)$, and likewise $S\times\supp(R^\A) = \{ (s_1, \ldots, s_r, x) \mid (s_1, \ldots, s_r) \in S, x \in \supp(R^\A)\}$.}
\end{lemma}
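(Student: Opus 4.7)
The plan is to induct on the structure of the derivation of $S \vdash_{\A, R^\A} t$, handling each of the four inference rules defining $\vdash$. The axiom case (Rule 2, $t \vdash t$) is immediate: since $x \in \supp(R^\A)$, the tuple $(t,x)$ already lies in $\{t\} \times \supp(R^\A)$, so Rule 2 together with weakening (Rule 1) yields $\{t\} \times \supp(R^\A) \vdash (t,x)$. The weakening case (Rule 1) is equally direct: the induction hypothesis applied to the smaller premise is weakened once more via Rule 1.

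For the matrix case (Rule 4), suppose $\{t_2, \ldots, t_r\} \vdash t_1$ is derived from an $r \times n$ matrix $M$ with rows $t_1, \ldots, t_r$ and columns in $R^\A$. Since $\A$ is symmetric and $x \in \supp(R^\A)$, we can pick a tuple $(x, y_2, \ldots, y_r) \in R^\A$, necessarily with each $y_i \in \supp(R^\A)$. Appending this as an $(n+1)$-th column to $M$ produces an $r \times (n+1)$ matrix whose rows are $(t_1, x), (t_2, y_2), \ldots, (t_r, y_r)$ and whose columns still lie in $R^\A$. Rule 4 applied to the extended matrix gives $\{(t_2, y_2), \ldots, (t_r, y_r)\} \vdash (t_1, x)$, and since every $(t_i, y_i)$ lies in $\{t_2, \ldots, t_r\} \times \supp(R^\A)$, a final weakening closes this case.

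The main obstacle is the cut rule (Rule 3): if $S \cup T \vdash t'$ is derived from $S \vdash t$ and $\{t\} \cup T \vdash t'$, then applying the induction hypothesis to the second premise with our fixed $x$ yields $(\{t\} \cup T) \times \supp(R^\A) \vdash (t', x)$. The subtlety is that this derivation has $(t, y)$ among its premises for \emph{every} $y \in \supp(R^\A)$, not only for $y=x$, so a single use of the induction hypothesis on $S \vdash t$ does not discharge them all. The resolution is that the induction hypothesis on the first premise is valid for any choice of extension, so we may apply it once for each $y \in \supp(R^\A)$ to obtain $S \times \supp(R^\A) \vdash (t, y)$ for every such $y$. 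Repeated applications of Rule 3, eliminating the premises $(t, y)$ one at a time, then produce $(S \cup T) \times \supp(R^\A) \vdash (t', x)$, completing the induction.
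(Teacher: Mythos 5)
Your proof is correct and follows essentially the same strategy as the paper's: induction on the structure of the derivation of $S \vdash_{\A, R^\A} t$, handling each of the four rules in turn. In particular, your treatment of the cut rule matches the paper's — applying the inductive hypothesis to the first premise once for each $y \in \supp(R^\A)$, then cutting $|\supp(R^\A)|$ times to discharge the side tuples $(t,y)$ — and your matrix case is the same construction (appending a column of $R^\A$ starting with $x$, where symmetry of $\A$ guarantees such a column exists).
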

\begin{proof}
    We show this by induction on the formal proof that proves that $S \vdash_{\A, R^\A} t$. Based on the last step in the proof used to prove this fact, we have the following cases.
    \begin{enumerate}
        \item Suppose $S' \subseteq S$ and $S' \vdash t$. By the inductive hypothesis, we know that that $S' \times \supp(R^\A) \vdash (t, x)$. As $S' \times \supp(R^\A) \subseteq S \times \supp(R^\A)$, then $S \times \supp(R^\A) \vdash (t,x )$ as required.
        \item Suppose $S = \{ t \}$. In this case the result is immediate, as $S \times \supp(R^\A) \supseteq \{ (t, u)\} \vdash (t, u)$.
        \item Suppose $S = X \cup Y$, $X \vdash s$ and $s, Y \vdash t$. By the inductive hypothesis, for $y \in \supp(R^\A)$, $X \times \supp(R^\A) \vdash (s, y)$ and $\{ (s, y) \mid y \in \supp(R^\A) \}, Y \times \supp(R^\A) \vdash (t, x)$. We can therefore deduce, by this rule applied $|\supp(R^\A)|$ times, that $S \times \supp(R^\A) = (X \cup Y) \times \supp(R^\A)\vdash (t, x)$.
        \item Suppose that $S = \{ t_2, \ldots, t_r\}$ and there exists a matrix whose rows are $t, t_2, \ldots, t_r$ and all of whose columns are elements of $R^\A$. Suppose $(x,x_2, \ldots, x_r) \in R^\A$. Then by applying this rule we find that $(t_2, x), \ldots, (t_r, x_r) \vdash (t, x)$, which implies that $S \times \supp(R^\A) \vdash (t, x)$.
    \end{enumerate}
    Thus our conclusion follows.
\end{proof}

Any tuple $x \in A^n$ can be seen as a function from $[n]$ to $A$. Thus for any function $\pi : [m] \to [n]$, we let $x \circ \pi \in A^m$ be the tuple which, at position $i$, has value $x_{\pi(i)}$. For a set of tuples $S \in A^n$, we let $S \circ \pi = \{ x \circ \pi \mid x \in S \}$.

\begin{lemma}\label{lem:inversePoly}
    Suppose $S \subseteq A^n, t \in A^n$ and $\pi : [m] \to [n]$. If $S \vdash_{\A, R^\A} t$ then $S \circ \pi \vdash_{\A, R^\A} t \circ \pi$.
\end{lemma}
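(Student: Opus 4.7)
The plan is to prove this lemma by structural induction on the derivation of $S \vdash_{\A, R^\A} t$, handling each of the four formation rules in turn. The key observation is that precomposing with $\pi$ commutes with all of the operations that the calculus performs on tuples (subset inclusion, matching, transitive composition, and taking rows/columns of a matrix), so the induction is essentially bookkeeping.

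First I would handle the easy three cases. For rule~1 (weakening): if $S \subseteq T$ and $S \vdash t$, the induction hypothesis gives $S \circ \pi \vdash t \circ \pi$; since $S \circ \pi \subseteq T \circ \pi$, another application of rule~1 finishes. For rule~2 (reflexivity): we immediately have $t \circ \pi \vdash t \circ \pi$. For rule~3 (cut): if $S \vdash s$ and $s, T \vdash t$, the induction hypothesis gives $S \circ \pi \vdash s \circ \pi$ and $s \circ \pi, T \circ \pi \vdash t \circ \pi$, so another application of rule~3 yields $S \circ \pi, T \circ \pi \vdash t \circ \pi$, which is $(S \cup T) \circ \pi \vdash t \circ \pi$.

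The main case—and the only one where anything interesting happens—is rule~4. Here we assume there is an $r \times n$ matrix $M$ with rows $t_1, \ldots, t_r$ such that every column of $M$ is a tuple of $R^\A$, yielding $t_2, \ldots, t_r \vdash t_1$. I would consider the $r \times m$ matrix $M'$ with rows $t_1 \circ \pi, \ldots, t_r \circ \pi$. Column $j$ of $M'$ is $((t_1)_{\pi(j)}, \ldots, (t_r)_{\pi(j)})$, which is precisely column $\pi(j)$ of $M$, hence a tuple of $R^\A$. Applying rule~4 to $M'$ gives $t_2 \circ \pi, \ldots, t_r \circ \pi \vdash t_1 \circ \pi$, as required.

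I do not expect any real obstacle; the content is entirely in the rule~4 case, and even there the argument is the single sentence that the columns of the transformed matrix are a sub-multiset of the columns of the original. No use of symmetry of $\A$ or functionality of $\B$ is needed, since this lemma is purely about the syntactic calculus $\vdash_{\A, R^\A}$.
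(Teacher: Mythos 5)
Your proof is correct and takes essentially the same approach as the paper: the paper phrases it as relabelling every vertex of the proof tree by $u \mapsto u \circ \pi$, while you do an explicit structural induction on the derivation, but both hinge on the identical key observation that the columns of the row-composed matrix form a subset of the original columns, so rule~4 applications remain valid. The only cosmetic difference is that the paper's proof-tree formulation lets it skip your (trivial) handling of rules 1--3.
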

\begin{proof}
    First, consider any matrix $M$ whose rows are $r_1, \ldots, r_n$ and whose columns are $c_1, \ldots, c_m$. If we create a matrix $M'$ whose rows are $r_1 \circ \sigma, \ldots, r_n \circ \sigma$ for some $\sigma : [k] \to [m]$, we observe that the columns of this new matrix are $c_{\sigma(1)}, \ldots, c_{\sigma(k)}$. Thus, the columns of $M'$ are a subset of the columns of $M$.

    Consider the proof tree that proves that $S \vdash_{\A, R^\A} t$. If a vertex is labeled by a tuple $u$, transform it into a vertex labeled by $u \circ \pi$. By the previous observation, this remains a valid proof tree, whose leaves belong to $S \circ \pi$ and whose root is $t \circ \pi$. Thus $S \circ \pi \vdash t \circ \pi$.
\end{proof}

\subsection{Super-connectedness}

Rather than dealing directly with additivity and dependency, we will instead deal with a notion that implies them.

\begin{definition}
    We say that $\A$ is \emph{super-connected} if $\A$ has a relation $R^\A$ such that, for every $x, y, z \in A$ (perhaps even with $|\{x, y, z\}| < 3$), we have
    \[
    \trip_\A \vdash_{\A, R^\A} (x, y, z).
    \]
\end{definition}

Super-connectedness is a very useful concept: it implies both additivity and dependency.

\begin{lemma}\label{lem:superConnThenAdd}
    If $\A$ is super-connected then $(\A, \B)$ is additive for functional $\B$ where $\A \to \B$.
\end{lemma}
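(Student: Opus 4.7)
The plan is to chain together three results that have already been set up in the preceding subsections: the sufficient condition for additivity stated in Lemma~\ref{lem:additivitySufficiency}, the bridge from syntactic derivations to semantic entailment in Lemma~\ref{lem:syntaxSemantics}, and the hypothesis of super-connectedness itself. The proof should end up being essentially a one-liner, because super-connectedness was defined precisely to feed into this machinery.

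First I would unpack the hypothesis: super-connectedness gives us some relation $R^\A$ of $\A$ such that, for all $x, y, z \in A$, we have the syntactic judgement
\[
\trip_\A \vdash_{\A, R^\A} (x, y, z).
\]
In particular, specialising to the tuples of the form $(p, p, q)$ with $p, q \in A$, we obtain $\trip_\A \vdash_{\A, R^\A} (p, p, q)$ for every such pair $p, q$.

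Next I would invoke Lemma~\ref{lem:syntaxSemantics}, which applies because $\A$ is symmetric (as $\trip_\A$ has been defined in the symmetric setting and the whole discussion takes place in that context) and $\B$ is functional by assumption. This converts each syntactic judgement into a semantic one: for every $p, q \in A$,
\[
\trip_\A \vDash_{\A, \B} (p, p, q).
\]

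Finally, this is exactly the hypothesis of Lemma~\ref{lem:additivitySufficiency}, which then yields that $(\A, \B)$ is additive, as required. I do not expect any real obstacle here; the whole point of the formal system $\vdash$ and of the definition of super-connectedness was to reduce additivity to the existence of such derivations, so the proof amounts to citing the right three lemmas in order.
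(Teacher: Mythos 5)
Your proof is correct and follows exactly the same approach as the paper, which cites Lemma~\ref{lem:additivitySufficiency} and Lemma~\ref{lem:syntaxSemantics} as an immediate consequence. You simply spell out the intermediate specialisation to tuples $(p,p,q)$, which the paper leaves implicit.
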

\begin{proof}
    An immediate consequence of \Cref{lem:additivitySufficiency} and \Cref{lem:syntaxSemantics}.
\end{proof}

\begin{lemma}\label{lem:superConnThenDepend}
    If $\A$ is super-connected then $(\A, \B)$ is dependent for functional $\B$ where $\A \to \B$.
\end{lemma}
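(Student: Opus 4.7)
The plan is to apply Lemmas~\ref{lem:dependentcondition} and~\ref{lem:syntaxSemantics} to reduce dependency to a syntactic derivation, then establish that derivation by induction on arity, using Lemmas~\ref{lem:inversePoly} and~\ref{lem:garbageValue} as the main tools. Setting $A = [a]$, Lemma~\ref{lem:dependentcondition} reduces the goal to $\Delta_\A^a \vDash_{\A,\B} (1,\ldots,a)$, and Lemma~\ref{lem:syntaxSemantics} further reduces it to $\Delta_\A^a \vdash_{\A, R^\A} (1,2,\ldots,a)$, where $R^\A$ is the relation witnessing super-connectedness. I will in fact prove the stronger statement that $\Delta_\A^n \vdash_{\A, R^\A} t$ for every $n\geq 1$ and every $t\in A^n$. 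The case $|A|\leq 2$ is immediate because then $\Delta_\A^n = A^n$, so one may assume $|A|\geq 3$; a quick inspection of any proof tree for a 3-tuple with three distinct values (the root cannot be a leaf, so rule~4 of $\vdash$ must fire at the root with a matrix whose columns lie in $R^\A$, and the first entry of each of these three columns equals one of the three distinct values at the root) shows that super-connectedness forces $\supp(R^\A) = A$.

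Before the main induction I will establish an intermediate fact: for every $n\geq 3$ and every $v\in A^n$ with at most three distinct values, $\Delta_\A^n \vdash v$. Writing $v = (x,y,z)\circ \pi$ for a suitable $\pi : [n] \to [3]$, Lemma~\ref{lem:inversePoly} lifts the 3-ary derivation $\trip_\A \vdash (x,y,z)$ provided by super-connectedness to the $n$-ary derivation $\trip_\A \circ \pi \vdash v$; since composition with $\pi$ cannot increase the number of distinct entries of a tuple, $\trip_\A \circ \pi \subseteq \Delta_\A^n$, and rule~1 of $\vdash$ finishes the argument. This already settles the base cases $n\leq 3$ of the main induction.

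For the inductive step $n \geq 4$, given $t = (t_1,\ldots,t_n)$, set $t' = (t_1,\ldots,t_{n-1})$ and invoke the induction hypothesis to obtain $\Delta_\A^{n-1} \vdash t'$; Lemma~\ref{lem:garbageValue}, applied with $x = t_n \in \supp(R^\A) = A$, then promotes this to $\Delta_\A^{n-1}\times\supp(R^\A) \vdash t$. Every $v$ in $\Delta_\A^{n-1}\times\supp(R^\A)$ is an $(n-1)$-tuple with at most two distinct values followed by a single arbitrary element, hence has at most three distinct values, so the intermediate fact supplies $\Delta_\A^n \vdash v$ for each such $v$. Iterating the transitivity rule~3 of $\vdash$ to replace each such $v$ on the left of $\vdash t$ by $\Delta_\A^n$ finally delivers $\Delta_\A^n \vdash t$, completing the induction. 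The main obstacle is precisely this inductive step: super-connectedness by itself only controls three coordinates, and the whole argument hinges on using Lemma~\ref{lem:garbageValue} to raise arity one position at a time at the cost of at most one fresh distinct value, which is exactly what the intermediate fact can absorb.
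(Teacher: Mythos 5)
Your proof is correct and rests on the same key lemmas as the paper's (Lemmas~\ref{lem:dependentcondition}, \ref{lem:syntaxSemantics}, \ref{lem:inversePoly}, \ref{lem:garbageValue}, plus $\supp(R^\A)=A$), but you organise the induction differently, and in a slightly cleaner way. The paper's Proposition~\ref{prop:An} runs a double induction on the pair $(n, d)$ (arity, number of distinct entries) in lexicographic order: when a duplicate exists it reduces $n$ via Lemma~\ref{lem:inversePoly}, and when all entries are distinct it applies Lemma~\ref{lem:garbageValue} repeatedly ($n-3$ times) to produce residual tuples in $\trip_\A\times A^{n-3}$, which still live at arity $n$ but have at most $n-1$ distinct values, and then recurses on $d$. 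You instead first prove a closed-form auxiliary fact (``any tuple of any arity $\geq 3$ with at most three distinct values is derivable from $\Delta_\A^n$'') directly from super-connectedness and Lemma~\ref{lem:inversePoly}, and then run a single induction on $n$ alone: one application of Lemma~\ref{lem:garbageValue} raises arity by one at the cost of at most one new distinct value, and the auxiliary fact discharges the residuals. This removes the second induction parameter and applies Lemma~\ref{lem:garbageValue} once per step rather than $n-3$ times. Your $\supp(R^\A)=A$ argument (root of a proof tree of a rainbow triple) is also a valid minor variant of the paper's Proposition~\ref{prop:suppRA}, which instead inspects $(x,x,y)$, and you correctly split off the trivial $|A|\leq 2$ case so that rainbow triples exist when you need them.
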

We first prove two simple propositions. Assume for them that $\A$ is super-connected, witnessed by relation $R^\A$, and that $|A| > 1$.
\begin{proposition}\label{prop:suppRA}
    $\supp(R^\A) = A$.
\end{proposition}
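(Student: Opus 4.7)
The plan is to argue by contradiction, exploiting the very restricted shape of $\trip_\A$ together with the structural rule in the formal proof system that forces certain columns to lie in $R^\A$.

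Suppose for contradiction that there exists some $a \in A \setminus \supp(R^\A)$. Because $|A| > 1$ we can pick $b \in A$ with $b \neq a$. The key observation is that the triple $(a, a, b)$ does \emph{not} belong to $\trip_\A$: its last two coordinates are $a$ and $b$, which differ, and its first and last coordinates are also $a$ and $b$, which differ. On the other hand, since $\A$ is super-connected, we have $\trip_\A \vdash_{\A, R^\A} (a, a, b)$.

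Now I would invoke the proof-tree formulation of $\vdash$ from the remark following its definition: there is a finite tree whose root is labelled $(a, a, b)$, whose leaves are labelled by tuples of $\trip_\A$, and whose every non-leaf vertex $u$ has $\ar(R^\A) - 1$ children $t_2, \ldots, t_r$ such that the matrix with rows $u, t_2, \ldots, t_r$ has all of its columns in $R^\A$. Since $(a, a, b) \notin \trip_\A$, the root cannot be a leaf; hence it has children $t_2, \ldots, t_r$ satisfying this matrix condition. Inspecting the first column of that matrix, it equals $(a, (t_2)_1, \ldots, (t_r)_1)$ and belongs to $R^\A$. In particular, $a$ appears as the first coordinate of a tuple of $R^\A$, so $a \in \supp(R^\A)$, contradicting our choice of $a$.

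The only subtle point is to verify that $(a, a, b) \notin \trip_\A$, which is where the hypothesis $|A| > 1$ is used (to make $b \neq a$ available). Beyond that, the argument is essentially a one-step unfolding of the proof tree at its root, with no further obstacle. I don't expect to need the $\vdash$-to-$\vDash$ translation (Lemma~\ref{lem:syntaxSemantics}) or anything about $\B$ for this proposition; it is purely syntactic, following directly from the derivation rules for $\vdash_{\A, R^\A}$.
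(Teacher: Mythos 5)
Your proof is correct and uses essentially the same idea as the paper's: a one-step unfolding of the proof tree for $\trip_\A \vdash_{\A,R^\A} (a,a,b)$, combined with the two observations that $(a,a,b)\notin\trip_\A$ (since $a\neq b$) and that $a$ appearing in a column of $R^\A$ contradicts $a\notin\supp(R^\A)$. The only difference is presentational: the paper first shows the root cannot have children (because of the $\supp$ condition) and then observes the root would have to be a leaf in $\trip_\A$, while you first show the root cannot be a leaf and then derive the $\supp$ contradiction from its children; both directions are valid and equivalent.
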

\begin{proof}
    Suppose for contradiction that this is not the case. Let $x \in A \setminus \supp(R^\A)$, and take $y \in A$ such that $x \neq y$. Now,
    \[
    \trip_\A \vdash_{\A, R^\A} (x, x, y).
    \]
    Consider any proof tree concluding in this judgement. The root vertex cannot
    have any children, since no matrix containing $(x, x, y)$ can have its
    columns belong to $R^\A$, as $x \not \in \supp R^\A$. Thus $(x, x,y)$ is a
    leaf vertex in the tree, and $(x, x, y) \in \trip_\A$. This is not possible, as $x \neq y$.
\end{proof}

\begin{proposition}\label{prop:An}
    For any $x_1, \ldots, x_n \in A$, we have that $\Delta_\A^{n} \vdash (x_1, \ldots, x_n)$.
\end{proposition}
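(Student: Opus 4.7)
The plan is to prove $\Delta_\A^n \vdash (x_1, \ldots, x_n)$ by induction on $n$, combining two auxiliary tools already established in the excerpt: Lemma~\ref{lem:inversePoly} will lift super-connectedness (which is only stated at arity $3$) to $n$-ary targets whose image has at most three values, while Lemma~\ref{lem:garbageValue} together with Proposition~\ref{prop:suppRA} will let us step up from arity $n-1$ to arity $n$ at the cost of introducing at most one extra value per premise.

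The key intermediate step I would establish is the following subclaim: for every $u \in A^n$ with at most three distinct values, $\Delta_\A^n \vdash u$. I would factor $u$ as $v \circ \pi$ where $v \in A^3$ enumerates the values used by $u$ (padded arbitrarily if fewer than three values occur) and $\pi : [n] \to [3]$ records which coordinate of $u$ realises which value. Super-connectedness gives $\trip_\A \vdash v$ in arity $3$, so Lemma~\ref{lem:inversePoly} yields $\trip_\A \circ \pi \vdash u$ in arity $n$. Every tuple in $\trip_\A$ uses at most two distinct values, hence so does every tuple in $\trip_\A \circ \pi$, giving $\trip_\A \circ \pi \subseteq \Delta_\A^n$; rule 1 in the definition of $\vdash$ then finishes the subclaim.

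With the subclaim in hand, the main induction is straightforward. For $n \leq 3$ every tuple has at most three distinct values, so the subclaim applies directly. For $n \geq 4$, apply the induction hypothesis to $(x_1, \ldots, x_{n-1})$ to obtain $\Delta_\A^{n-1} \vdash (x_1, \ldots, x_{n-1})$. Using Proposition~\ref{prop:suppRA} to identify $\supp(R^\A)$ with $A$, Lemma~\ref{lem:garbageValue} gives $\Delta_\A^{n-1} \times A \vdash (x_1, \ldots, x_n)$ at arity $n$. Each $u \in \Delta_\A^{n-1} \times A$ has at most two distinct values in its first $n-1$ coordinates plus one more in the last, hence at most three distinct values overall, so the subclaim yields $\Delta_\A^n \vdash u$. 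An iterated application of rule 3 (transitivity), replacing the premises one at a time, collapses $\Delta_\A^{n-1} \times A$ into $\Delta_\A^n$ on the left-hand side and produces $\Delta_\A^n \vdash (x_1, \ldots, x_n)$.

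The main conceptual hurdle is not any single calculation but the interplay between the two lemmas: super-connectedness alone only controls targets with at most three distinct values, and the extension via Lemma~\ref{lem:garbageValue} would in principle let the value count drift upwards with each arity step. The crucial observation is that the extension introduces at most one new value per premise tuple, keeping the total at three, which is exactly what the subclaim can absorb.
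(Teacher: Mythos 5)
Your proposal is correct, and it takes a genuinely different route from the paper's proof, although both rely on the same two tools (Lemma~\ref{lem:inversePoly} and Lemma~\ref{lem:garbageValue}). The paper proves the proposition by a \emph{double} induction: it proves that $\Delta_\A^n \vdash (x_1,\ldots,x_n)$ whenever $|\{x_1,\ldots,x_n\}| \le d$ by lexicographic induction on $(n,d)$. When the target tuple has a repeated value it uses Lemma~\ref{lem:inversePoly} to \emph{reduce} arity (decreasing $n$); when the values are all distinct it unrolls super-connectedness from arity $3$ directly to arity $n$ by applying Lemma~\ref{lem:garbageValue} $n-3$ times, producing premises with at most $n-1$ distinct values, and recurses on $d$. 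You avoid the second induction variable altogether by isolating a subclaim that handles every $n$-tuple with at most three distinct values at once (this is where you use Lemma~\ref{lem:inversePoly}, but in the opposite direction --- \emph{lifting} from arity $3$ to arity $n$ rather than dropping coordinates), and then driving a single induction on $n$ in which Lemma~\ref{lem:garbageValue} is applied only once per step. Appending a single garbage coordinate to a $\Delta_\A^{n-1}$-tuple yields at most three distinct values, which is exactly what the subclaim absorbs, so the value count never drifts upward and no auxiliary induction parameter is needed. Both arguments are sound; yours trades the paper's two-parameter bookkeeping for an explicit intermediate lemma, which is arguably cleaner and makes the roles of the two helper lemmas more transparent. (You implicitly use $\supp(R^\A)=A$, i.e.\ Proposition~\ref{prop:suppRA}, which, like the paper's proof, requires the standing assumption $|A|>1$; the case $|A|=1$ is trivial.)
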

\begin{proof}
    We prove this fact by induction. We do induction over the lexicographical ordering on $\{ (n, d) \in \mathbb{N}^2 \mid n \geq d \}$.
    For any $(n, d) \in \mathbb{N}^2$ with $n \geq d$ we prove that $\Delta_\A^n \vdash (x_1, \ldots, x_n)$ whenever $| \{ x_1, \ldots, x_n \} | \leq d$.
    As our base case, note that the result is immediate when $n \leq 2$, as
    $(x_1, \ldots, x_n) \in \Delta_\A^n$ in this case. Thus suppose $n \geq 3$.
    
    First suppose that $x_1, \ldots, x_n$ contains a duplicated pair of values, say $x_{n-1} = x_n$. Now, define $\pi : [n-1] \to [n]$, where $\pi(i) = i$. By the inductive hypothesis, we know that $\Delta_\A^{n-1} \vdash (x_1, \ldots, x_{n-1})$. Thus, by \Cref{lem:inversePoly}, $\Delta_\A^{n-1} \circ \pi \vdash (x_1, \ldots, x_{n-1}) \circ \pi = (x_1, \ldots, x_{n-1}, x_{n-1}) = (x_1, \ldots, x_n)$. Furthermore, every tuple of $\Delta_\A^{n-1} \circ \pi$ contains at most 2 values, so it belongs to $\Delta_\A^{n}$. Thus in this case $\Delta_\A^{n} \vdash (x_1, \ldots, x_n)$.

    Now suppose that $x_1, \ldots, x_n$ are all distinct. Since $\A$ is
    super-connected, we find that $\trip_\A \vdash (x_1, x_2, x_3)$. Thus by
    \Cref{lem:garbageValue}, we find that $\trip_\A \times {\supp(R^\A)}^{n-3} \vdash
    (x_1, \ldots, x_n)$. By \Cref{prop:suppRA}, $\supp(R^\A) = A$, so $\trip_\A \times A^{n-3} \vdash (x_1, \ldots, x_n)$. But, every tuple $t \in \trip_\A \times A^{n-3}$ contains at most $n-1$ distinct values, so we can apply the inductive hypothesis to them, and find that $\Delta_\A^n \vdash t$. Thus $\Delta_\A^n \vdash (x_1, \ldots, x_n)$, as required. This completes the proof.
\end{proof}

\begin{proof}[Proof of \Cref{lem:superConnThenDepend}]
    $\Delta_\A^a \vdash (1, \ldots, a)$ by \Cref{prop:An}. Thus, by
    \Cref{lem:syntaxSemantics} we have that $\Delta_\A^a \vDash_{\A, \B} (1, \ldots, a)$, and therefore dependency follows by \Cref{lem:dependentcondition}.
\end{proof}

\begin{lemma}\label{lem:diamSuperConn}
Suppose $\A$ has a symmetric relation $R^\A$ of arity at least 3 with $\diam(A, R^\A) \leq 1$.
Then $\A$ is super-connected.
\end{lemma}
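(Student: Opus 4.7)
The plan is to show $\trip_\A \vdash_{\A, R^\A} (x, y, z)$ for every $(x, y, z) \in A^3$. Tuples with $y = z$ or $x = z$ lie in $\trip_\A$ and are handled by reflexivity (rule~2 of the formal system), so attention focuses on the case $y \neq z$ and $x \neq z$.

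The approach is to derive the remaining tuples by applications of rule~4, combined with transitivity (rule~3). For a target $(x, y, z)$, I would exhibit an $r \times 3$ matrix (where $r = \ar(R^\A) \geq 3$) whose first row is $(x, y, z)$, whose rows~$2, \ldots, r$ are tuples either in $\trip_\A$ or derivable from $\trip_\A$ in earlier steps, and whose columns are each tuples of $R^\A$. The hypothesis $\diam(A, R^\A) \leq 1$ combined with symmetry of $R^\A$ furnishes, for every pair of distinct $u, v \in A$, an edge of $R^\A$ containing $u, v$ in any two prescribed positions; this is the sole structural ingredient used to populate the columns.

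I would organise the derivation in stages. In the first stage, direct one-step applications of rule~4 produce a toolbox of ``easy'' tuples whose matrix rows~$2, \ldots, r$ can be chosen directly from $\trip_\A$. A representative instance, for $r = 3$ and a triple $(x, y, z)$ of pairwise distinct elements contained in a common edge of $R^\A$, is the matrix
\[
\begin{pmatrix} x & y & z \\ y & z & y \\ z & x & x \end{pmatrix},
\]
whose second and third rows are of the forms $(s, r, s)$ and $(r, s, s)$, and whose three columns are permutations of the multiset $\{x, y, z\}$. Analogous constructions, tailored to the edge structure furnished by $\diam \leq 1$, handle a wide variety of targets. In subsequent stages, the remaining targets, notably degenerate tuples $(u, u, v)$ with $u \neq v$ and any three-distinct triples not lying in a common edge, are derived by using first-stage tuples as rows of new rule-4 matrices and chaining via transitivity.

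The principal obstacle is combinatorial bookkeeping: since $\diam(A, R^\A) \leq 1$ is purely a pairwise hypothesis, a given triple $\{x, y, z\}$ need not lie in any single edge of $R^\A$, and the various edges covering its pairs may not align into a single matrix whose non-first rows are in $\trip_\A$. Choosing the appropriate sequence of auxiliary derivations and compatible column edges for each problematic target will be the main work. Extra care is needed for small $|A|$ and for $r > 3$, where the matrix has additional rows to fill; these can often be chosen as tuples of $\trip_\A$ whose entries correspond to elements appearing in the selected edges.
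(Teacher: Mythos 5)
Your framework is correct: split off the tuples already in $\trip_\A$, then derive the rest via rule~4 matrices and transitivity. Your sample matrix for three pairwise-distinct elements lying in a \emph{common} edge does work (columns are permutations of the edge, second and third rows lie in $\trip_\A$). But you explicitly stop short of the main content of the lemma, and the gap you flag is real, not mere bookkeeping.

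The two cases you leave open are exactly the hard ones, and they do not follow by routine chaining. For $x = y \neq z$, the target $(x,x,z)$ cannot be obtained from a single rule-4 matrix with all non-first rows in $\trip_\A$; the paper needs a two-step derivation through an auxiliary tuple $(z, a_3, x)$, i.e.\ two matrices combined by transitivity. For three distinct elements $x,y,z$ not covered by a single edge, the crucial idea you are missing is to pick the two edges so that they share the vertex $x$: with $\diam \leq 1$ one can choose $(x, z, a_3, \ldots, a_r) \in R^\A$ and $(x, y, b_3, \ldots, b_r) \in R^\A$, and the $r \times 3$ matrix with columns $(x, z, a_3, \ldots, a_r)$, $(y, x, b_3, \ldots, b_r)$, $(z, x, a_3, \ldots, a_r)$ has first row $(x,y,z)$, second row $(z,x,x) \in \trip_\A$, and every remaining row of the form $(a_i, b_i, a_i) \in \trip_\A$. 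Without that anchoring of both edges at $x$, the rows below the first will generically not lie in $\trip_\A$, and no transitivity chain you describe recovers them. So the proposal identifies the right proof system and the right kind of move, but it does not supply the construction that actually makes the argument go through; as written it is an outline with the key step unresolved.
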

\begin{proof}
Take $x, y, z \in A$; we must show that $\trip_\A \vdash (x, y, z)$.
We have several cases.
\begin{description}
\item[$\bm{x = z}$ or $\bm{y = z}$.] In this case $(x, y, z) \in \trip_\A$, so there is nothing left to be proved.
\item[$\bm{x = y \neq z}$.] In this case $\dist_{R^\A}(x, z) = 1$, so there exists an edge $(x, z, a_3, \ldots, a_r)\in R^\A$. Consider the matrices
\[
M = 
    \begin{pmatrix}
    x & x & z \\
    z & a_3 & x \\
    a_3 & z & a_3 \\
    a_4 & a_4 & a_4 \\
    \vdots & \vdots & \vdots \\
    a_r & a_r & a_r \\
    \end{pmatrix},
\qquad
    N = 
    \begin{pmatrix}
    z & a_3 & x \\
    x & z & z \\
    a_3 & x & a_3 \\
    a_4 & a_4 & a_4 \\
    \vdots & \vdots & \vdots \\
    a_r & a_r & a_r \\
    \end{pmatrix}.
\]
Thus we deduce that $(z, a_3, x), \trip_\A \vdash (x, x, z)$ and $\trip_\A \vdash (z, a_3, x)$, and thus $\trip_\A \vdash (x, x, z) = (x, y, z)$.
\item[$\bm{x \neq y, y \neq z, x \neq z}$.]
In this case $\dist_{R^\A}(x, z) = \dist_{R^\A}(x, y) = 1$, so there must exist edges $(x, z, a_3, \ldots, a_r), (x, y, b_3, \ldots, b_r) \in R^\A$. 
Consider the following matrix
    \[
    \begin{pmatrix}
    x & y & z \\
    z & x & x \\
    a_3 & b_3 & a_3 \\
    \vdots & \vdots & \vdots \\
    a_r & b_r & a_r \\
    \end{pmatrix}.
    \]
Thus we find that $(z, x, x), (a_3, b_3, a_3), \ldots, (a_r, b_r, a_r) \vdash (x, y, z)$, whence $\trip_\A \vdash (x, y, z)$.
\end{description}
Thus our conclusion follows in all cases.
\end{proof}

\begin{lemma}\label{lem:3superConn}
    If $\A$ has a connected symmetric relation $R^\A$ of arity 3 then $\A$ is super-connected.
\end{lemma}
\begin{proof}
    We see $(A, R^\A)$ as a connected 3-uniform hypergraph. For $x, y \in A$ recall that $\dist(x, y)$ is the distance between $x$ and $y$ in this hypergraph. We show that $\trip_\A \vdash_{\A, R^\A} (x, y, z)$ for all $x, y, z \in A$ by lexicographic induction on $\minmax(\dist(x, z), \dist(y, z))$ (i.e.~we first order by the minimum, and in case of equality the maximum). In all the cases that follow assume $\dist(x, z) \leq \dist(y, z)$.
    \begin{description}
    \item[$\bm{\dist(x, z) = 0}$.] In this case $(x, y, z) = (x, y, x) \in \trip_\A$ so there is nothing left to prove.
    \item[$\bm{\dist(x, z) = \dist(y, z) = 1}$.] In this case, there exists
      edges $(x, a, z)$, $(y,b , z)$. Now consider the following matrices:
    \[
    M=
    \begin{pmatrix}
        x & y & z \\
        z & z & x \\
        a & b & a
    \end{pmatrix},
    \qquad
    N=
    \begin{pmatrix}
        z & z & x \\
        a & x & z \\
        x & a & a
    \end{pmatrix},
    \qquad
    P =
    \begin{pmatrix}
    a & x & z \\
    x & z & x \\
    z & a & a
    \end{pmatrix}.
    \]
    Thus we conclude that $(a, b, a), (z, z, x) \vdash (x, y, z)$, $(a, x, z), (x, a, a) \vdash (z, z, x)$ and also $(x, z,x), (z, a,a) \vdash (a, x, z)$, from where the conclusion follows.
    \item[$\bm{\dist(y, z) > \dist(x, z) = 1}$.] In this case, there exist edges $(x, a, z)$ and $(y, b, y')$ such that $\dist(y', z) = \dist(y, z) - 1$. Furthermore, as $\dist(y', z) > 0$, there exists an edge $(y', c, d)$ such that
    \[
    \dist(c, z), \dist(d, z) \leq \dist(y', z) \leq \dist(y, z) - 1,
    \]
    Now consider matrices
    \[
    M = 
    \begin{pmatrix}
        x & y & z \\
        a & b & a \\
        z & y' & x
    \end{pmatrix},
    \qquad
    N = \begin{pmatrix}
        z & y' & x \\
        a & c & a \\
        x & d & z
    \end{pmatrix}.
    \]
    Hence $(z, y', x), \trip_\A \vdash (x, y, z)$ and $(x, d, z), \trip_\A \vdash (z, y', x)$. Since $\dist(d, z) < \dist(y, z)$ we can apply the inductive hypothesis to $(x, d, z)$, so $\trip_\A \vdash (x, y, z)$ as required.
    
    \item[$\bm{\dist(y, z) \geq \dist(x, z) > 1}$.] In this case, there exist edges $(x, a, x'), (z, b, z')$ such that it is true that $\dist(a, z'), \dist(b, x') < \dist(x, z)$ --- take edges from $x, z$ towards the other one. (Such edges only exist when $\dist(x, z) > 1$ as assumed.) Furthermore, by connectedness, an edge $(y, c,d)$ exists.
    Thus consider the matrix
    \[
    \begin{pmatrix}
        x & y & z \\
        a & c & z' \\
        x' & d & b
    \end{pmatrix}.
    \]
    Hence $(a, c, z'), (x,d, b) \vdash (x, y,z )$. By induction $\trip_\A \vdash (a, c, z')$ and $\trip_\A \vdash (x', d, b)$, so we have $\trip_\A \vdash (x, y, z)$.
    \end{description}
    Thus we find by induction that $\A$ is super-connected.
\end{proof}

\begin{lemma}\label{lem:4superConn}
    If $\A$ has a connected symmetric relation $R^\A$ of arity 4 then $\A$ is super-connected.
\end{lemma}
\begin{proof}
    We see $(A, R^\A)$ as a connected 4-uniform hypergraph. We show that $\trip_\A \vdash_{\A, R^\A} (x, y, z)$ for all $x, y, z \in A$ by lexicographic induction on $\minmax(\dist(x, z), \dist(y, z))$. In all the cases that follow assume $\dist(x, z) \leq \dist(y, z)$.
    \begin{description}
    \item[$\bm{\dist(x, z) = 0}$.] In this case $(x, y, z) = (x, y, x) \in \trip_\A$ so there is nothing left to prove.
    \item[$\bm{\dist(x, z) = \dist(y, z) = 1}$.] In this case, there exists
      edges $(x, a, a', z)$, $(y, b, b', z)$. Now consider the following matrices:
    \[
    M=
    \begin{pmatrix}
        x & y & z \\
        z & z & x \\
        a & b & a \\
        a' & b' & a'
    \end{pmatrix},
    \qquad
    N=
    \begin{pmatrix}
        z & z & x \\
        a & x & z \\
        x & a & a \\
        a' & a' & a' \\
    \end{pmatrix},
    \qquad
    P =
    \begin{pmatrix}
    a & x & z \\
    x & z & x \\
    z & a & a \\
    a' & a' & a'
    \end{pmatrix}.
    \]
    Hence, $\trip_\A \vdash (x,y, z)$.
    \item[$\bm{\dist(y, z) > \dist(x, z) = 1}$.] In this case, there exist edges $(x, a, a', z)$ and $(y, b, b', y')$ such that $\dist(y', z) = \dist(y, z) - 1$. Furthermore, as $\dist(y', z) > 0$, there exists an edge $(y', c, d, e)$ such that
    \[
    \dist(c, z), \dist(d, z), \dist(e, z) \leq \dist(y', z) \leq \dist(y, z) - 1,
    \]
    Now consider matrices
    \[
    M = 
    \begin{pmatrix}
        x & y & z \\
        a & b & a \\
        a'& b'& a' \\
        z & y' & x
    \end{pmatrix},
    \qquad
    N = \begin{pmatrix}
        z & y' & x \\
        a & c & a \\
        a'& d & a' \\
        x & e & z
    \end{pmatrix}.
    \]
    Hence $(z, y', x), \trip_\A \vdash (x, y, z)$ and $(x, d, z), \trip_\A (z, y', x)$. Since $\dist(d, z) < \dist(y, z)$ we can apply the inductive hypothesis to $(x, d, z)$, so $\trip_\A \vdash (x, y, z)$ as required.
    
    \item[$\bm{\dist(y, z) \geq \dist(x, z) > 1}$.] In this case, there exist edges $(x, a, a', x'), (z, b, b', z')$ such that it is true that $\dist(a, z'), \dist(a', z'), \dist(b, x'), \dist(b', x') < \dist(x, z)$ --- take edges from $x, z$ towards the other one. Furthermore, as $\dist(y, z) > 2$ there exist edges $(y, c, c', y')$ and $(y', d, d', y'')$, with $\dist(y'', z) = \dist(y, z) - 2$.
    Thus consider the matrices
    \[
    M = \begin{pmatrix}
        x & y & z \\
        a' & c & z' \\
        x' & c' & b' \\
        a & y' & b \\
    \end{pmatrix},
    \qquad
    N = \begin{pmatrix}
        a & y' & b \\
        a' & d & z' \\
        x' & d' & b' \\
        x & y'' & z
    \end{pmatrix}.
    \]
    Note that $\dist(a', z'), \dist(x', b') < \dist(x, z)$, and furthermore $\dist(y'', z) < \dist(y, z)$, so by induction $\Gamma_\A \vdash (x, y, z)$.
    \end{description}
    Thus we find by induction that $\A$ is super-connected.
\end{proof}

With all of this, we are finally ready to prove \Cref{thm:diamAdditivityDependency} and \Cref{thm:4AdditivityDependency}.
\begin{proof}[Proof of \Cref{thm:diamAdditivityDependency} and \Cref{thm:4AdditivityDependency}]
The structures from \Cref{thm:diamAdditivityDependency} are super-connected by \Cref{lem:diamSuperConn}. The structures from \Cref{thm:4AdditivityDependency} are super-connected by \Cref{lem:3superConn} and \Cref{lem:4superConn}. This is sufficient for additivity by \Cref{lem:superConnThenAdd} and for dependency by \Cref{lem:superConnThenDepend}.
\end{proof}

\section{Dichotomy}

In this section we will prove our main result.

\thmdichotomy*

Before proving \Cref{thm:dichotomy}, we will prove three interesting corollaries. 

\corBoolDich*

For the proof of \Cref{cor:BooleanDichotomy}, we will need a simple lemma.

\begin{lemma}\label{lem:specialCase}
Suppose $\A$ is Boolean and symmetric, $\B$ is functional, and every relation of $\A$ is either binary or contains only constant tuples. Then, $\PCSP(\A, \B)$ is solvable in polynomial time by $\AIP$ and is finitely tractable.
\end{lemma}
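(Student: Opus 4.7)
The plan is to construct a finite intermediate structure $\E$ with $\A \to \E \to \B$ whose domain is an Abelian group and whose relations are cosets of subgroups, so that Lemma~\ref{lem:sandwich} applies directly and yields both solvability by $\AIP$ and finite tractability.

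First I would fix a homomorphism $h : A \to B$, which exists since $\A \to \B$. I would then split into two cases depending on whether $h$ identifies the two Boolean elements. If $h(0) = h(1)$, take the trivial group $G = \{0\}$ and set $R^\E = \{(0, \ldots, 0)\}$ when $R^\A \neq \emptyset$ and $R^\E = \emptyset$ otherwise. Each such $R^\E$ is trivially either empty or a singleton coset of the trivial subgroup of $G^r$. The constant map gives $\A \to \E$, and $0 \mapsto h(0)$ gives $\E \to \B$: indeed, if $R^\A \neq \emptyset$ then any tuple $t \in R^\A$ satisfies $h(t) = (h(0), \ldots, h(0)) \in R^\B$, as required.

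If $h(0) \neq h(1)$, take $G = \mathbb{Z}_2$, identify it with $A = \{0, 1\}$, and set $R^\E = R^\A$ for every relation. Then $\A \to \E$ is the identity and $\E \to \B$ is $h$. The key step is to verify that $R^\E$ is always a coset of a subgroup of $\mathbb{Z}_2^r$. For binary $R^\A$, functionality of $R^\B$ combined with $h(0) \neq h(1)$ forbids $R^\A$ from containing two tuples differing in exactly one coordinate (otherwise the functional property applied to their images would force $h(0) = h(1)$); hence $R^\A$ is one of $\emptyset$, a singleton, the diagonal $\{(0,0), (1,1)\}$, or the antidiagonal $\{(0,1), (1,0)\}$, each of which is a coset of a subgroup of $\mathbb{Z}_2^2$. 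For $R^\A$ containing only constant tuples we have $R^\A \subseteq \{(0, \ldots, 0), (1, \ldots, 1)\}$, and every such subset is either empty, a singleton coset of the trivial subgroup, or the diagonal subgroup of $\mathbb{Z}_2^r$ itself.

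The main (mild) subtlety is the binary case in the second subcase, where functionality of $\B$ has to be used to rule out precisely the $R^\A$'s that are not cosets; the rest is routine verification that the chosen maps are indeed homomorphisms. Once $\E$ has been constructed, Lemma~\ref{lem:sandwich} directly gives that $\PCSP(\A, \B)$ is solvable in polynomial time by $\AIP$ and is finitely tractable.
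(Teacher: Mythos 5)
Your proof is correct and follows essentially the same route as the paper's: fix a homomorphism $h\colon\A\to\B$, split on whether $h(0)=h(1)$, use functionality of $\B$ in the second case to constrain what the binary relations of $\A$ can look like, and conclude via the $\AIP$ sandwich machinery. Two small points of comparison. First, you make the appeal to Lemma~\ref{lem:sandwich} fully explicit by exhibiting $\E$ and checking each $R^\E$ is a coset of a subgroup of $\mathbb{Z}_2^r$, whereas the paper stops at ``every relation of $\A$ is disequality or constant-only, hence $\CSP(\A)$ is solved by $\AIP$'' without unpacking it. Second, and more substantively, your argument does not invoke symmetry of $\A$: you directly rule out any two tuples in a binary $R^\A$ differing in exactly one coordinate, which leaves $\emptyset$, a singleton, the diagonal, or the antidiagonal, all of which are cosets. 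The paper's proof instead uses ``since $\A$ is symmetric'' to force $R^\A=\{(0,1),(1,0)\}$, even though the lemma as stated does not assume symmetry (it is only applied to symmetric $\A$ in Corollary~\ref{cor:BooleanDichotomy}). So your version is in fact marginally more general and closes a small hypothesis gap in the paper's proof.
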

\begin{proof}
Consider any $h : \A \to \B$. 
Suppose $h(0) = h(1)$. Then every relation in $\B$ contains a constant tuple of the form $(h(0), \ldots, h(0))$; in this case, $\PCSP(\A, \B)$ is trivially solved by $\AIP$ and is finitely tractable. Thus suppose $h(0) \neq h(1)$. Any empty relation in $\A$ can be removed (together with the corresponding relation in $\B$) as it does not affect  $\Pol(\A, \B)$ and the complexity of $\PCSP(\A,\B)$.

Since $\B$ is functional, the binary relations of $\A$ that do not contain only constant tuples must be the binary disequality. To see why, consider any relation $R^\A$ in $\A$ that contains the tuple $(0, 1)$. $R^\A$ cannot contain $(0, 0)$ or $(1, 1)$, since the corresponding relation $R^\B$ in $\B$ contains $(h(0), h(1))$ and if it contained $(h(0), h(0))$ or $(h(1), h(1))$ it would not be functional. Since $\A$ is symmetric, $R^\A$ also contains the tuple $(1,0)$. Thus $R^\A = \{(0, 1), (1, 0)\}$ is the disequality relation. It follows that every relation in $\A$ is either a binary disequality, or consists only of constant tuples. In this case, $\CSP(\A)$ is solved by $\AIP$
and thus $\PCSP(\A, \B)$ is solved by $\AIP$ and is finitely tractable.
\end{proof}

\begin{proof}[Proof of \Cref{cor:BooleanDichotomy}]
If $\A$ contains only binary relations or relations that contain only constant
  tuples, the conclusion follows by \Cref{lem:specialCase}. Otherwise,
$\A$ has a relation $R^\A$ of arity at least 3 with a non-constant tuple i.e.~for which $\diam(A, R^\A) = 1$, and the conclusion follows from 
\Cref{thm:dichotomy} together with super-connectivity of such structures (cf.~\Cref{lem:diamSuperConn}):
super-connectedness implies additivity (cf.~\Cref{lem:superConnThenAdd})
and dependency (cf.~\Cref{lem:superConnThenDepend}).
\end{proof}

\corTernDich*

\begin{proof}
First, observe that the conclusion holds trivially if $R^\B$ has arity 1 as in
  this case $\CSP(\B)$ is trivially tractable. Second, suppose $R^\B$ has arity 2. In this case we claim that $\CSP(\B)$ is solvable in polynomial time. Indeed, an easy modification of the algorithm for 2-colouring will solve this problem. Thus $\PCSP(\A, \B)$ is tractable in this case.

  Now, suppose $R^\B$ has arity 3 or 4. The conclusion holds for connected $\A$ by the super-connectedness of structures $\A$ that have a relation of arity 3 or 4 that is connected. 
  Thus suppose that $\A$ is not connected, and that its connected components are $\A_1, \ldots, \A_k$ i.e.~$\A = \A_1 + \cdots + \A_k$. If $\PCSP(\A_i, \B)$ is \NP-hard for some $i \in [k]$ then $\PCSP(\A, \B)$ will also be \NP-hard (since there is a trivial reduction from $\PCSP(\A_i, \B)$ to $\PCSP(\A, \B)$, as $\A_i \to \A$). Thus suppose $\PCSP(\A_i, \B)$ is solvable in polynomial time for all $i \in [k]$. Consider any input hypergraph $\X = \X_1 + \cdots + \X_n$, where $\X_1, \ldots, \X_n$ are connected. Since the homomorphic image of a connected hypergraph is connected, if $\X_i \to \A$ then $\X_i \to \A_j$ for some $j \in [k]$. Thus, for the decision version of $\PCSP(\A, \B)$, it is sufficient to see if, for all $i \in [n]$ there exists some $j \in [k]$ such that $\X_i$ is a \textsc{Yes}-instance of $\PCSP(\A_j, \B)$. If so then $\X$ is a \textsc{Yes}-instance overall. For the search version something similar happens: see if we can produce a homomorphism $\X_i \to \B$ for each $\X_i$ by running the algorithm for $\PCSP(\A_j, \B)$ for each $\A_j$, and combine these homomorphisms to find a homomorphism $\A \to \B$.
\end{proof}

\begin{remark}\label{rem:notbyAIP}
Unlike in all other results in this paper, solvability by $\AIP$ is not proved
in \Cref{cor:TernDich}, only polynomial-time solvability.
As far as we know, it well may be that the PCSPs considered in \Cref{cor:TernDich} 
are not solved by $\AIP$ (or even $\BLPAIP$).
The way that tractability is deduced in the proof of
    \Cref{cor:TernDich} is as follows: If $\A_1, \ldots, \A_k$ are connected and $\PCSP(\A_i, \B)$ is tractable, then $\PCSP(\A_1 + \cdots + \A_k, \B)$ is tractable. We give a concrete example of a template with a single symmetric relation of arity 6 that shows that this reduction \emph{does not} preserve solvability by $\AIP$, or even by $\BLPAIP$.
  Note however that \Cref{cor:TernDich} only applies to
  relations of arity at most 4, so the corollary might still be strengthened to show solvability by $\AIP$.

  Let $\mathbb{Z}' = \{ x' \mid x \in \mathbb{Z} \}$ be a disjoint copy of $\mathbb{Z}$. Then, let $\B=\A_1+\A_2$, where
    \begin{align*}
    \A_1 & = ( \{ 0, 1 \} ; \{ (x_1, \ldots, x_6) \mid x_1 + \cdots + x_6 \equiv 1 \pmod 2 \} ),\\
    \A_2 & = ( \{ 0', 1', 2' \} ; \{ (x_1, \ldots, x_6) \mid x_1 + \cdots + x_6 \equiv 2' \pmod{3'} \} ).
    \end{align*}
 Note that $\A_1, \A_2, \B$ are all functional and symmetric. Furthermore,
  $\PCSP(\A_1, \B)$ and $\PCSP(\A_2, \B)$ are solved by $\AIP$ by
  \Cref{thm:aip}: a $(2k+1)$-ary alternating polymorphism is
    \[
    (x_1, y_1, \ldots, x_k, y_k, x_{k+1}) \mapsto x_1 - y_1 + \cdots + x_k - y_k + x_{k+1} \pmod{m},
    \]
    where $m = 2$ for $\A_1$ and $m = 3'$ for $\A_2$. We will now show that
    $\BLPAIP$ fails to solve $\PCSP(\B, \B) = \CSP(\B)$; to see why, suppose
    that it does solve it. Thus, by \Cref{thm:blpaip}, $\Pol(\B,\B)$
    should contain a 2-block symmetric polymorphism of arity 5, say $f \in
    \Pol^{(5)}(\B,\B)$. For the ease of notation, suppose that the first block of symmetry of $f$ contains the first 3 inputs, and the second block of symmetry contains the last 2 inputs (rather than the blocks being based on parity). Now consider the following matrix
    \[
    \left(
    \begin{array}{ccc|cc}
        1' & 0' & 0' &
        1 & 0 \\
        0' & 1' & 0' &
        0 & 1\\
        0' & 0' & 1' &
        1 & 0\\
        1' & 0' & 0' &
        0 & 1\\
        0' & 1' & 0' &
        1 & 0\\
        0' & 0' & 1' &
        0 & 1\\
    \end{array}
    \right)
    \]
    Every column of this matrix is an element of $R^\B$; thus $f$ applied to every row gives a tuple of $R^\B$. But, due to block-symmetry, the image of every row through $f$ is the same! This contradicts the lack of constant tuples in $\B$.

The non-solvability of $\PCSP(\B,\B)$ by $\AIP$ is relevant in view of
  \Cref{conj:aip} --- it shows that the ``$\AIP$ being
  a universal algorithm for $\PCSP(\inn{1}{3},-)$'' part of \Cref{conj:aip} cannot be extended to \emph{arbitrary} symmetric templates. We believe that it might hold true for any \emph{connected} $\A$ with one symmetric relation.
\end{remark}

\corDiamDich*

\begin{proof}
  Such structures are super-connected by \Cref{lem:diamSuperConn}, which
  implies additivity and dependency of $(\A,\B)$ by \Cref{lem:superConnThenAdd} and
  \Cref{lem:superConnThenDepend}, respectively.
\end{proof}

We will now move on to a proof of \Cref{thm:dichotomy}.
Suppose $(\A,\B)$ is dependent and additive. Suppose generally that $A = [a]$.

\begin{definition}
  Consider a polymorphism $f\in \Pol^{(n)}(\A, \B)$. We call it
  \emph{$k$-degenerate} if there exist $x_1, \ldots, x_k \in \range(f^p)$ such
  that for any $S_1, \ldots, S_k \subseteq [n]$ for which $f^p(S_i) = x_i$ we
  have that not all $S_i$ are disjoint. Note that no polymorphism can be
  1-degenerate as a single set is a disjoint family.

For any polymorphism $f \in \Pol^{(n)}(\A, \B)$, we call a set $S \subseteq [n]$ a \emph{hard set} if, for any $T \supseteq S$, we have $f^p(T) \neq f^p(\emptyset)$.%
\footnote{These two notions are similar to those of \emph{unbounded antichains} and \emph{fixing sets} in~\cite{Ficak19:icalp}. The notion of hard-set is similar to the notion of an $f(\emptyset)$-avoiding set from~\cite{DRS05,Wrochna22}.}
\end{definition}

We will prove \Cref{thm:dichotomy} using the following two cases. For the following, define $N_d = \max(1 + |B|^{a^2} a^{2r_{\max}}, 3)$ and $N_h = |B|^{a^2}$, where $r_{\max}$ is the maximum arity of any relation in $\A$.

\begin{theorem}\label{thm:genTractability}
If $\Pol(\A, \B)$ contains a polymorphism that is not $k$-degenerate, for any $k \leq N_d$, and that has no hard sets of size at most $N_h$, then $\PCSP(\A, \B)$ is solved by $\AIP$ and is finitely tractable.
\end{theorem}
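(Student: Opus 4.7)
The plan is to construct a finite sandwich structure $\E$ with $\A \to \E \to \B$ whose domain is an abelian group and each of whose relations is a coset of a subgroup, so that Lemma~\ref{lem:sandwich} yields both $\AIP$-solvability and finite tractability in one shot.

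First, I would promote $G := \range(f^p) \subseteq B^{A^2}$ to an abelian group under the operation $+$ from additivity. The identity is $f^p(\emptyset)$. Closure and well-definedness follow from $2$-non-degeneracy: given $x, y \in G$, pick disjoint $S, T$ with $f^p(S) = x$ and $f^p(T) = y$ and set $x + y := f^p(S \cup T)$; independence of the representative is a one-line application of additivity. Commutativity is immediate from $S \cup T = T \cup S$, and associativity follows from $3$-non-degeneracy. For inverses, given $x \in G$, apply Lemma~\ref{lem:genSmallSet} to obtain a representative $S'$ of size at most $N_h$; since $S'$ is not hard, there exists $T \supseteq S'$ with $f^p(T) = f^p(\emptyset)$, and Lemma~\ref{lem:add-sub} then gives $f^p(T \setminus S') = -x$.

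Next, I would define $\E$ on the abelian group $E := G^a$ (coordinatewise addition). Set $h : A \to E$ by $h(i) := \fr(i, i, \ldots, i)$, which is the tuple with $f^p([n])$ in coordinate $i$ and $f^p(\emptyset)$ in the other coordinates. For each relation $R^\A$ of arity $r$, define $R^\E$ to be the image, under the group homomorphism $\varphi : G^{R^\A} \to E^r$ sending $(z_x)_{x \in R^\A}$ to the tuple $(y^1, \ldots, y^r)$ with $y^k_i := \sum_{x \in R^\A,\, x_k = i} z_x$, of the coset $\{(z_x) \in G^{R^\A} : \sum_{x \in R^\A} z_x = f^p([n])\}$. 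This is automatically a coset of a subgroup of $E^r$, and $\A \to \E$ is verified by taking, for $(x_1, \ldots, x_r) \in R^\A$, $z_{(x_1, \ldots, x_r)} := f^p([n])$ and all other $z$'s zero, which reproduces $(h(x_1), \ldots, h(x_r))$.

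For $\E \to \B$, let $g : E \to B$ be (the restriction of) the universal map from Lemma~\ref{lem:hExists}, so that $f = g \circ \fr$. Given $(y^1, \ldots, y^r) \in R^\E$ witnessed by $(z_x)$, non-degeneracy of order $|R^\A| \leq a^{r_{\max}} \leq N_d$ together with Lemma~\ref{lem:genSmallSet} yields pairwise disjoint $T_x \subseteq [n]$ of size at most $N_h$ with $f^p(T_x) = z_x$; the residue $U := [n] \setminus \bigcup_x T_x$ satisfies $f^p(U) = f^p([n]) - \sum_x z_x = 0$ by additivity, so absorbing $U$ into any one $T_{x_0}$ produces an honest partition of $[n]$. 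The associated $r \times n$ matrix (with $s^k_j := x_k$ for $j \in T_x$) has every column in $R^\A$ by construction, so the polymorphism property combined with $f = g \circ \fr$ gives $(g(y^1), \ldots, g(y^r)) = (f(s^1), \ldots, f(s^r)) \in R^\B$. Lemma~\ref{lem:sandwich} then closes the argument. The main obstacle is precisely this realization step: every formal tuple of the coset $R^\E$ must be producible by a genuine column-in-$R^\A$ matrix, and this is exactly what the constants $N_d = \max(1 + |B|^{a^2} a^{2 r_{\max}}, 3)$ and $N_h = |B|^{a^2}$ are engineered to guarantee (enough simultaneous disjoint small representatives for every one of the at most $a^{r_{\max}}$ patterns, plus slack for the residue-absorption).
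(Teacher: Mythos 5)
Your proof is correct and shares the paper's overall skeleton: build a finite abelian sandwich $\A \to \E \to \B$ and invoke Lemma~\ref{lem:sandwich}, using the universal map from Lemma~\ref{lem:hExists} for $\E\to\B$. But the realization step --- showing that this map carries $R^\E$ into $R^\B$ --- is genuinely different. The paper first cuts the group $(\range(f^p),+)$ down to the cyclic subgroup $G=\langle f^p([n])\rangle\cong\mathbb{Z}_m$, sets $R^\E=\overline{t}+M(R^\A)$, and then proves two auxiliary facts: Lemma~\ref{lem:matrix} (every element of the coset is the column-frequency profile of some matrix with $N\leq N_d$ columns in $R^\A$ where $N\equiv 1\pmod{m}$) and Lemma~\ref{lem:polys} (the bounded-arity operation $h_N(x_1,\ldots,x_N)=h(\sum_i\overline{x_i})$ is a polymorphism, obtained by minoring $f$ through a partition of $[n]$ into $N$ pieces of $f^p$-value $f^p([n])$ plus a $0$-valued residue). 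You bypass both auxiliary lemmas and the congruence arithmetic: you keep the whole group $\range(f^p)$, present $R^\E$ as the image under a group homomorphism $\varphi$ of a coset (so the coset property is automatic), and realize each $y\in R^\E$ directly as an $R^\A$-labelled partition of $[n]$, using not-$|R^\A|$-degeneracy to obtain the disjoint $T_x$'s and absorbing the $0$-valued residue $U$ into one of them, after which you apply $f$ itself to the resulting $r\times n$ matrix. This is more direct, at the cost of a larger domain $E=\range(f^p)^a$ versus the paper's $\mathbb{Z}_m^a$ (harmless for Lemma~\ref{lem:sandwich}). One small remark: the shrinking of the $T_x$'s to size at most $N_h$ via Lemma~\ref{lem:genSmallSet} is not actually needed in the realization step --- disjointness alone suffices there; $N_h$ is only genuinely needed when you build inverses, where you correctly use it to find a small representative that is not a hard set.
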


\begin{theorem}\label{thm:genHardness}
If every polymorphism within $\Pol(\A, \B)$ is $k$-degenerate for some $k \leq N_d$, or has a hard set of size at most $N_h$, then $\PCSP(\A, \B)$ is \NP-hard.
\end{theorem}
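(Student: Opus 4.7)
The plan is to apply Theorem~\ref{thm:chain} by partitioning $\Pol(\A, \B)$ into boundedly many parts and assigning to each polymorphism a small ``witness set'' of coordinates such that the chain condition $I(g) \cap \pi(I(f)) \neq \emptyset$ is satisfied within each part. The two hypotheses of the theorem---that every polymorphism is either $k$-degenerate for some $k \leq N_d$ or has a hard set of size at most $N_h$---provide a natural source of such witnesses.

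I would partition $\Pol(\A, \B)$ into a hard-set part $\M_H$, consisting of all polymorphisms that have a hard set of size at most $N_h$, together with degeneracy parts $\M_{D,k}$ for $2 \leq k \leq N_d$, where $\M_{D,k}$ collects the polymorphisms outside $\M_H$ whose smallest degeneracy index is exactly $k$. For $f \in \M_H$, I would take $I(f)$ to be a canonical hard set of size at most $N_h$. For $f \in \M_{D,k}$, I would fix canonical witnesses $x_1, \ldots, x_k \in \range(f^p)$ of $k$-degeneracy and, using Lemma~\ref{lem:genSmallSet}, pick small preimages $T_1, \ldots, T_k$ of size at most $N_h = |B|^{|A|^2}$ satisfying $f^p(T_i) = x_i$; I would set $I(f) = T_1 \cup \cdots \cup T_k$, of bounded size $N_d \cdot N_h$.

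The main structural identity connecting minors to these witnesses is $g^p(T) = f^p(\pi^{-1}(T))$, valid for $g = f^\pi$ and $T \subseteq [\ar(g)]$. From this one immediately deduces two useful facts: $\pi$ maps hard sets of $f$ to hard sets of $g$ of no greater size; and disjoint preimages of values in $g$ pull back via $\pi^{-1}$ to disjoint preimages in $f$, so any $k$-degeneracy witness for $f$ that happens to lie in $\range(g^p)$ also witnesses $k$-degeneracy of $g$.

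The hard part will be making the canonical choices precise enough that the chain condition holds on each part. For $\M_H$, a single minimum-size hard set of $f$ need not be preserved by minors in a manner compatible with the chain condition, so I expect that $I(f)$ must be a carefully designed union of hard sets, together with a combinatorial argument exploiting the finiteness of $\range(f^p) \subseteq B^{A^2}$. For $\M_{D,k}$, I would arrange that the canonical witnesses for $g$ are chosen from among witnesses inherited from $f$, and then use a pigeonhole argument on $\range(f^p)$ to ensure that the canonical small preimages for $f$ and $g$ intersect after applying $\pi$. Once the chain condition is verified on every part, \NP-hardness of $\PCSP(\A, \B)$ follows directly from Theorem~\ref{thm:chain}.
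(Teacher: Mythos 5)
Your overall strategy — partitioning $\Pol(\A, \B)$ into a hard-set part and degeneracy parts, assigning small witness sets, and invoking Theorem~\ref{thm:chain} — matches the paper's approach, and you correctly identify both the minor identity $g^p(T) = f^p(\pi^{-1}(T))$ and the fact that $\pi$ sends hard sets to hard sets. However, there is a genuine gap in how you index the degeneracy parts.

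You partition the degenerate polymorphisms into parts $\M_{D,k}$ indexed only by the degeneracy level $k$, and then speak of fixing ``canonical witnesses $x_1, \ldots, x_k \in \range(f^p)$'' per polymorphism $f$. The problem is that the chain condition must hold for \emph{every} pair $f, g$ in the same part with $g = f^\pi$, and your proposed fix — arranging that ``the canonical witnesses for $g$ are chosen from among witnesses inherited from $f$'' — is ill-posed: $g$ may arise as a minor of many different $f$'s, so the choice of $I(g)$ cannot be tailored to a particular $f$. With $I(f)$ and $I(g)$ chosen with respect to unrelated witness tuples, there is no reason for $\pi(I(f))$ and $I(g)$ to intersect. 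The paper avoids this by making the witness tuple itself part of the index: the parts are $\M_{x_1, \ldots, x_k}$ for each $k \leq N_d$ and each $k$-tuple of witnesses from $B^{A^2}$ (a finite bound, since $|B^{A^2}|$ and $N_d$ are fixed). Then $f$ and $g$ in the same part \emph{share} the same $(x_1, \ldots, x_k)$, and the chain condition follows by an asymmetric argument: take disjoint $S_1, \ldots, S_{k-1} \subseteq I(f)$ with $f^p(S_i) = x_i$ (possible since $f$ is not $(k-1)$-degenerate) and a set $T \subseteq I(g)$ with $g^p(T) = x_k$; then $f^p(\pi^{-1}(T)) = x_k$, so by $k$-degeneracy of $f$ with witnesses $x_1, \ldots, x_k$, the set $\pi^{-1}(T)$ must intersect some $S_i$, giving $\pi(I(f)) \cap I(g) \neq \emptyset$. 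Relatedly, your $I(f) = T_1 \cup \cdots \cup T_k$ with all $T_i$ arbitrary small preimages is not quite right; the $S_1, \ldots, S_{k-1}$ must be a disjoint family for the degeneracy argument to force an intersection.

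For the hard-set part $\M_H$, you correctly sense that a single canonical hard set will not work, but you leave the construction open. The paper's resolution: take $I(f)$ to be the union of a \emph{maximal} family of pairwise disjoint hard sets of size at most $N_h$; by Lemma~\ref{lem:genManyHard} such a family has at most $|B|^{a^2}$ members, so $|I(f)| \leq |B|^{2a^2}$. Then for $g = f^\pi$, some hard set $S \subseteq I(f)$ pushes forward to a hard set $\pi(S)$ of $g$ of size at most $N_h$ (Lemma~\ref{lem:genInvImage}), which must meet $I(g)$ by the maximality of the family defining $I(g)$.
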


These two theorems will be proved in their own sections later.

\begin{proof}[Proof of \Cref{thm:dichotomy}]
A result of
\Cref{thm:genTractability} and \Cref{thm:genHardness}.
\end{proof}

\begin{remark}
    The following turns out to be an equivalent characterisation of the solvability of $\PCSP(\A,\B)$ for symmetric $\A$ and functional $\B$. Suppose $\A$ has domain $A$ and relations $R_1^\A, \ldots, R_k^\A$. For any positive integer $m$, define a structure $\A_m$ in the following way. The domain of $\A_m$ is the free $\mathbb{Z}_m$-module of functions $\mathbb{Z}_m^A$. We will write elements of this module as formal sums of the form $\sum_{a \in A} x_i \overline{a}$, where $x_i \in \mathbb{Z}_m$ and $\{ \overline{a} \mid a \in A \}$ is a basis for $\mathbb{Z}_m^A$.
    Extend the map $x \mapsto \overline{x}$ to tuples of functions, in the following way: $\overline{(x_1, \ldots, x_n)} = (\overline{x_1}, \ldots, \overline{x_n})$.
    To define the relation $R_k^{\A_m}$, consider the set of tuples of functions $S = \{ \overline{t} \mid t \in R_k^\A \}$. Tuples of functions also form a free $\mathbb{Z}_m$-module; thus take $R_k^{\A_m}$ to be the minimal affine space containing $S$. Equivalently, $R_k^{\A_m}$ is the set of tuples that are equivalent to some $t \in R_k^\A$ modulo $\{ t - t' \mid t, t' \in R_k^\A \}$, if the set of tuple of functions from $\mathbb{Z}_m^A$ is seen as merely an Abelian group. 
    
    With the relational structure $\A_m$ thus defined, and noting that $\A \to \A_m$ always via the homomorphism $x \mapsto \overline{x}$, we note that our result is equivalent to the following: for symmetric $\A$ and functional $\B$ where $(\A, \B)$ is additive and dependent, $\PCSP(\A, \B)$ is solvable in polynomial time if and only if $\A \to \A_m \to \B$ for some positive integer $m$. That this condition is sufficient is clear, as $\CSP(\A_m)$ is solved by \AIP{} as per \Cref{lem:sandwich}. Necessity follow from our proof of \Cref{thm:genTractability} and \Cref{thm:genHardness}.

    We note also that $\A_m$ can also be described as a free structure~\cite{BBKO21}. Namely, define $\mathcal{Z}_m$ to be a minion such that 
    \[
    \mathcal{Z}_m^{(n)} = \{ (x_n, \ldots, x_n) \mid 0 \leq x_1, \ldots, x_n < m, \sum_{i = 1}^n x_i \equiv 1 \pmod{m}\},
    \]
    and let minoring be defined as follows. For $x = (x_1, \ldots, x_n) \in \mathcal{Z}_m^{(n)}$, and for $\pi : [n] \to [n']$, define $y = x^\pi$ by $y_i = \sum_{\pi(j) = i} x_j \bmod m$. With this in mind, it can be seen that $\A_m = \mathbf{F}_{\mathcal{Z}_m}(\A)$.
\end{remark}

\subsection{\texorpdfstring{Proof of \Cref{thm:genTractability}}{Proof of Theorem 42}}

In this section we assume that $\Pol^{(n)}(\A, \B)$ has a polymorphism $f$ of arity $n$ that is not $k$-degenerate for $k$ at most $N_d$, and has no hard sets of size at most $N_h$. Given this, we will prove that $\PCSP(\A, \B)$ is solved by $\AIP$ and is finitely tractable.

\begin{definition}
Define $0 = f^p(\emptyset)$ and $1 = f^p([n])$.
\end{definition}

\begin{lemma}
$(\range(f^p), +, 0)$ forms a group.%
\footnote{This group happens to be Abelian, but this is not needed for the proof.}
\end{lemma}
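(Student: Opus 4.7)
The plan is to verify the four group axioms (closure, associativity, identity, inverses) for $(\range(f^p), +, 0)$ by leveraging (i) additivity, which gives $f^p(S) + f^p(T) = f^p(S \cup T)$ whenever $S, T$ are disjoint, (ii) non-$k$-degeneracy for $k = 2, 3$ (available since $N_d \geq 3$), which lets us realise any two or three elements of $\range(f^p)$ simultaneously as $f^p$ of pairwise disjoint subsets of $[n]$, (iii) Lemma~\ref{lem:genSmallSet}, which lets us represent any element of $\range(f^p)$ as $f^p(S)$ for some $|S| \leq |B|^{|A|^2} = N_h$, and (iv) the absence of hard sets of size at most $N_h$, which is needed only for inverses.

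For closure, given $x, y \in \range(f^p)$, non-$2$-degeneracy produces disjoint $S_1, S_2$ with $f^p(S_i) = x, y$, so $x + y = f^p(S_1 \cup S_2) \in \range(f^p)$. The identity axiom is immediate: $\emptyset$ is disjoint from any $S$, so $0 + f^p(S) = f^p(\emptyset) + f^p(S) = f^p(S)$, and symmetrically on the right. For associativity, non-$3$-degeneracy gives, for any $x, y, z \in \range(f^p)$, pairwise disjoint $S_1, S_2, S_3$ representing them; two applications of additivity then show that both $(x+y)+z$ and $x+(y+z)$ equal $f^p(S_1 \cup S_2 \cup S_3)$, since $S_1 \cup S_2$ is disjoint from $S_3$ and $S_1$ is disjoint from $S_2 \cup S_3$.

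For inverses, given $x \in \range(f^p)$, Lemma~\ref{lem:genSmallSet} produces $S \subseteq [n]$ with $|S| \leq N_h$ and $f^p(S) = x$. Since $f$ has no hard set of size at most $N_h$, $S$ fails to be hard, so there exists $T \supseteq S$ with $f^p(T) = f^p(\emptyset) = 0$. Setting $y = f^p(T \setminus S)$ and using additivity on the disjoint pair $(S, T \setminus S)$ (and, for the other side, on $(T \setminus S, S)$) gives $x + y = f^p(T) = 0 = y + x$, so $y$ is a two-sided inverse.

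I expect no serious obstacle: the main content is just bookkeeping, making sure that when we pick representatives to invoke additivity, the relevant pairs of sets are actually disjoint. The only conceptual point worth flagging is how the two hypotheses on $f$ split the work: non-$k$-degeneracy (for $k \leq 3$) delivers closure, identity, and associativity, whereas the hard-set condition, combined with Lemma~\ref{lem:genSmallSet} to reduce to a small representative, delivers inverses. The stronger quantitative bounds $N_d \geq 1 + |B|^{a^2} a^{2 r_{\max}}$ are presumably needed later for the tractability argument, not for this lemma.
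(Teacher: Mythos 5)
Your proof is correct and follows essentially the same approach as the paper: closure and associativity from non-$2$- and non-$3$-degeneracy, identity from additivity with $\emptyset$, and inverses from Lemma~\ref{lem:genSmallSet} plus the absence of small hard sets. The only cosmetic difference is that you show both sides of identity and inverse explicitly (the paper shows only right identity and right inverse, which is a standard sufficient condition for a semigroup to be a group), and you note explicitly that only $N_d \geq 3$ is needed here, matching the paper's definition $N_d = \max(\cdot,3)$.
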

\begin{proof}
We prove this in a few parts.
\begin{description}
\item[Closure, well-definedness.] Consider
  $x, y \in \range(f^p)$. As $f$ is not 2-degenerate, there exist disjoint $S, T$ such that $f^p(S) = x, f^p(T) = y$. Thus
  $x + y = f^p(S) + f^p(T) = f^p(S \cup T) \in \range(f^p)$,
  so $+$ is closed and well-defined.
\item[Associativity.] Consider any $x, y, z \in \range(f^p)$. Since $f$ is not 3-degenerate, there exist disjoint $S, T, U \subseteq [n]$ such that $f^p(S) = x, f^p(T) = y, f^p(U) = z$. Thus,
\[
x + (y + z) = x + f(T \cup U)
= f(S \cup (T \cup U))
= f((S \cup T )\cup U)
= f(S \cup T) + z
= (x + y) + z.
\]
\item[Identity element.] Consider any $x \in \range(f^p)$. Suppose $f^p(S) = x$ for some $S \subseteq [n]$. Thus,
$x + 0 = f^p(S \cup \emptyset) = f^p(S) = x$.
\item[Inverses.] Consider any $x \in \range(f^p)$. Suppose that $f(S) = x$; by \Cref{lem:genSmallSet}, some $T \subseteq S$ exists with size at most $|B|^{a^2}$ such that $f^p(T) = f^p(S) = x$. Since $f$ has no hard sets of size at most $N_h = |B|^{a^2}$, $T$ is not a hard set, and thus some $U \supseteq T$ exists such that $f^p(U) = 0$. Thus $x + f^p(U \setminus T) = f^p(T) + f^p(U \setminus T) = f^p(U) = 0$, so $x$ has an inverse.
\end{description}
Thus we conclude that $(\range(f^p), +, 0)$ is a group.
\end{proof}

\begin{definition}
Let $G$ be the Abelian subgroup of
$(\range(f^p), +, 0)$
generated by $1 = f^p([n])$. Let $m$ be the order of $1$ in $G$. Thus $G \cong \mathbb{Z}_m$. Note that $m \leq |\range(f^p)| \leq |B|^{a^2}$. We will identify $\mathbb{Z}_m$ with $G$ (e.g.~we allow ourselves to write $1 + 1 = 2$, provided $m \geq 3$, where $1, 2 \in \range(f^p)$).

Define the Abelian group $(H, +) = G^a$.
We will identify $H$ with $\mathbb{Z}_m^a$. We will also define $0$ to be the 0 element in $H$ as well as $G$.

For any $i \in [a]$, define $\overline{i} \in H$ as the unit vector that has a 1 at position $i$. For some tuple $(x_1, \ldots, x_r) \in {[a]}^r$, define $\overline{(x_1, \ldots, x_r)} = (\overline{x_1}, \ldots, \overline{x_r}) \in H^r$. Define $0$ to be the zero vector in $H^r$ as well.\footnote{We can see the elements of $H$ as frequency vectors modulo $m$. Indeed, for $x_1, \ldots, x_n \in [a]$, $\overline{x_1} + \cdots + \overline{x_n}$ counts the number of appearances of $1, 2, \ldots, a$ modulo $m$ among $x_1, \ldots, x_n$. In line with this, the elements of $H^r$ can be seen as tuples of $r$ frequency vectors. Under this view, for $t_1, \ldots, t_n \in {[a]}^r$, the sum $\overline{t_1} + \cdots + \overline{t_n}$ is a tuple of $r$ frequency vectors, where the $i$-th frequency vector counts the frequencies of the elements of $[a]$ among the $i$-th elements of the tuples $t_1, \ldots, t_n$, modulo $m$.}

For any relation $R^\A$ of $\A$ of arity $r$, define $M(R^\A)$ to be the subgroup of $H^r$ generated by $\overline{p} - \overline{q}$ for $p, q \in R^\A$. Since $H^r$ is Abelian, $M(R^\A)$ is a normal subgroup.
\end{definition}

\begin{lemma}\label{lem:matrix}
Fix some relation $R^\A$ of $\A$; suppose it has arity~$r$. Let $t$ be some tuple of $R^\A$. Define $M = M(R^\A)$.
Consider any $(a_1, \ldots, a_r) \in H^r$ such that $(a_1, \ldots, a_r) \equiv \overline{t} \bmod M$. There exists a matrix $(x_{ij})$ with $N \leq N_d$ columns and $r$ rows, where $N \equiv 1 \bmod m$, with elements in $[a]$, such that each column is a tuple of $R^\A$, and, for each row $i$, we have
\[
\sum_{j = 1}^N \overline{x_{ij}} = a_i.
\]
\end{lemma}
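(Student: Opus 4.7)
The plan is to give an explicit construction by reducing to a small generating set of $M$ and then converting any negative integer coefficients into non-negative ones by exploiting the fact that $H^r$ is torsion of exponent $m$. Fix some $t_0 \in R^\A$. Since $\overline{p} - \overline{q} = (\overline{p} - \overline{t_0}) - (\overline{q} - \overline{t_0})$ for every $p, q \in R^\A$, the subgroup $M$ is already generated by the $|R^\A| - 1$ elements $\overline{u} - \overline{t_0}$ with $u \in R^\A \setminus \{t_0\}$. Because $H^r \cong \mathbb{Z}_m^{ar}$ has exponent dividing $m$, and $(a_1, \ldots, a_r) - \overline{t} \in M$, there exist integers $0 \leq d_u < m$ such that
\[
(a_1, \ldots, a_r) - \overline{t} \;=\; \sum_{u \in R^\A \setminus \{t_0\}} d_u \,(\overline{u} - \overline{t_0})
\]
in $H^r$. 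Writing $D = \sum_{u \neq t_0} d_u$, this rearranges to $(a_1, \ldots, a_r) = \overline{t} + \sum_{u \neq t_0} d_u \overline{u} - D \overline{t_0}$.

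The only obstacle to reading the right-hand side as a non-negative integer combination of $\overline{u}$'s is the term $-D\overline{t_0}$. Choosing $K = \lceil D/m \rceil$, one has the identity $-D\overline{t_0} = (Km - D)\overline{t_0}$ in $H^r$ with $Km - D \geq 0$, so
\[
(a_1, \ldots, a_r) \;=\; \overline{t} + \sum_{u \in R^\A \setminus \{t_0\}} d_u \overline{u} + (Km - D)\overline{t_0}.
\]
I would then construct the matrix by letting its columns consist of one copy of $t$, $d_u$ copies of each $u \in R^\A \setminus \{t_0\}$, and $Km - D$ copies of $t_0$. Every column is a tuple of $R^\A$ by construction, and reading off the $i$-th coordinate of the above equation gives precisely $\sum_{j=1}^N \overline{x_{ij}} = a_i$.

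It remains to verify the two quantitative claims about $N$. The total column count is $N = 1 + D + (Km - D) = 1 + Km$, so $N \equiv 1 \pmod m$ is automatic. For the upper bound, since each $d_u < m$ and $|R^\A| \leq a^r \leq a^{r_{\max}}$, one has $D < a^{r_{\max}} m$, hence $K \leq a^{r_{\max}}$. Combined with the already established bound $m \leq |\range(f^p)| \leq |B|^{a^2}$, this yields $N \leq 1 + a^{r_{\max}} |B|^{a^2} \leq 1 + a^{2r_{\max}} |B|^{a^2} \leq N_d$.

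The main obstacle is really just arranging the two constraints $N \equiv 1 \pmod m$ and $N \leq N_d$ to hold simultaneously; the trick of rewriting $-D\overline{t_0}$ as $(Km - D)\overline{t_0}$ does both at once, because it clears all negative coefficients while forcing $N - 1$ to be a multiple of $m$. All that is needed beyond this idea is the observation that $M$ admits a generating set of size $|R^\A| - 1$, which is what keeps $D$ (and therefore $K$) small enough for the required bound.
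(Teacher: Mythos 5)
Your proof is correct. It takes the same underlying idea as the paper — use torsion of exponent $m$ to reduce coefficients to $[0,m)$, then turn the resulting integer combination into a matrix of columns from $R^\A$ — but the way you handle the negative coefficients is genuinely different. The paper keeps the full generating set $\{\overline{p}-\overline{q} : p,q\in R^\A\}$ and rewrites \emph{each} term $c_{pq}(\overline{p}-\overline{q})$ as $c_{pq}\overline{p}+(m-c_{pq})\overline{q}$, which forces exactly $m$ columns per pair and hence $N = 1+m|R^\A|^2$ automatically. You instead shrink the generating set to the $|R^\A|-1$ elements $\overline{u}-\overline{t_0}$, so all negative contributions concentrate on the single vector $\overline{t_0}$; you then clear them in one stroke via $-D\overline{t_0}=(Km-D)\overline{t_0}$ with $K=\lceil D/m\rceil$. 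Both approaches make $N-1$ a multiple of $m$, but yours gives the tighter bound $N\leq 1+a^{r_{\max}}m$ rather than the paper's $1+m|R^\A|^2 \leq 1+m\,a^{2r_{\max}}$, at the small cost of a separate argument that $K\leq a^{r_{\max}}$. Both bounds fit under $N_d$, so the difference is cosmetic here, but your version is the more economical construction.
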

\begin{proof}
Note that every element in $H^r$ has order that divides $m$ (since $H^r \cong {(G^{a})}^r \cong {(\mathbb{Z}_m^a)}^r$). Thus, since $(a_1, \ldots, a_r) \equiv \overline{t} \bmod M$, and since $M$ is generated by $\overline{p} - \overline{q}$ for $p, q \in R^\A$, it follows that there exist coefficients $c_{pq} \in \{0, \ldots, m-1\}$ for $p, q \in R^\A$ such that
\begin{equation}\label{eqn1}
(a_1, \ldots, a_r) = \overline{t} + \sum_{p, q \in R^\A} c_{pq}(\overline{p} - \overline{q})
=
\overline{t} + \sum_{p, q \in R^\A} c_{pq}\overline{p} + (m - c_{pq}) \overline{q}.
\end{equation}
This indicates the matrix we will use: let $(x_{ij})$ be a matrix whose first column is $t$, and, for each $p, q \in R^\A$, has $c_{pq}$ columns equal to $p$ and $m - c_{pq}$ columns equal to $q$. Clearly we use $N = 1 + m |R^\A|^2 \leq 1 +|B|^{a^2} a^{2r_{\max}} \leq N_d$ columns, and $N \equiv 1 \bmod m$. To see why $\sum_{j =1}^N \overline{x_{ij}} = a_i$ for each $i$, note that this condition is equivalent to $(a_1, \ldots, a_r) = \sum_{j = 1}^N \overline{c_j}$, where $c_1, \ldots, c_N$ are the columns of the matrix. But this is precisely Equation~\eqref{eqn1}. Thus we have created the required matrix.
\end{proof}

\begin{lemma}\label{lem:polys}
    Suppose $f = h \circ \fr$.
    For every $N \leq N_d$ such that $N \equiv 1 \bmod m$, the function $h_N : A^N \to B$ defined by
    \[
    h_N(x_1, \ldots, x_N) = h\left( \sum_{i = 1}^N \overline{x_i}\right)
    \]
    is a polymorphism of $(\A, \B)$.
\end{lemma}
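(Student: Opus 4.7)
My plan is to show that $h_N$ is a polymorphism by exhibiting, for any candidate matrix of columns in $R^\A$, a ``blown-up'' matrix of $n$ columns (also with entries in $R^\A$) on which the original polymorphism $f$ acts row-by-row to reproduce $h_N$ on the original rows. Concretely, fix a relation $R^\A$ of arity $r$ and an $r \times N$ matrix with columns $c_1, \dots, c_N \in R^\A$ and rows $y_1, \dots, y_r \in A^N$. I need to prove that $(h_N(y_1),\dots,h_N(y_r)) \in R^\B$.

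\textbf{Choosing the blocks.} The key ingredient is a partition $[n] = B_1 \sqcup \cdots \sqcup B_N$ with $f^p(B_k) = 1$ for every $k$. Since $N \leq N_d$ and $f$ is not $N$-degenerate, there exist pairwise disjoint $B_1',\dots,B_N' \subseteq [n]$ with $f^p(B_k') = 1$. Their union $U = B_1' \cup \cdots \cup B_N'$ satisfies $f^p(U) = N\cdot 1$ by additivity; since $N \equiv 1 \pmod m$ and $1$ has order $m$ in $G$, we get $f^p(U) = 1 = f^p([n])$, so Lemma~\ref{lem:add-sub} gives $f^p([n] \setminus U) = 0$. Absorbing this residual into $B_1'$ (and renaming) yields the desired partition with $f^p(B_k) = 1$ for all $k$, where I again use additivity for $k=1$ to see $f^p(B_1' \cup ([n]\setminus U)) = 1 + 0 = 1$.

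\textbf{Blow-up and verification.} Define $z_i \in A^n$ by $(z_i)_\ell = (y_i)_k$ whenever $\ell \in B_k$. For every position $\ell \in B_k$, the column of the $r \times n$ matrix formed by $z_1,\dots,z_r$ is exactly $c_k \in R^\A$, so this matrix is a valid input to the polymorphism $f$. Hence $(f(z_1),\dots,f(z_r)) \in R^\B$. For the $j$-th coordinate of $\fr(z_i)$, let $S_j^i = \{\ell : (z_i)_\ell = j\} = \bigcup_{k : (y_i)_k = j} B_k$; by additivity and the choice of the $B_k$,
\[
f^p(S_j^i) \;=\; \sum_{k : (y_i)_k = j} f^p(B_k) \;=\; |\{k : (y_i)_k = j\}| \cdot 1,
\]
which is precisely the $j$-th component of $\sum_{k=1}^N \overline{(y_i)_k} \in H = G^a$. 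Therefore $f(z_i) = h(\fr(z_i)) = h\bigl(\sum_{k=1}^N \overline{(y_i)_k}\bigr) = h_N(y_i)$, and $(h_N(y_1),\dots,h_N(y_r)) = (f(z_1),\dots,f(z_r)) \in R^\B$.

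\textbf{Main obstacle.} The only real subtlety is ensuring that the blocks $B_k$ partition \emph{all of} $[n]$: non-$N$-degeneracy only gives disjoint blocks with $f^p$-value $1$, and their union is a priori a proper subset. The trick is the identity $N \cdot 1 = 1$ (made possible precisely by the hypothesis $N \equiv 1 \pmod m$) together with additivity, which forces the leftover to have $f^p$-value $0$ and hence lets us absorb it harmlessly. Once that combinatorial fact is in place, everything else is a direct bookkeeping computation using additivity and the factorisation $f = h \circ \fr$.
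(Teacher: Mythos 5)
Your proof is correct and takes essentially the same route as the paper's: both use non-$N$-degeneracy to produce $N$ disjoint blocks with $f^p$-value $1$, the hypothesis $N \equiv 1 \bmod m$ to show the leftover has $f^p$-value $0$, and the factorisation $f = h \circ \fr$ together with additivity to evaluate the relevant minor of $f$. The only cosmetic differences are that you absorb the leftover into $B_1$ to get a clean $N$-block partition (whereas the paper keeps it as a dummy $(N{+}1)$-th block that the resulting function ignores), and that you verify the polymorphism condition for $h_N$ by hand via the blow-up matrix rather than invoking closure under minors.
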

\begin{proof}
    By assumption, $f$ is not $N$-degenerate. Thus there exist disjoint subsets $S_1, \ldots, S_N$ of $[n]$ where $f^p(S_1) = \cdots = f^p(S_N) = f^p([n]) = 1$. Let $T = [n] \setminus (S_1 \cup \ldots \cup S_N)$. Note that $S_1, \ldots, S_N, T$ form a partition of $[n]$. Furthermore,
\[
    1 = f^p([n])
    =
    f^p(S_1) + \cdots  + f^p(S_N) + f^p(T)
    =
    N + f^p(T)
    = 1 + f^p(T).
\]
    The last equation holds as $N \equiv 1 \bmod m$, and addition is done in $G \cong \mathbb{Z}_m$. Thus $f^p(T) = 0$.

    Let $\pi : [n] \to [N + 1]$ be a function that takes $x \in S_i$ to $i$ and $x \in T$ to $N+1$. Consider the polymorphism $f^\pi$. Since $f = h \circ \fr$, by the definition of $\fr$ we can see that
    \[
    f^\pi(U_1, \ldots U_a) = h(f^p(\pi^{-1}(U_1)), \ldots, f^p(\pi^{-1}(U_a))).
    \]
    Now consider $f^p(\pi^{-1}(U))$. Note that $\pi^{-1}(U) = T_U \cup \bigcup_{i \in U \cap [N]} S_i$, where $T_U = T$ if $N+1\in U$, and $T_U = \emptyset$ otherwise. Thus
    \begin{multline*}
    f^p(\pi^{-1}(U))
    =
    f^p\left ( T_U \cup \bigcup_{i \in U \cap [N]} S_i \right)
    =
    f^p(T_U) + \sum_{i \in U \cap [N]} f^p(S_i) \\
    =
    0 + \sum_{i \in U \cap [N]} 1
    =
    | U \cap [N] | \bmod m,
    \end{multline*}
    where $|U \cap [N] | \bmod m$ is taken as an element of $\mathbb{Z}_m \cong G$. In other words,
    \[
    f^\pi(U_1, \ldots, U_a) = h( | U_1 \cap [N] | \bmod m, \ldots, |U_a \cap [N] | \bmod m).
    \]
    Suppose now that $U_1, \ldots, U_a$ are the set family representation of the
    input vector $x_1, \ldots, x_{N+1}$ (i.e.~$x_i = j$ if and only if $i \in U_j$) and consider the sum $\sum_{i = 1}^N \overline{x_i}$. The $j$-th coordinate of this sum is the number of $j$'s that appear in $x_1, \ldots, x_N$, modulo $m$, i.e.~$|U_j \cap [N]| \bmod m$. Thus we see that the equation above is equivalent to
    \[
    f^\pi(x_1, \ldots, x_{N+1}) = h\left( \sum_{i = 1}^N \overline{x_i}\right).
    \]
    This polymorphism ignores $x_{N+1}$, so we find that the function
    \[
    (x_1, \ldots, x_{N}) \mapsto h\left( \sum_{i = 1}^N \overline{x_i}\right)
    \]
    is also a polymorphism. But this is just $h_N$, which is thus a polymorphism as required.
\end{proof}

We can now prove the main theorem in this subsection.

\begin{proof}[Proof of \Cref{thm:genTractability}]
    We will show that $(\A, \B)$ admits a homomorphic sandwich $\A \to \E \to \B$, where $\E$ is a relational structure whose domain is $H$, and where each relation will be of the form (i) $c + M$ for some $c \in H^r$ and $M \lhd H^r$, or (ii) empty. By \Cref{lem:sandwich} this implies our desired conclusion. The homomorphism $\A \to \E$ will be given by the map $g(x) = \overline{x}$. The homomorphism $\E \to \B$ will be given by any function $h$ for which $f = h \circ \fr$. (Recall that such a function exists by \Cref{lem:hExists}.) We will construct $\E$ relation by relation, showing along the way that $g$ and $h$ are in fact homomorphisms.
    
    Consider some relation $R^\A$ of $\A$, of arity $r$, that corresponds to a relation $R^\B$ of $\B$, and $R^\E$ in $\E$. If $R^\A$ is empty then we can simply set $R^\E$ to be empty, and then $g$ and $h$ map tuples of $R^\A$ to tuples of $R^\E$, and then to tuples of $R^\B$ vacuously. Thus,
    suppose $t = (t_1, \ldots, t_r)$ is some tuple of $R^\A$, and let $M = M(R^\A)$. Then we set $R^\E = \overline{t} + M$; in other words, a tuple $x \in H^r$ will belong to this relation if and only if $x \equiv \overline{t} \bmod M$.
    
    We first show that $g$ maps $R^\A$ to $R^\E = \overline{t} + M$. Indeed, consider any tuple $x \in R^\A$. We know that $g(x) = \overline{x}$ by definition. Thus, $g(x) = \overline{x} = \overline{t} + (\overline{x} - \overline{t}) \in \overline{t} + M$. Thus $g$ maps $R^\A$ to $\overline{t} + M$.
    
    We now show that $h$ maps $R^\E = \overline{t} + M$ to $R^\B$. Consider any tuple $(a_1, \ldots, a_r) \in \overline{t} + M$. By \Cref{lem:matrix} there exists some matrix $X = (x_{ij})$ with $N \leq N_d$ columns and $r$ rows, where $N \equiv 1 \bmod m$, such that each column is an element of $R^\A$, and for each $i \in [r]$ we have
    \[
    \sum_{j = 1}^N \overline{x_{ij}} = a_i.
    \]
    Furthermore, by \Cref{lem:polys}, the function $h_N : A^N \to B$ given by
    \[
    h_N(x_1, \ldots, x_N) = h\left(\sum_{i = 1}^N \overline{x_i}\right)
    \]
    is a polymorphism. Note now that
    \begin{multline*}
        (h(a_1), \ldots, h(a_r))
        =
        \left(h\left(\sum_{j = 1}^N \overline{x_{1j}}\right), \ldots,
        h\left(\sum_{j = 1}^N \overline{x_{rj}}\right)\right) \\
        =
        (h_N(x_{11}, \ldots, x_{1N}),
        \ldots
        ,h_N(x_{r1}, \ldots, x_{rN}))
        \in R^\B.
    \end{multline*}
    The last inclusion holds since the tuple in question is the result of applying $h_N$, a polymorphism of $(\A, \B)$, to the rows of a matrix whose columns are elements in $R^\A$.
    
    Thus we note that $\A \to \E \to \B$ for some structure $\E$ that satisfies
    the conditions in \Cref{lem:sandwich}. In conclusion, $\PCSP(\A, \B)$
    is solved by $\AIP$ and is finitely tractable.
\end{proof}

\subsection{\texorpdfstring{Proof of \Cref{thm:genHardness}}{Proof of Theorem 43}}

In this section we will prove that $\PCSP(\A, \B)$ is \NP-hard if each polymorphism $f \in \Pol(\A, \B)$ is $k$-degenerate for some $k$ at most $N_d$, or has a hard set of size at most $N_h$.

\begin{lemma}\label{lem:genManyHard}
If $f \in \Pol(\A, \B)$ then $f$ cannot have more than $|B|^{a^2}$ disjoint hard sets.
\end{lemma}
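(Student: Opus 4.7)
The plan is to derive a contradiction directly from Lemma~\ref{lem:genZero}, which is essentially custom-built for this situation. Assume for contradiction that $f\in\Pol^{(n)}(\A,\B)$ admits pairwise disjoint hard sets $S_1,\ldots,S_k \subseteq [n]$ with $k > |B|^{a^2}$. Then $\{S_1,\ldots,S_k\}$ is a family of disjoint subsets of $[n]$ containing at least $|B|^{|A|^2}$ sets, so Lemma~\ref{lem:genZero} applies to this family and yields a nonempty subfamily $\mathcal{B} \subseteq \{S_1,\ldots,S_k\}$ such that
\[
f^p\!\left(\bigcup \mathcal{B}\right) = f^p(\emptyset).
\]

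Now I would pick any $S_i \in \mathcal{B}$ (such an $i$ exists because $\mathcal{B}$ is nonempty) and set $T := \bigcup \mathcal{B}$. Clearly $T \supseteq S_i$, yet $f^p(T) = f^p(\emptyset)$, which directly contradicts the defining property of a hard set (that $f^p(T') \neq f^p(\emptyset)$ for every $T' \supseteq S_i$). Thus no such collection of $k > |B|^{a^2}$ disjoint hard sets can exist, proving the lemma.

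There is essentially no real obstacle here: once Lemma~\ref{lem:genZero} is in hand, the argument is a one-line pigeonhole-plus-definition chase. The only thing worth double-checking is that the lemma hypothesis is met --- namely that $\mathcal{A} = \{S_1,\ldots,S_k\}$ genuinely consists of pairwise disjoint subsets of $[n]$ of cardinality at least $|B|^{|A|^2}$, which follows immediately from our assumption and the identification $A = [a]$.
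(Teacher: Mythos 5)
Your proof is correct and follows essentially the same route as the paper's: apply Lemma~\ref{lem:genZero} to the family of disjoint hard sets to produce a nonempty subfamily whose union maps to $f^p(\emptyset)$, then observe that any member of that subfamily fails the hard-set condition. The only cosmetic difference is that you argue by contradiction while the paper phrases it as the contrapositive (``any family of more than $|B|^{a^2}$ disjoint sets contains a non-hard set'').
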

\begin{proof}
Equivalently we show that any family $\mathcal{G}$ of more than $|B|^{a^2}$ disjoint sets contains a non-hard set.
Apply \Cref{lem:genZero} to $\mathcal{G}$ to find a nonempty subfamily $\{ G_1, \ldots \}$ such that $f^p(\bigcup_i G_i) = f^p(\emptyset)$. Thus $G_i \in \mathcal{G}$ is not a hard set.
\end{proof}

\begin{lemma}
Suppose $f \in \Pol^{(n)}(\A, \B)$ and $\pi : [n] \to [m]$. Then ${(f^\pi)}^p = f^p \circ \pi^{-1}$.
\end{lemma}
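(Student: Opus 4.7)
The plan is to prove this by a direct unpacking of definitions: it is essentially a bookkeeping verification, with no substantive obstacle beyond carefully tracking which coordinate lives in which set.

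First, I would fix arbitrary $S \subseteq [m]$ and $i, j \in A$ and show that ${(f^\pi)}^p(S)(i, j) = f^p(\pi^{-1}(S))(i, j)$. By the definition of $f^p$ applied to the $m$-ary operation $f^\pi$, we have ${(f^\pi)}^p(S)(i, j) = (f^\pi)_{ij}(S) = f^\pi(y_1, \ldots, y_m)$, where $y_k = j$ if $k \in S$ and $y_k = i$ otherwise. Next, by the definition of the minor, $f^\pi(y_1, \ldots, y_m) = f(y_{\pi(1)}, \ldots, y_{\pi(n)})$.

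The key observation is then just to rewrite the indicator: for each $k \in [n]$, we have $y_{\pi(k)} = j$ precisely when $\pi(k) \in S$, i.e.\ when $k \in \pi^{-1}(S)$, and $y_{\pi(k)} = i$ otherwise. Hence $(y_{\pi(1)}, \ldots, y_{\pi(n)})$ is exactly the characteristic vector (in the $\{i, j\}$ sense used by $f^p$) of the set $\pi^{-1}(S) \subseteq [n]$. Therefore $f(y_{\pi(1)}, \ldots, y_{\pi(n)}) = f_{ij}(\pi^{-1}(S)) = f^p(\pi^{-1}(S))(i, j)$.

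Since this equality holds for every choice of $S$ and $(i, j)$, it follows that ${(f^\pi)}^p = f^p \circ \pi^{-1}$, as required. The hard part is nothing more than keeping straight the directions of $\pi$ and $\pi^{-1}$; I would present the argument as a short displayed chain of equalities so the substitution step is transparent.
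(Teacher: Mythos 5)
Your proof is correct and takes essentially the same route as the paper: both unpack the definitions of $f^p$, $f_{ij}$, and the minor operation, and observe that taking a minor of the characteristic vector of $S$ gives the characteristic vector of $\pi^{-1}(S)$. The only cosmetic difference is that you work with explicit tuples $(y_1,\ldots,y_m)$ while the paper phrases the same substitution in its set-partition notation $(T_1,\ldots,T_a)$.
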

\begin{proof}
Note that $f_{ij}(S) = f(T_1, \ldots, T_a)$ where $T_j = S$, $T_i = [n] \setminus S$, and all the other inputs are $\emptyset$. Now, ${(f^\pi)}_{ij}(S) = f^\pi(T_1, \ldots, T_a) = f(\pi^{-1}(T_1), \ldots, \pi^{-1}(T_a)) = f_{ij}(\pi^{-1}(S))$, and so ${(f^\pi)}_{ij} = f_{ij} \circ \pi^{-1}$. Our conclusion follows by applying this fact for each $i, j \in A$.
\end{proof}

\begin{lemma}\label{lem:genInvImage}
Suppose $f \in \Pol^{(n)}(\A, \B)$ and $\pi : [n] \to [m]$. If $S$ is a hard set of $f$ then $\pi(S)$ is a hard set of $f^\pi$. 
\end{lemma}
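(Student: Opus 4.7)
The plan is to reduce the claim directly to the previous lemma, which gives ${(f^\pi)}^p = f^p \circ \pi^{-1}$. Once this identification is in place, hardness of $\pi(S)$ for $f^\pi$ becomes a statement about preimages under $\pi$, and the work is just a routine set-theoretic check.

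More concretely, I would unfold the definition: $\pi(S) \subseteq [m]$ is hard for $f^\pi$ iff for every $U \supseteq \pi(S)$ in $[m]$, one has $(f^\pi)^p(U) \neq (f^\pi)^p(\emptyset)$. By the preceding lemma, $(f^\pi)^p(U) = f^p(\pi^{-1}(U))$ and $(f^\pi)^p(\emptyset) = f^p(\pi^{-1}(\emptyset)) = f^p(\emptyset)$, so it suffices to show $f^p(\pi^{-1}(U)) \neq f^p(\emptyset)$.

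The key set-theoretic observation is that $U \supseteq \pi(S)$ implies $\pi^{-1}(U) \supseteq \pi^{-1}(\pi(S)) \supseteq S$. Setting $T := \pi^{-1}(U)$, we have $T \supseteq S$, and since $S$ is hard for $f$ by hypothesis, the definition of hardness yields $f^p(T) \neq f^p(\emptyset)$, completing the argument.

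The proof is essentially a one-liner and I do not anticipate any real obstacle; the only thing to be careful about is that hardness is defined using arbitrary supersets (not just supersets of the same size), which is exactly what makes the pullback argument work cleanly, since $\pi^{-1}(U)$ may well be strictly larger than $S$.
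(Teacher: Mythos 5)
Your proof is correct and uses the same key ingredients as the paper (the identity $(f^\pi)^p = f^p \circ \pi^{-1}$ and the inclusion $S \subseteq \pi^{-1}(\pi(S))$); the only difference is that the paper phrases it as a contrapositive while you argue directly, which is an immaterial stylistic choice.
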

\begin{proof}
We prove this by contrapositive. Suppose $\pi(S)$ is not a hard set of $f^\pi$. Then some $T \supseteq \pi(S)$ exists such that ${(f^\pi)}^p(T) = {(f^\pi)}^p(\emptyset)$. So $(f^p)(\pi^{-1}(T)) = {(f^\pi)}^p(T) = {(f^\pi)}^p(\emptyset) = (f^p)(\pi^{-1}(\emptyset)) = f^p(\emptyset)$. Thus $f^p(\pi^{-1}(T)) = f^p(\emptyset)$, and $S$ is not a hard set, as $S \subseteq \pi^{-1}(T)$.
\end{proof}

Let $\M_h$ denote the subset of $\Pol(\A, \B)$ whose polymorphisms have hard sets of size at most $N_h$. Let $\M_{x_1, \ldots, x_k}$ denote the subset of $\Pol(\A, \B)$ whose polymorphisms are $k$-degenerate, yet not $(k-1)$-degenerate, where $x_1, \ldots, x_k \in B^{A^2}$ are witnesses to this degeneracy. By assumption, and as no polymorphism is $1$-degenerate,
\begin{equation}\label{eqn2}
\Pol(\A, \B) = \M_h \cup \bigcup_{k = 2}^{N_d} \bigcup_{x_1 \in \range(f^p)} \ldots \bigcup_{x_k \in \range(f^p)} \M_{x_1, \ldots, x_k}.
\end{equation}

\begin{lemma}
There exists some assignment $I$ that takes $f \in \M_h^{(n)}$  to a subset of $[n]$ of size at most $|B|^{2a^2}$ such that, whenever $g \in \M_h^{(m)}$ and $g = f^\pi$ for some $\pi : [n] \to [m]$, we have that $\pi(I(f)) \cap I(g) \neq \emptyset$.
\end{lemma}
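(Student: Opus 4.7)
The plan is to define $I(f)$ as the union of a greedily chosen maximal collection of pairwise disjoint hard sets of $f$, each of size at most $N_h = |B|^{a^2}$. Specifically, starting from the empty collection, I would iteratively add any hard set of $f$ of size at most $N_h$ that is disjoint from every previously chosen set, until no such hard set remains; call the resulting family $\mathcal{F}(f)$, and set $I(f) = \bigcup \mathcal{F}(f)$. The family $\mathcal{F}(f)$ is non-empty because $f \in \M_h$ guarantees at least one hard set of size at most $N_h$.

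For the size bound, I would invoke Lemma~\ref{lem:genManyHard}, which asserts that any family of pairwise disjoint hard sets has at most $|B|^{a^2}$ members, so $|\mathcal{F}(f)| \leq |B|^{a^2}$. Combined with the per-set bound $|S| \leq N_h = |B|^{a^2}$ for each $S \in \mathcal{F}(f)$, this gives $|I(f)| \leq |B|^{2a^2}$, matching the required bound.

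For the compatibility clause, suppose $g = f^\pi$ for some $\pi : [n] \to [m]$, and assume for contradiction that $\pi(I(f)) \cap I(g) = \emptyset$. Pick any $S \in \mathcal{F}(f)$. By Lemma~\ref{lem:genInvImage}, $\pi(S)$ is a hard set of $g$, and $|\pi(S)| \leq |S| \leq N_h$. Moreover, $\pi(S)$ is non-empty, because hard sets are themselves non-empty (the empty set fails the definition immediately at $T = \emptyset$), so $S \neq \emptyset$ forces $\pi(S) \neq \emptyset$. Since $\pi(S) \subseteq \pi(I(f))$, the assumption gives $\pi(S) \cap I(g) = \emptyset$, and hence $\pi(S)$ is disjoint from every element of $\mathcal{F}(g)$. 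Then $\mathcal{F}(g) \cup \{\pi(S)\}$ is a strictly larger family of pairwise disjoint hard sets of $g$, each of size at most $N_h$, contradicting the maximality of $\mathcal{F}(g)$.

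I do not expect a substantive obstacle here; the argument is a direct greedy-maximality reasoning that combines Lemma~\ref{lem:genManyHard} (bounding the count of disjoint hard sets) with Lemma~\ref{lem:genInvImage} (transfer of hard sets under minors). The only point requiring a moment of care is the non-emptiness of $\pi(S)$, which falls out immediately from the definition of a hard set.
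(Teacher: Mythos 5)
Your proof is correct and takes essentially the same approach as the paper: define $I(f)$ as the union of a greedily constructed maximal family of pairwise disjoint hard sets each of size at most $N_h$, bound $|I(f)|$ via Lemma~\ref{lem:genManyHard}, and derive the intersection property from Lemma~\ref{lem:genInvImage} together with maximality of the family. Your extra remark that hard sets (and hence $\pi(S)$) are non-empty is a correct detail that the paper leaves implicit.
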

\begin{proof}
To construct $I(f)$, let $S_1, \ldots$ be a maximal sequence of disjoint hard sets of $f$ of size at most $|B|^{a^2}$, constructed greedily, and then set $I(f)$ to be the union of these sets. Since there can be at most $|B|^{a^2}$ disjoint hard sets by \Cref{lem:genManyHard}, it follows that $|I(f)| \leq |B|^{2a^2}$.

Consider now any $f, g \in \M_h$ such that $g = f^\pi$. Note that $I(f)$ contains within it a hard set $S$ of size at most $|B|^{a^2}$. Thus $\pi(I(f)) \supseteq \pi(S)$, which is a hard set of size at most $|B|^{a^2}$ by \Cref{lem:genInvImage}, and thus must intersect $I(g)$ by maximality. It follows that $\pi(I(f)) \cap I(g) \neq \emptyset$.
\end{proof}

\begin{lemma}
For $k \geq 2, x_1, \ldots, x_k \in \range(f^p)$, there exists some assignment $I$ that takes $f \in \M^{(n)}_{x_1, \ldots, x_k}$ to a subset of $[n]$ of size at most $k |B|^{a^2}$ such that, whenever $g \in \M^{(m)}_{x_1, \ldots, x_k}$ and $g = f^\pi$ for some $\pi : [n] \to [m]$ we have that $\pi(I(f)) \cap I(g) \neq \emptyset$.
\end{lemma}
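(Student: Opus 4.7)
The plan is to adapt the construction from the preceding lemma by exploiting the structural information provided by $k$-degeneracy with specific witnesses $x_1, \ldots, x_k$. For each $f \in \M^{(n)}_{x_1, \ldots, x_k}$, I will construct $I(f)$ as the union of $k$ small sets: $k-1$ pairwise disjoint ``anchor'' sets witnessing the values $x_1, \ldots, x_{k-1}$, plus one auxiliary set witnessing $x_k$. The size bound $|I(f)| \leq k|B|^{a^2}$ will follow immediately from Lemma~\ref{lem:genSmallSet}.

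For the construction, since $f$ is not $(k-1)$-degenerate, applying this property to the specific tuple $(x_1, \ldots, x_{k-1})$ (whose entries lie in $\range(f^p)$) yields pairwise disjoint $T_1', \ldots, T_{k-1}' \subseteq [n]$ with $f^p(T_i') = x_i$. By Lemma~\ref{lem:genSmallSet}, each $T_i'$ can be shrunk to a subset $T_i \subseteq T_i'$ with $|T_i| \leq |B|^{a^2}$ and $f^p(T_i) = x_i$; disjointness is preserved under shrinking. Separately, since $x_k \in \range(f^p)$, a further application of Lemma~\ref{lem:genSmallSet} produces $T_k \subseteq [n]$ with $|T_k| \leq |B|^{a^2}$ and $f^p(T_k) = x_k$. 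I set $I(f) = T_1 \cup \cdots \cup T_k$.

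To verify the intersection property, consider $g = f^\pi$ with $g \in \M^{(m)}_{x_1, \ldots, x_k}$, and let $U_1, \ldots, U_k$ be the corresponding sets produced by the same construction applied to $g$, so $I(g) = U_1 \cup \cdots \cup U_k$. Suppose for contradiction that $\pi(I(f)) \cap I(g) = \emptyset$. Then in particular $\pi(T_j) \cap U_k = \emptyset$ for every $j \in [k-1]$, which rewrites as $T_j \cap \pi^{-1}(U_k) = \emptyset$. Hence $T_1, \ldots, T_{k-1}, \pi^{-1}(U_k)$ are pairwise disjoint subsets of $[n]$. The identity $g^p = f^p \circ \pi^{-1}$ from the preceding lemma gives $f^p(\pi^{-1}(U_k)) = g^p(U_k) = x_k$, so these are $k$ pairwise disjoint subsets of $[n]$ with $f^p$-values exactly $x_1, \ldots, x_k$, contradicting the $k$-degeneracy of $f$ with witnesses $x_1, \ldots, x_k$.

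The main subtlety is the necessity of including the extra set $T_k$ (and its counterpart $U_k$) in $I(f)$ (respectively $I(g)$). Without it, the emptiness of $\pi(I(f)) \cap I(g)$ would yield no set on the $g$-side with $g^p$-value $x_k$, so on the $f$-side the disjoint family needed to contradict $k$-degeneracy would be one set short. The role of $\pi^{-1}(U_k)$ as precisely this ``missing'' $k$th pairwise-disjoint set on the $f$-side is the key observation that makes the argument close, and it explains why the bound is $k|B|^{a^2}$ rather than $(k-1)|B|^{a^2}$.
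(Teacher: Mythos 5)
Your proof is correct and follows essentially the same approach as the paper: construct $I(f)$ as a union of $k-1$ pairwise disjoint shrunk sets witnessing $x_1,\ldots,x_{k-1}$ plus one additional shrunk set witnessing $x_k$, then use the identity $(f^\pi)^p = f^p\circ\pi^{-1}$ to pull the $x_k$-witness from $I(g)$ back through $\pi$ and invoke $k$-degeneracy of $f$. The paper phrases the final step directly ($\pi^{-1}(T)$ and $S_1,\ldots,S_{k-1}$ must intersect) rather than by contradiction, but this is the same argument, and your remark about why the extra $k$-th set must be included in $I(f)$ matches the paper's construction.
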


\begin{proof}
To construct $I(f)$, take $S_1, \ldots, S_{k-1}$ to be disjoint sets such that $f^p(S_i) = x_i$, and take $T$ to be any set such that $f(T) = x_k$. Such sets exist since $f$ is not $(k-1)$-degenerate, and we can take all of these sets to be of size at most $|B|^{a^2}$, by replacing them with the subsets given by \Cref{lem:genSmallSet}. Let $I(f)$ be the union of $S_1, \ldots, S_{k-1}, T$.
Note that $|I(f)| \leq k |B|^{a^2}$.

Consider now any $f, g \in \M_{x_1, \ldots, x_k}$ such that $g = f^\pi$. Note that $I(f)$ contains within it disjoint sets $S_1, \ldots, S_{k-1}$ such that $f^p(S_i) = x_i$, and $I(g)$ contains within it a set $T$ such that $g^p(T) = x_k$. Now, $f^p(\pi^{-1}(T)) = {(f^\pi)}^p(T) = g^p(T) = x_k$, and thus by the $k$-degeneracy of $f$ and the disjointness of $S_1, \ldots, S_{k-1}$ it follows that $\pi^{-1}(T)$ and $S_1, \ldots, S_{k-1}$ must intersect. It follows that $\pi(I(f)) \cap I(g) \neq \emptyset$, as required.
\end{proof}

\begin{proof}[Proof of \Cref{thm:genHardness}]
We see in~\eqref{eqn2} that $\Pol(\A, \B)$ is the union of $m = 1 + \sum_{k = 2}^{N_d} {({|B|}^{a^2})}^k$ sets, each of which has an assignment $I$ that satisfies the condition of \Cref{thm:chain} if we take $C = \max(N_d|B|^{a^2}, |B|^{2a^2})$. Thus $\PCSP(\A, \B)$ is \NP-hard.
\end{proof}

\section{\texorpdfstring{$\bm{\textbf{BLP+AIP} = \textbf{AIP}}$ when $\A$ has one balanced relation}{BLP+AIP = AIP when A has one balanced relation}}

In this section we prove \Cref{thm:collapse} and \Cref{cor:transitivegroup}. 
Recall that we call a relation $R$ balanced if there exists a matrix $M$ whose
columns are tuples of $R$, where each tuple of $R$ appears as a column (possibly
a multiple times), and where the rows of $M$ are permutations of each other.

\thmblpaip*

Suppose that $A = [a]$, and the relation of $\A$ is $R = R^\A$. Furthermore suppose that each element in $[a]$ appears in $R$ (otherwise these elements can just be eliminated from $A$). Suppose $A \neq \emptyset, R \neq \emptyset$ (otherwise the conclusion is trivially true). We name the columns of the matrix that witness the balancedness of $R$ as $t_1, \ldots, t_N \in R$.

For any $i \in [a]$, let $\overline{i}$ be a unit vector in $\mathbb{Z}^a$;
i.e., it has a 1 at position $i$. For any tuple $(a_1, \ldots, a_r) \in A^r$, let $\overline{(a_1, \ldots, a_r)} = (\overline{a_1}, \ldots, \overline{a_r}) \in {(\mathbb{Z}^a)}^r$. Let $\overline{R} = \{ \overline{t} \mid t \in R \}\subseteq {(\mathbb{Z}^a)}^r$.
(We see the elements of $\mathbb{Z}^a$ as frequency vectors, and the elements of ${(\mathbb{Z}^a)}^r$ as tuples of frequency vectors.)
Endow all of these with additive structure. 
For any sets of vectors $A, B$, we let $A + B = \{ x + y \mid x \in A, y \in B \}$ and $A - B = \{ x - y \mid x \in A, y \in B \}$.
For any $k\in\mathbb{Z}$, we denote by $k\overline{R}$ the set of sums of $k$ vectors
from $\overline{R}$.

\begin{lemma}\label{lem:setAlgebra}
$(k + 1)\overline{R} - k\overline{R} + k \sum_i \overline{t_i} \subseteq (kN + 1) \overline{R}$.
\end{lemma}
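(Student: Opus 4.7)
The plan is to expand a generic element of the left-hand side and re-bracket it as an explicit sum of $kN+1$ vectors from $\overline{R}$. By definition of the set operations, every element $x \in (k+1)\overline{R} - k\overline{R} + k\sum_i \overline{t_i}$ has the form
$$x \;=\; \sum_{i=1}^{k+1} \overline{u_i} \;-\; \sum_{j=1}^{k} \overline{v_j} \;+\; k \sum_{i=1}^{N} \overline{t_i}$$
for some $u_1,\dots,u_{k+1},v_1,\dots,v_k \in R$, where the third summand is really $k$ copies of the vector $\sum_{i=1}^N \overline{t_i}$, contributing $kN$ vectors from $\overline{R}$ in total.

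The key observation is the defining property of the witnessing matrix $M$: \emph{every} tuple of $R$ appears as at least one column of $M$, i.e.\ as some $t_i$. Consequently, for each $j \in [k]$ we may choose an index $i(j) \in [N]$ with $t_{i(j)} = v_j$, and then pair the negative term $-\overline{v_j}$ with one of the $k$ available copies of $\overline{t_{i(j)}}$. This gives the rewrite
$$k \sum_{i=1}^{N} \overline{t_i} \;-\; \sum_{j=1}^{k} \overline{v_j} \;=\; \sum_{j=1}^{k} \sum_{i \in [N]\setminus\{i(j)\}} \overline{t_i},$$
which is manifestly a sum of $k(N-1)$ elements of $\overline{R}$. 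Adding back the $k+1$ positive summands $\overline{u_i}$ then exhibits $x$ as a sum of $(k+1) + k(N-1) = kN + 1$ elements of $\overline{R}$, so $x \in (kN+1)\overline{R}$.

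There is essentially no obstacle beyond spotting this single pigeonhole-style cancellation, and it is worth noting that the argument uses only the weaker half of the definition of balancedness, namely that every tuple of $R$ occurs among the columns of $M$. The stronger requirement that the rows of $M$ are permutations of one another plays no role here, and will presumably enter only in the subsequent steps leading to Theorem~\ref{thm:collapse}.
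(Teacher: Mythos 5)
Your proof is correct and follows essentially the same approach as the paper's: expand a generic element into $k+1$ positive $\overline{R}$-terms, $k$ negative ones, and $k$ copies of each $\overline{t_i}$, then cancel each $-\overline{v_j}$ against one of the $k$ available copies of the matching $\overline{t_{i(j)}}$, leaving a sum of $kN+1$ elements of $\overline{R}$. Your added observation that only the ``covering'' half of balancedness is used here (not the row-permutation condition) is accurate and a nice touch; the paper states the cancellation slightly more informally but relies on exactly the same idea.
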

\begin{proof}
If $x \in (k + 1)\overline{R} - k\overline{R} + k \sum_i \overline{t_i}$, it can be written as a sum of $k + 1$ vectors from $\overline{R}$, minus $k$ vectors from $\overline{R}$, plus $k$ copies of each vector $\overline{t_i}$. Since each tuple of $R$ appears among $t_1, \ldots, t_N$, the last $kN$ vectors in the sum above include at least $k$ copies of each vector in $\overline{R}$. By removing the $k$ subtracted vectors from the $k$ copies of each vector from $\overline{R}$, we find that $x$ can be written as a sum of $k + 1 - k + kN = kN + 1$ vectors from $\overline{R}$, i.e.~$x \in (kN + 1) \overline{R}$.
\end{proof}

For any $k \in \mathbb{Z}$, we define $S_k \subseteq \mathbb{Z}^a$ to be the set of sequences of integers that sum up to $k$, with non-negative coordinates.

\begin{lemma}\label{lem:altSym}
If $(\A,\B)$ has a 2-block-symmetric polymorphism $f$ of arity $2k+1$ then there exists a function $g : S_k \times S_{k+1} \rightarrow B$ such that $(g(x_1, y_1), \ldots, g(x_r, y_r)) \in R^\B$ for all $(x_1,  \ldots,  x_r) \in k\overline{R}, (y_1, \ldots , y_r) \in (k+1)\overline{R}$.
\end{lemma}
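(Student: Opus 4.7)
The plan is to exploit the 2-block symmetry of $f$ to define $g$ as a function that depends only on the frequency vectors of the two blocks of inputs.

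First, I would define $g : S_k \times S_{k+1} \to B$ directly. Given $(x, y) \in S_k \times S_{k+1}$, choose any sequence $(c_1, \ldots, c_k) \in A^k$ whose frequency vector is $x$ (such a sequence exists because $x$ has non-negative integer entries summing to $k$ and $A = [a]$), and any sequence $(d_1, \ldots, d_{k+1}) \in A^{k+1}$ whose frequency vector is $y$. Then set
\[
g(x, y) = f(d_1, c_1, d_2, c_2, \ldots, d_k, c_k, d_{k+1}),
\]
placing the $d_j$'s at odd positions (the block of size $k+1$) and the $c_j$'s at even positions (the block of size $k$). The value is well-defined: any two choices of sequences with the same frequency vectors differ by a parity-preserving permutation of the $2k+1$ inputs, and $f$ is 2-block-symmetric.

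Next, given $(x_1, \ldots, x_r) \in k\overline{R}$ and $(y_1, \ldots, y_r) \in (k+1)\overline{R}$, write them as sums in $\overline{R}$: there exist $c^1, \ldots, c^k \in R$ and $d^1, \ldots, d^{k+1} \in R$ such that $(x_1, \ldots, x_r) = \sum_{j=1}^k \overline{c^j}$ and $(y_1, \ldots, y_r) = \sum_{j=1}^{k+1} \overline{d^j}$. Form the $(2k+1) \times r$ matrix whose columns (of length $2k+1$) are
\[
(d^1_i, c^1_i, d^2_i, c^2_i, \ldots, d^k_i, c^k_i, d^{k+1}_i) \quad \text{for } i = 1, \ldots, r,
\]
equivalently whose odd rows are $d^1, \ldots, d^{k+1}$ and whose even rows are $c^1, \ldots, c^k$. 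Every column, being of the form just displayed, has $i$-th entry listing elements of $A$, and moreover the $r$ columns themselves \emph{are} the rows of this matrix when we instead think of $c^j, d^j$ as the columns; since $f$ is a polymorphism, we should regard $c^j, d^j$ as the $2k+1$ columns (each in $R^\A$), and apply $f$ to the $r$ rows. The $i$-th row contains the odd-position entries $d^1_i, \ldots, d^{k+1}_i$ (frequency vector $y_i$) and the even-position entries $c^1_i, \ldots, c^k_i$ (frequency vector $x_i$), so $f$ applied to this row equals $g(x_i, y_i)$.

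Since $f$ is a polymorphism of $(\A,\B)$, the resulting $r$-tuple lies in $R^\B$, giving $(g(x_1, y_1), \ldots, g(x_r, y_r)) \in R^\B$ as required. There is no serious obstacle here: the only thing to verify carefully is that the block-symmetry of $f$ makes the definition of $g$ independent of the chosen representatives, and that the matrix construction correctly aligns rows with the pairs $(x_i, y_i)$; both are immediate once the bookkeeping is set up.
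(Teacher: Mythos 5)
Your proof is correct and follows essentially the same route as the paper: define $g$ by interleaving representative sequences and applying $f$ (well-defined by 2-block-symmetry), then for the verification form the $r \times (2k+1)$ matrix whose columns are the $c^j, d^j \in R^\A$ and apply $f$ to its rows. The only cosmetic issue is that your matrix description initially sets up the transpose of what is needed before you correct yourself; otherwise the argument matches the paper's.
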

\begin{proof}
To compute $g(x, y)$, create two sequences of elements in $[a]$, of lengths $k$ and $k+1$, whose frequencies correspond to $x$ and $y$ respectively (i.e.~the sequence for $x = (x_1, \ldots, x_a)$ has $x_i$ appearances of $i$), and interleave these to create a sequence $a_1, \ldots a_{2k+1}$. Then $g(x, y) = f(a_1, \ldots, a_{2k+1})$.

To see why this function satisfies the required condition, suppose $(x_1, \ldots, x_r) \in k\overline{R}$ and $(y_1, \ldots, y_r) \in (k+1) \overline{R}$. Thus we can, by definition, create matrices $M$ and $N$, with $k$ and $k+1$ columns respectively, and $r$ rows, where each column is an element of $R$, and each row $i$ has frequencies corresponding to $x_i$ and $y_i$ respectively. Interleave the columns of these matrices to create a matrix $A$. Apply $f$ to the rows of $A$. We find that the image of row $i$ of $A$ is $g(x_i, y_i)$ by the symmetry of $f$; furthermore, the images of the rows of $A$ must form a tuple of $R^\B$, since $f$ is a polymorphism. This is the desired conclusion.
\end{proof}

\begin{lemma}\label{lem:altAlt}
Assume that there exists a function $f : (S_{k+1} - S_k) \to B$ for which that $(f(x_1),
  \ldots, f(x_r))\in~R^\B$ for any $x_1, \ldots, x_r \in S_{k+1} - S_k$ with $(x_1, \ldots, x_r) \in (k+1)\overline{R} - k \overline{R}$. Then, $\PCSP(\A, \B)$ has an alternating polymorphism of arity $2k + 1$.
\end{lemma}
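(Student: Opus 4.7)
The plan is to build the alternating polymorphism $g : A^{2k+1} \to B$ directly from $f$ by encoding the input as a difference of two frequency vectors. Given an input $(a_1, \ldots, a_{2k+1}) \in A^{2k+1}$, let $y$ be the frequency vector of the odd-indexed entries $a_1, a_3, \ldots, a_{2k+1}$ (so $y \in S_{k+1}$) and let $x$ be the frequency vector of the even-indexed entries $a_2, a_4, \ldots, a_{2k}$ (so $x \in S_k$). Then set
\[
g(a_1, \ldots, a_{2k+1}) = f(y - x).
\]
Since $y - x \in S_{k+1} - S_k$, this is well defined.

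Next I would verify that $g$ is alternating. The 2-block-symmetry is immediate, since any parity-preserving permutation of the arguments permutes the odd-indexed entries among themselves and the even-indexed entries among themselves, leaving the frequency vectors $y$ and $x$ unchanged. For the collapse identity, observe that changing the last two arguments from $(a, a)$ to $(a', a')$ modifies $y$ by replacing one $\overline{a}$ with one $\overline{a'}$ (from position $2k+1$) and modifies $x$ by the same replacement (from position $2k$); these changes cancel in $y - x$, so $g$ gives the same value. Thus $g$ is alternating.

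The remaining step, which is the main content, is checking that $g$ is a polymorphism. Take any tuples $x^1, \ldots, x^r \in A^{2k+1}$ whose columns all lie in $R$; write $y_j$ and $x_j$ for the odd- and even-indexed frequency vectors of $x^j$. The key observation is
\[
(y_1, \ldots, y_r) = \sum_{i \text{ odd}} \overline{(x_i^1, \ldots, x_i^r)} \in (k+1)\overline{R},
\qquad
(x_1, \ldots, x_r) = \sum_{i \text{ even}} \overline{(x_i^1, \ldots, x_i^r)} \in k\overline{R},
\]
since each column $(x_i^1, \ldots, x_i^r)$ belongs to $R$, and there are $k+1$ odd positions and $k$ even positions. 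Consequently $(y_1 - x_1, \ldots, y_r - x_r) \in (k+1)\overline{R} - k\overline{R}$, and by the hypothesis on $f$ we have $(f(y_1 - x_1), \ldots, f(y_r - x_r)) \in R^\B$. Since $g(x^j) = f(y_j - x_j)$ by construction, this is exactly the condition $(g(x^1), \ldots, g(x^r)) \in R^\B$ needed to confirm that $g$ is a polymorphism of $(\A, \B)$.

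There is no real obstacle here; the proof is essentially an unpacking exercise. The only subtle point worth being explicit about is the bookkeeping that matches the $k+1$ odd positions with the $(k+1)\overline{R}$ side and the $k$ even positions with the $k\overline{R}$ side, which mirrors the construction of $g$ from $f$ in Lemma~\ref{lem:altSym} but reversed.
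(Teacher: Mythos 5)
Your proposal is correct and takes exactly the same route as the paper: the paper simply writes down the same formula $g(x_1,\ldots,x_{2k+1}) = f(\overline{x_1}+\overline{x_3}+\cdots+\overline{x_{2k+1}}-\overline{x_2}-\cdots-\overline{x_{2k}})$ and asserts it is the required polymorphism, while you supply the (straightforward but welcome) verification of 2-block-symmetry, the alternating identity, and the polymorphism condition.
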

\begin{proof}
If such a function exists, then
\[
g(x_1, \ldots, x_{2k+1}) = f(\overline{x_1} + \overline{x_3} + \cdots + \overline{x_{2k+1}} - \overline{x_2} - \overline{x_4} - \cdots - \overline{x_{2k}})
\]
is the required polymorphism.
\end{proof}

\begin{proof}[Proof of \Cref{thm:collapse}]
By \Cref{thm:aip}, $\AIP$ solves $\PCSP(\A,\B)$ if and only if
  $\Pol(\A,\B)$ contains alternating operations of all odd arities. 
By \Cref{thm:blpaip}, $\BLPAIP$ solves $\PCSP(\A,\B)$ if and
only if $\Pol(\A,\B)$ contains 2-block-symmetric operations of all odd
arities.
As any alternating operation is 2-block-symmetric, it follows that any PCSP solved by $\AIP$ is also solved by $\BLPAIP$.\footnote{This also directly follows from the definitions of the $\AIP$ and $\BLPAIP$ algorithms~\cite{BGWZ}.}
It suffices to show that 2-block-symmetric operations in $\Pol(\A,\B)$ imply alternating operations.

Fix some natural number $k$; we will now show that there exists an alternating
operation in $\Pol(\A, \B)$ of arity $2k + 1$. Since $\Pol(\A,
\B)$ contains a 2-block-symmetric operation of arity $2kN + 1$, let $f : S_{kN} \times S_{kN+1} \to B$ be the function given by
\Cref{lem:altSym}. We will construct the function required by
\Cref{lem:altAlt} in order to prove the existence of an alternating
polymorphism.

Consider the vector $v = \sum_i \overline{t_i} \in {(\mathbb{Z}^a)}^r$. We claim that $v$ is a constant vector. To see why this is the case, observe that one way to compute $\sum_i \overline{t_i}$ is to make $t_1, \ldots, t_N$ into the columns of a matrix, and then to compute the frequencies of each element of $[a]$ in each row. Element $i$ of $v$ is a tuple, where component $j$ is the number of appearances of $j$ in row $i$ in this matrix. But, since $t_1, \ldots, t_N$ witness the balancedness of $R$, these frequencies are equal for each row. Thus $v$ is indeed a constant vector;
suppose that $v = (c, \ldots, c)$ for some $c \in S_{N}$. Note that each element in $[a]$ appears in some tuple of $R$ by assumption, and each tuple of $R$ appears in the sum $\sum_i \overline{t_i}$. Thus each coordinate in $c \in \mathbb{Z}^a$ is at least 1.

The function we are interested in is $g : (S_{k+1} - S_k) \to B$, where
$g(x) = f\left(kc, x + kc\right)$.
First note that these inputs are legal inputs for the function $f$. To see why, note first that $c \in S_N$ and thus $kc \in S_{kN}$. Second, consider $x + kc$. As $x \in S_{k + 1} - S_k$, the components in $x$ sum up to 1. Since the components in $kc$ sum up to $kN$, it follows that the components in $x + kc$ sum up to $1 + kN$ as required. Furthermore, all the components of $x + kc$ are non-negative: each component of $x$ is at least $-k$, whereas each component of $c$ is at least 1, and thus each component of $kc$ is at least $k$. Thus $x + kc \in S_{kN + 1}$.

Why does $g$ satisfy the conditions from \Cref{lem:altAlt}? Consider any $x_1, \ldots, x_r \in S_{k+1} - S_k$ such that $(x_1, \ldots, x_r) \in (k+1) \overline{R} - k \overline{R}$. Note that
\begin{gather*}
(kc, \ldots, kc) = k(c, \ldots, c) = k \sum_i \overline{t_i} \in k N \overline{R},\\
(x_1 + kc, \ldots, x_r + kc) = (x_1, \ldots, x_r) + k(c, \ldots, c) 
\in (k+1) \overline{R} - k\overline{R} + k \sum_i \overline{t_i} \subseteq (kN + 1) \overline{R},
\end{gather*}
due to \Cref{lem:setAlgebra}. 
Thus, since $f$ satisfies the conditions in \Cref{lem:altSym}, 
\[
(g(x_1), \ldots, g(x_r)) = \left(f\left(k c, x_1 + k c\right), \ldots, f\left(k c, x_r + k c\right)\right) \in R^B.
\]
Thus $(g(x_1), \ldots, g(x_r))\in R^\B$, as required.
\end{proof}

\Cref{thm:collapse} does not generalise to structures with multiple relations (even just two), as the following examples show.

\begin{example}\label{rem:two}
  Consider a Boolean symmetric template $\A$ that has two balanced (and in fact
  even symmetric) relations,
  namely $R^\A=\{(0)\}$
  and $Q^\A=\{(0,1),(1,0),(1,1)\}$, which are unary and binary, respectively. 
Then $\CSP(\A)$ is solved by $\BLPAIP$, and indeed by $\BLP$, since the
  symmetric operation $\max(x_1, \ldots, x_n)$ is a polymorphism for any
  $n$~\cite{BBKO21}, but \emph{not} by $\AIP$. This is
  because $\A$ fails to have any alternating non-unary polymorphisms, even of
  arity 3: suppose $f(x, y, z)$ is such a polymorphism. Then $f(1, 1, 0) = f(0, 0, 0) = f(0,
  1, 1)$ as $f$ is alternating; and $f(0, 0, 0) = 0$ due to $R^\A$. However, due to $Q^\A$, $f(1, 1, 0)$ and $f(0, 1, 1)$ cannot both be 0. This contradiction implies our conclusion.
\end{example}

One cannot simply remove the balancedness condition from \Cref{thm:collapse}, as the following example shows.

\begin{example}%
\label{rem:one}
Let $\A$ be a Boolean template with relation $S^\A = \{ (0, 0, 1), (0, 1, 0), (0, 1, 1) \}$. Note that $S^\A$ is \emph{not} balanced. Then $\CSP(\A)$ is solved by $\BLPAIP$, and indeed by $\BLP$, since the symmetric operation $\max(x_1, \ldots, x_n)$ is a polymorphism for any $n$~\cite{BBKO21}, but not by $\AIP$. $\A$ fails to have any alternating polymorphism, even of arity 3, for exactly the same reason as the problem from \Cref{rem:two}. (The identities that would result from $R^\A$ in that example now result from the first component in each tuple in $S^\A$, and the identities that would result from $Q^\A$ in that example now result from the last two components in each tuple in $S^\A$.)
\end{example}

On the other hand, there are templates that are unbalanced for which $\AIP$ and $\BLPAIP$ have equivalent power, as the following example shows.

\begin{example}%
\label{rem:unbalanced}
Consider a Boolean template $\A$ that has one relation $P^\A = \{ (0, 1) \}$. Then $\CSP(\A)$ is solved by $\AIP$ and by $\BLPAIP$, since the alternating operation $x_1 + \cdots + x_{n} \bmod 2$ is a polymorphism of $\A$ for every odd $n$. This is in spite of the fact that $P^\A$ is unbalanced.
\end{example}

We now prove \Cref{cor:transitivegroup}.

\cortransitivegroup*

\begin{proof}
Let $R$ be the relation of $\A$, of arity $r$. It is sufficient to show that $R$ is balanced. Let $M$ be a matrix whose columns are the tuples of $R$. Suppose that the rows of $M$ are $r_1, \ldots, r_n$. We show that row $i$ is a permutation of row $j$, for arbitrary $i, j \in [r]$.

Represent the elements of $G$ as permutation matrices.
Let $\pi \in G$ be a permutation (matrix) that sends $i$ to $j$ (it exists by transitivity). Consider $\pi M$. Note that no two columns of $\pi M$ can be equal, since then two columns of $\pi^{-1} \pi M = M$ would be equal, which is false. Furthermore each column of $\pi M$ is a tuple of $R$, and thus a column of $M$, since $R$ is preserved by $\pi$. Thus we see that $\pi M$ can be seen as $M$ but with its columns permuted. In other words, for some permutation matrix $\sigma$, we have $\pi M = M \sigma^T$.

Now, let us look at row $j$ in $\pi M = M \sigma^T$. In $\pi M$ this is $r_i$ (since $\pi$ sends $i$ to $j$). In $M \sigma^T$ this is $r_j \sigma^T$. Thus $r_i = r_j \sigma^T$, i.e.~row $i$ of $M$ and row $j$ of $M$ are permutations of each other. We conclude that $R$ is balanced, as required.
\end{proof}

\Cref{cor:transitivegroup} applies to fewer structures than \Cref{thm:collapse}, as shown in the next example.

\begin{remark}%
\label{rem:eulerian}
Consider any digraph $\A$ with edge relation $E^\A$ that is strongly connected but not symmetric.
Then $E^\A$ is balanced (cf.~\Cref{app:lemmas}). On the other hand, the unique transitive permutation group with degree 2 (i.e.~the group containing the identity permutation and the permutation swapping two elements) does not preserve $E^\A$.
\end{remark}

\section{Conclusion}

Our first result classifies certain $\PCSP(\A, \B)$, where $\A$ and $\B$  are symmetric and $\B$ is functional,
into being either \NP-hard or solvable in polynomial time. This is the first step towards the following more general problem.

\begin{problem}
Classify the complexity of $\PCSP(\A, \B)$ for functional $\B$.
\end{problem}

Looking more specifically at the case $\PCSP(\inn{1}{3}, \B)$, we note that
our proof of \Cref{thm:dichotomy} implies that, for functional $\B$, we have that $\PCSP(\inn{1}{3}, \B)$ is tractable if and only if $\Eqn{m}{1} \to \B$ for some $m \leq |B|$, where $\Eqn{m}{1}$ is a relational structure over $\{ 0, \ldots, m-1\}$ with one ternary relation defined by $x + y + z \equiv 1 \bmod m$. By using the Chinese remainder theorem, $\Eqn{m}{1} = \Eqn{3^p}{1} \times \Eqn{q}{1}$, where $q$ is coprime to 3. Since this latter template contains a constant tuple (namely $(x, x, x)$ where $x$ is the inverse of 3 modulo $q$), we find that, for functional $\B$, $\PCSP(\inn{1}{3}, \B)$ is tractable if and only if $\Eqn{3^p}{1} \to \B$. 

Looking at non-functional templates $\PCSP(\inn{1}{3}, \B)$ that are tractable, all the examples the authors are aware of are either tractable for the same reason as a functional template is (i.e.~$\Eqn{3^p}{1} \to \B$), or because they include the not-all-equal predicate (i.e.~$\NAE \to \B$). Thus, we pose the following problem.

\begin{problem}\label{prob:2}
    Is $\PCSP(\inn{1}{3}, \B)$ tractable if and only if $\Eqn{3^p}{1} \times
    \NAE \to \B$ for some $p$?
\end{problem}

\Cref{prob:2} has a link with the problem of determining the complexity
of $\PCSP(\inn{1}{3}, \CC_k^+)$, where $\CC_k^+$ is a ternary symmetric template
on domain $[k]$ which contains tuples of the form $(1,1,2),\ldots, (k-1,k-1, k),
(k,k,1)$, as well as all tuples of three distinct elements (rainbow tuples).
Such templates are called \emph{cyclic}, with the cycle being $1 \to \cdots \to
k \to 1$. 

The link is the following: $\Eqn{3^p}{1} \times \NAE$ is a template
containing one cycle of length $2 \times 3^p$, together with certain rainbow
tuples --- in other words, $\Eqn{3^p}{1} \times \NAE \to \CC_{2 \times 3^p}^+$.
Likewise, $\Eqn{3^p}{1}$ has a cycle of length $3^p$ and some rainbow tuples,
i.e.~$\Eqn{3^p}{1} \to \CC_{3^p}^+$. That $\PCSP(\inn{1}{3}, \CC_k^+)$ is
tractable whenever $k = 3^p$ or $k = 2 \times 3^p$ was first observed in~\cite{brandts}. If \Cref{prob:2} were answered in the affirmative then
we would have that $\PCSP(\inn{1}{3}, \CC_k^+)$ is tractable if and only if $k = 3^p$ or $k = 2 \times 3^p$.
In particular, this would mean that $\PCSP(\inn{1}{3}, \CC_4^+)$ is \NP-hard, as conjectured
in~\cite{Barto21:stacs}.\footnote{Our structure $\CC_4^+$ is called
$\textbf{\v{C}}^+$ in~\cite{Barto21:stacs}.} 

Answering \Cref{prob:2} in the affirmative would resolve
\Cref{conj:aip}, i.e., $\PCSP(\inn{1}{3},\B)$ would be tractable (via $\AIP$) if and only if $\Eqn{3^p}{1} \times \NAE \to \B$. Perhaps determining whether
this equivalence is true might be easier than resolving 
\Cref{conj:aip}; thus we pose the following problem.

\begin{problem}\label{prob:3}
Is $\PCSP(\inn{1}{3}, \B)$ solved by $\AIP$ if and only if $\Eqn{3^p}{1} \times \NAE \to \B$ for some $p$?
\end{problem}   

There already exists such a characterisation for the power of $\AIP$ using an
infinite structure~\cite{BBKO21}. In particular, if we let $\mathbf{Z}$ be an
infinite structure whose domain is $\mathbb{Z}$, and with a tuple $(x, y,
z)$ in the relation if and only if $x + y + z = 1$, then $\PCSP(\inn{1}{3}, \B)$ is solved by
$\AIP$ if and only if $\mathbf{Z} \to \B$. We are interested in a finite template of this kind.

\medskip

Turning from problems to algorithms, our second result shows us that, for certain problems of the form $\PCSP(\A, \B)$ where $\B$ \emph{need not be functional}, and $\A, \B$ have one relation, $\AIP$ and $\BLPAIP$ have the same power. A natural question is for which other templates is it true?

\begin{problem}
    For which templates $(\A, \B)$ do $\AIP$ and $\BLPAIP$ have the same power?
\end{problem}

Equivalently~\cite{BBKO21,BGWZ}, for which templates $(\A,\B)$ does the existence of 2-block symmetric operations of all odd arities in $\Pol(\A,\B)$ imply the existence of alternating operations of all odd arities in $\Pol(\A,\B)$?

\medskip
We remark that the recent work~\cite{CKKNZ23:arxiv} does not answer any problem
from this section, and the results from~\cite{CKKNZ23:arxiv} are consistent with
positive answers to \Cref{prob:2} and \Cref{prob:3}.

{\small
\bibliographystyle{plainurl}
\bibliography{nz}
}

\appendix
\section{Deferred proof}\label{app:lemmas}

\begin{lemma}
A digraph $\A$ is balanced if and only if it is a disjoint union of strongly connected components.
\end{lemma}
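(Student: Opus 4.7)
The plan is to reinterpret the balancedness condition as a purely combinatorial statement about the digraph. A matrix $M$ witnessing balancedness can be read as a multiset $\tilde E$ of edges whose underlying set equals $E^\A$, and the row-permutation property says exactly that in $\tilde E$ the multiset of tails equals the multiset of heads; equivalently, every vertex has in-degree equal to out-degree in $\tilde E$. Once this dictionary is set up, both directions follow from the standard directed Eulerian circuit theorem together with routine bookkeeping.

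For the forward direction, assume $\A$ is balanced and form $\tilde E$ as above. Since the support of $\tilde E$ is all of $E^\A$, the weakly connected components of $\tilde E$ and of $\A$ coincide. On each such component, the equality of in- and out-degrees at every vertex ensures that $\tilde E$ admits an Eulerian circuit. This circuit traverses every edge of the component and so witnesses that any two vertices lying in the component are mutually reachable in $E^\A$; hence the weakly connected component is strongly connected. Isolated vertices of $\A$ are trivial singleton SCCs, so $\A$ is a disjoint union of strongly connected components.

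For the reverse direction, assume each weakly connected component of $\A$ is strongly connected. For every edge $e = (a,b) \in E^\A$, pick a directed path $P_e$ from $b$ back to $a$ inside the SCC containing $e$ (empty if $a = b$); this exists by strong connectivity. The closed walk consisting of $e$ followed by $P_e$ contributes equally to the in- and out-degree of every vertex it visits. Taking the multiset union of these walks over all $e \in E^\A$ yields a multiset $\tilde E$ of edges whose support is $E^\A$ (each $e$ appears at least once via its own walk) and in which every vertex has in-degree equal to out-degree. Listing the elements of $\tilde E$ as the columns of a $2\times|\tilde E|$ matrix with tails on top and heads on the bottom gives a witness for balancedness: the two rows are permutations of each other because each vertex appears in each row with multiplicity equal to its (common) in/out degree in $\tilde E$.

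The only nontrivial external input is the directed Eulerian circuit theorem; the rest is essentially bookkeeping, so I do not anticipate any substantive obstacle. The one point that deserves a brief remark is the handling of self-loops and isolated vertices, both of which are accommodated automatically by the conventions above (self-loops contribute $+1$ to each of in- and out-degree; isolated vertices form singleton SCCs and do not appear in any column of $M$).
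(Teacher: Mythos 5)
Your proof is correct and follows essentially the same approach as the paper: both directions rely on the directed Eulerian circuit theorem applied to the multigraph of columns, and the reverse direction uses the identical construction of a closed walk (edge plus return path) for each edge. Your write-up is somewhat more explicit about the dictionary between the matrix condition and in/out-degree balance, but the underlying argument is the same.
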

\begin{proof}
If the relation of $\A$ is balanced, then there exists a collection of tours of $\A$ containing every edge at least once. These exist since the multi-digraph formed by the columns of the matrix witnessing the balancedness of the relation of $\A$ is Eulerian. Thus, considering the tour that contains some edge $(u, v)$, we see that there exists a walk from $v$ to $u$. It follows that every weakly connected component of $\A$ is strongly connected, or equivalently $\A$ is the disjoint union of strongly connected components.

If $\A$ is the disjoint union of strongly connected components, then we create a matrix that witnesses the balancedness of the relation of $\A$. Consider any  edge $(u, v)$ of $\A$. Take the cycle formed by $(u, v)$ together with the path from $v$ to $u$. Add all of these edges to the matrix as columns. The resulting matrix contains each edge at least once; furthermore every vertex appears the same number of times in the first and the second row (since the edges form cycles). Thus $\A$ is balanced.
\end{proof}

\end{document}